\declaretheorem[numberwithin=section]{theorem}
\declaretheorem[sibling=theorem]{lemma}
\declaretheorem[sibling=theorem]{corollary}
\declaretheorem[sibling=theorem]{Example}
\theoremstyle{definition}
\declaretheorem[sibling=theorem]{Remark}
\title{Improved Approximations for Ultrametric Violation Distance}
\author[1]{Moses Charikar}
\author[1]{Ruiquan Gao}
\affil[1]{Stanford University, \url{{moses, ruiquan}@cs.stanford.edu}}
\newcommand{\eps}{\epsilon}
\newcommand{\expect}{\mathbb{E}}
\newcommand*{\defeq}{\stackrel{\text{def}}{=}}
\newcommand{\ind}{\mathbf{1}}
\newcommand{\algoname}{\textsc{LP}\text{-}\textsf{UMVD}\text{-}\textsc{Pivot}\xspace}
\newcommand{\lowerr}{\mathsf{L}}
\newcommand{\higherr}{\mathsf{H}}
\newcommand{\highdet}{\mathsf{HD}}
\newcommand{\highrand}{\mathsf{HR}}
\newcommand{\UMVD}{\textsc{Ultrametric Violation Distance}\xspace}
\newcommand{\CClust}{\textsc{Correlation Clustering}\xspace}
\begin{document}

\begin{titlepage}
    \maketitle
    \thispagestyle{empty}
    \begin{abstract}
        \thispagestyle{empty}
        We study the \UMVD~problem
introduced by Cohen-Addad, Fan, Lee, and Mesmay [FOCS, 2022]. 
Given pairwise distances $x\in \mathbb{R}_{>0}^{\binom{[n]}{2}}$ as input, the goal is to 
modify the minimum number of distances so as to make it a valid ultrametric.
In other words, this is the problem of fitting an ultrametric to given data, where the quality of the fit is measured by the $\ell_0$ norm of the error; variants of the problem for the $\ell_\infty$ and $\ell_1$ norms are well-studied in the literature.

Our main result is a 5-approximation algorithm for \UMVD, improving the previous best large constant factor ($\geq 1000$) approximation algorithm.
We give an
$O(\min\{L,\log n\})$-approximation 
for weighted \UMVD~where the weights satisfy triangle inequality and $L$ is the number of distinct values in the input.
We also give a
$16$-approximation algorithm
for the problem on 
$k$-partite graphs,
where the input is specified on pairs of vertices that form a complete $k$-partite graph. 
All our results use a unified algorithmic framework with small modifications for the three cases.
    \end{abstract}
\end{titlepage}

\newpage
\thispagestyle{empty}
\tableofcontents

\newpage 
\setcounter{page}{1}

\section{Introduction}
\label{sec:intro}
The study of ultrametrics arose in mathematics, in the work of Hensel introducing $p$-adic numbers (although the name was proposed later by Krasner~\cite{krasner1944nombres}).
These are metric spaces where the distances between any three points $i,j,k$ satisfy a stronger form of the triangle inequality (a.k.a., ultrametric inequality): $d(i,k) \leq \max(d(i,j),d(j,k))$.
In the 1960s, ultrametrics found applications in taxonomy~\cite{hartigan1967representation,johnson1967hierarchical,jardine1967structure,jardine1971mathematical} due to the natural correspondence between the classification of objects (represented by a hierarchical clustering) and ultrametrics.
(An ultrametric can be represented by the shortest path distances between leaves in a rooted tree where the root-leaf distances are equal for all leaves; the tree structure corresponds to a hierarchical clustering of the leaves).
Since then, ultrametrics have been studied and used in a wide range of fields, including 
biology (e.g., 
\cite{sneath1962numerical,cavalli1967phylogenetic}),
physics (see the survey by \cite{rammal1986ultrametricity}), 
finance (e.g., \cite{mantegna1999hierarchical}),
and computer science.

In this work, we consider the problem of fitting an ultrametric to given measurements of distances between pairs of points -- this is relevant to data-analysis settings where we have measurement noise, errors, or incomplete data.
The problem of fitting ultrametrics (and tree metrics) to observations originally arose in phylogenetic analysis. 
It was introduced by Cavalli-Sforza and Edwards~\cite{cavalli1967phylogenetic} who were interested in minimizing the $\ell_2$ norm of the error, i.e., the difference between the reconstructed ultrametric/tree distances and the input.
Farris~\cite{farris1972estimating} proposed the problem of minimizing the $\ell_1$ norm of the error.
Subsequently, these problems were shown to be NP-hard~\cite{kvrivanek1986np,day1987computational}.
Since those early results, the problem of fitting ultrametrics to given data, minimizing the $\ell_p$ norm of the error (for various values of $p$) has been extensively studied in the computer science community over the past three decades (see Section~\ref{sec:related}).
The current best-known results include a polynomial-time algorithm for the $\ell_\infty$ norm~\cite{farach1993robust}, an $O(1)$-approximation for the $\ell_1$-norm~\cite{DBLP:conf/focs/Cohen-Addad0KPT21} and an $O((\log n \log \log n)^{1/p})$-approximation for $\ell_p$ norms, $1 < p < \infty$ \cite{DBLP:journals/siamcomp/AilonC11}.

Given the long history of the study of this family of ultrametric fitting problems, it is somewhat surprising that the natural problem of minimizing the $\ell_0$ norm of the error was introduced and studied only very recently by \cite{DBLP:conf/focs/Cohen-AddadFLM22}.
This problem, called \UMVD, is the focus of our work. 
Given pairwise distances $x\in \mathbb{R}_{>0}^{\binom{[n]}{2}}$,
we study the problem of minimizing the number of entries of $x$ we need to modify to produce a valid ultrametric.
This problem is a generalization of the well-studied \CClust problem\footnote{The input to \CClust is a complete graph with $+$ and $-$ labels on edges, intended to represent similarity and dissimilarity of vertices. The goal is to output a disjoint partition into clusters which minimizes the number of edges that are misclassified, i.e., the number of $+$ edges that go across clusters and the number of $-$ edges that are inside clusters.}, which is equivalent to instances with two distinct values in the input. 
Edges with the larger distance correspond to the $-$ edges and edges with the smaller distance correspond to the $+$ edges
in \CClust. An optimal ultrametric for such instances corresponds to a partition into clusters.\footnote{Here, inter-cluster distances are set to the smaller value and intra-cluster distances are set to the larger value.}
On the other hand, \UMVD can be viewed as a collection of \CClust problems with hierarchical structure; this viewpoint was exploited in~\cite{DBLP:conf/focs/Cohen-AddadFLM22}.

The previous best algorithm for \UMVD was a large constant factor approximation introduced by Cohen-Addad et al.~\cite{DBLP:conf/focs/Cohen-AddadFLM22}\footnote{the authors presented an 8,000,000 factor for their approach with unoptimized parameters, and they wrote, ``it is very likely that the current approach could easily lead to a 1000-approximation but at the expense of a more tedious proof''.}.
Their 
approach 
involves new \CClust algorithms whose output has additional structure,
similar to the ``clean'' properties used by~\cite{bansal2004correlation} to obtain the first large constant factor approximation for \CClust.
A natural idea 
for improvement
is to use the pivot-based approach 
used in all the best known algorithms for \CClust
\cite{DBLP:journals/jacm/AilonCN08,DBLP:conf/stoc/ChawlaMSY15,DBLP:conf/focs/Cohen-AddadLN22,cohen2023handling}: 
in each round, the algorithm chooses a random unclustered vertex as the pivot and decides the set of vertices in its cluster; the algorithm proceeds until all vertices are clustered.
There are two natural instantiations of this approach:
\begin{description}
    \item[Combinatorial]: put all the $+$ neighbors of the pivot in its cluster.
    \item[LP based]: compute an optimal solution for an LP relaxation of \CClust, interpret the variables as probabilities for pairs of vertices being in the same cluster and use them for (randomly) determining cluster membership in the pivot approach.
\end{description}
The combinatorial version gave a 3-approximation, while the LP version initially gave a 2.5-approximation.\footnote{later improved to a $(2.06-\epsilon)$-approximation by~\cite{DBLP:conf/stoc/ChawlaMSY15} via a more elaborate treatment of the LP variables.}
Cohen-Addad et al.~\cite{DBLP:conf/focs/Cohen-AddadFLM22} considered a (natural) analog of the combinatorial version for \UMVD, where they freeze distances to the pivot and minimally modify the distances of other edges in each round, but showed a compelling negative result: even 
for the best choice of pivot
in each round, the algorithm can only give an $\Omega(\log n)$-approximation on some instances.
However, the hard instances leave open the possibility that we can incorporate the LP-based version to obtain a good constant approximation for \UMVD. 
To construct the hard instances, they start with an ultrametric and carefully contaminate some edges.
Here, the pivot-based approach can potentially identify the contaminated edges using an LP solution.
Therefore, an interesting question to ask is
\begin{center}
    \emph{Can the pivot-based approach improve the approximation ratio for \UMVD by appropriately rounding the standard LP relaxation?}
\end{center}
Our answer to this question is positive, and we give significantly improved approximation ratios for \UMVD and its variants, using this approach.

\subsection{Our Results}
Our main result is a $5$-approximation for \UMVD.
\begin{theorem}[see Theorem~\ref{thm:5-approx-complete}]
\label{result:5-approx-complete}
There exists a $5$-approximation algorithm for \UMVD.
\end{theorem}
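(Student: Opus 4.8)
The plan is to adapt the LP-based pivot approach from \CClust to the hierarchical setting of \UMVD. First I would write down a linear programming relaxation for \UMVD. A natural choice is to have a variable $d_{ij}$ for each pair, intended to be the output ultrametric distance, together with the observation that an ultrametric is determined by a laminar family of ``levels'': for each candidate value $v$ in the input (or each level of a hierarchy) there is a partition of $[n]$, and these partitions are nested. Following the \CClust-within-a-hierarchy viewpoint of~\cite{DBLP:conf/focs/Cohen-AddadFLM22}, I would introduce, for each threshold $t$, variables $y^{(t)}_{ij}\in[0,1]$ representing the probability that $i$ and $j$ are separated at level $t$ (i.e. $d_{ij}> t$ in the output), with monotonicity constraints $y^{(t)}_{ij}\le y^{(t')}_{ij}$ for $t\ge t'$, triangle-inequality-type constraints at each level (so that each level is a genuine clustering relaxation), and an objective that charges, for each input edge, the LP ``mass'' by which the output value differs from $x_{ij}$. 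The LP value lower-bounds $\mathrm{OPT}$.

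Next I would design the rounding: run the pivot algorithm in a way that respects the hierarchy. Pick a uniformly random unclustered pivot $p$; use the top-level LP variables $y^{(\cdot)}_{p,\cdot}$ to decide which vertices join $p$'s top-level cluster (each vertex $j$ joins with probability tied to how ``close'' the LP puts $j$ to $p$), then recurse inside the resulting cluster with the LP restricted to that cluster and to lower thresholds; this produces a laminar family and hence a genuine ultrametric. This is exactly the ``incorporate the LP-based version'' idea flagged in the introduction, and it sidesteps the $\Omega(\log n)$ lower bound for the purely combinatorial freeze-the-pivot algorithm because the LP can ``see'' the contaminated edges.

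The analysis is the heart of the argument and is where I expect the main difficulty. As in \CClust pivot analyses~\cite{DBLP:journals/jacm/AilonCN08,DBLP:conf/stoc/ChawlaMSY15}, I would bound the expected number of edges the algorithm gets wrong by a charging argument: condition on the pivot, express the probability that a given edge $(i,j)$ is ``charged'' (i.e. ends up with the wrong output value) in terms of the LP variables on the small configuration $\{i,j,p\}$ (and, because of the hierarchy, possibly a constant-size set of relevant thresholds), and compare this conditional expectation to the LP cost contributed by that same configuration. The goal is a local inequality of the form ``$\Pr[\text{edge charged}\mid p]\le 5\cdot(\text{LP cost of the triple at the relevant level})$'', summed over pivots and triples; a clean choice of the join-probability function (the rounding function applied to the LP variables) should make the worst-case ratio come out to $5$. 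The delicate points are: (i) handling the recursion cleanly, so that an edge's total charge across all the levels/rounds in which it could be cut telescopes correctly against its total LP cost (monotonicity of the $y^{(t)}$ variables is what should make this work); (ii) dealing with input edges whose value must be changed both ``up'' and ``down'' relative to $x_{ij}$; and (iii) verifying that the worst-case configuration really is a triple and not some larger gadget created by the hierarchy. Once the local inequality is established, linearity of expectation gives $\expect[\text{ALG}]\le 5\cdot\mathrm{LP}\le 5\cdot\mathrm{OPT}$, and standard derandomization (or a direct argument) removes the randomness if a deterministic guarantee is desired.
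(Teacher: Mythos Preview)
Your high-level plan matches the paper's framework (LP relaxation, LP-based pivot rounding, triple-based charging), but as stated it is essentially the ``simple triple-based approach'' that the paper explicitly shows \emph{fails}: it only yields an $\Omega(\log n)$ ratio. The failure point is precisely your item (i). You hope that ``an edge's total charge across all the levels/rounds in which it could be cut telescopes correctly against its total LP cost,'' but there is no such telescoping. Concretely, there are ``bad'' triangles $(i,j,k)$ in which the two pivot edges $(i,j),(i,k)$ have near-zero LP cost yet force a modification of the non-pivot edge $(j,k)$ with probability $1-o(1)$, while $j$ and $k$ land in the same child call and the same thing can happen again; this can repeat $\Omega(\log n)$ times, so any local inequality ``$\Pr[\text{edge charged}\mid p]\le 5\cdot(\text{LP cost of the triple})$'' summed over pivots overcharges $(j,k)$ by $\Omega(\log n)$. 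Monotonicity of the $y$ variables does not help here.

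Two concrete ideas are missing from your proposal. First, the rounding is not purely ``join with probability tied to the LP variable'': the paper uses a \emph{hybrid} scheme where an edge whose LP distribution $\{\Delta y_\ell\}$ is concentrated above a threshold $1-\alpha$ is rounded \emph{deterministically} to its dominant level, and only the remaining ``random'' edges are sampled. This guarantees that a triangle of three deterministic edges never violates the ultrametric inequality (for $\alpha\le\tfrac13$) and more generally controls the bad triangles. Second, the analysis does \emph{not} try to pay for every modified edge via triples. Edges are split into low-cost ($c^*<\alpha$) and high-cost ($c^*\ge\alpha$); high-cost edges are simply bounded by $1\le \tfrac{1}{\alpha}c^*$ (or $\tfrac{1}{1-\alpha}c^*$ for the initially-deterministic subclass), and the triple-based charging is applied only to low-cost edges, which one shows are always deterministic and never modified as pivot edges. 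The key structural observation making this work is that every bad triangle contains a high-cost edge, so its cost is already paid for outside the triple argument. With $\alpha=0.4$ the three resulting bounds $\max\{\overline B_{\lowerr},\overline B_{\highdet}+\tfrac{1}{1-\alpha},\overline B_{\highrand}+\tfrac{1}{\alpha}\}$ balance at $5$. Without these two ingredients your plan would reproduce the $O(\min\{L,\log n\})$ bound of~\cite{DBLP:conf/focs/Cohen-AddadFLM22}, not a constant.
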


We also study a weighted variant of \UMVD where the weights satisfy the triangle inequality.
Here,
we are given an additional weight $w\in \mathbb{R}_{>0}^{\binom{[n]}{2}}$,
and the objective function is the sum of $w_{i,j}$ for all edges $(i,j)$ modified by the algorithm. 
We give an algorithm that (slightly) improves the previous best $O(\log n \log\log n)$-approximation~\cite{DBLP:conf/focs/Cohen-AddadFLM22} that also works for general weights.
Interestingly, the same approximation ratio is obtained by~\cite{DBLP:conf/focs/Cohen-AddadFLM22} for the combinatorial pivot-based algorithm on (unweighted) \UMVD.
\begin{theorem}[see Theorem~\ref{thm:logn-app-s-weighted}]
\label{result:logn-app-s-weighted}
There exists an $O(\min\{L,\log n\})$-approximation algorithm for weighted \UMVD where the weights satisfy the triangle inequality and $L$ is the number of distinct values in the input.
\end{theorem}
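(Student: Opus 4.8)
The plan is to reduce weighted \UMVD\ with triangle-inequality weights to a hierarchy of weighted \CClust\ instances, and to run the unified LP-based pivot framework (call it \algoname) on each level of the hierarchy, paying an $O(1)$ factor per level. Since the number of distinct input values is $L$, there are at most $L$ relevant "thresholds," so a naive layering gives an $O(L)$ factor; the $O(\log n)$ bound comes instead from a standard balanced-recursion idea — process the thresholds in a binary-search order so that each edge is charged on only $O(\log L)\le O(\log n)$ levels — combined with the observation that $L\le\binom n2$, hence $\log L = O(\log n)$. Concretely, I would fix a median distance value $v$, let $S$ be the set of pairs with $x_{i,j}\le v$ and (conceptually) contract each connected "cluster candidate" determined by solving weighted \CClust\ on the $\{\le v, > v\}$ dichotomy, then recurse inside each resulting block and also on the quotient. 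The weighted pivot algorithm on the two-value instance is exactly the \CClust\ case of \algoname, and triangle inequality on the weights is what makes the LP rounding lose only $O(1)$ (it is the same place weighted \CClust\ with probability-constraint weights admits an $O(1)$-approximation).

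The steps, in order: (1) State the LP relaxation for weighted \UMVD\ — a variable $d_{i,j}$ for each pair, the ultrametric constraints relaxed to a "hierarchy of cut" LP, or equivalently a collection of coupled \CClust\ LPs indexed by thresholds — and argue its optimum lower-bounds $\mathrm{OPT}$. (2) Show that the LP decomposes across thresholds so that the cost can be written as a sum over the $L$ levels, and that an $\alpha$-approximate rounding at each level that is \emph{consistent} (nested) across levels yields a feasible ultrametric. (3) Invoke the weighted-\CClust\ guarantee of \algoname\ (the pivot algorithm run on the LP solution restricted to one threshold) to round each level within $O(1)$, using the triangle inequality on $w$ to control the charging in the pivot analysis. (4) Replace the flat $L$-level charging by the balanced recursion: pick a median threshold, pay $O(1)$ there, and recurse on the $\le$ side and the $>$ side independently; the recursion tree has depth $O(\log L)$, and each edge lies on one root-to-leaf path, so its total charge is $O(\log L)\cdot(\text{LP contribution})=O(\log n)\cdot(\text{LP contribution})$. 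Summing over edges gives the $O(\log n)$ bound, and the unrecursed version gives the $O(L)$ bound, so we get $O(\min\{L,\log n\})$.

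The main obstacle I expect is \textbf{maintaining consistency of the roundings across levels while keeping the cost charged to the LP}: a vertex that the pivot algorithm co-clusters at a coarse threshold must have its finer-threshold clustering refine that decision, and the pivot process at different levels uses independent randomness, so one must argue that the induced refinement is still feasible and still analyzable. The clean way is to run the pivots top-down: once a block is fixed at a coarse level, the recursion only ever subdivides it, so nestedness is automatic; the cost is then that the LP value "seen" by a sub-instance must be shown to be at most the corresponding portion of the global LP value (a restriction/projection argument), which is where triangle inequality on $w$ is used a second time. A secondary technical point is handling the weighted pivot analysis itself — the standard \CClust\ pivot argument bounds expected cost by a sum of "bad triangle" probabilities; in the weighted case one needs each bad triangle's cost to be bounded by its LP contribution times the triangle-inequality constant, which again is exactly the hypothesis on $w$. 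Once these two uses of the weight triangle inequality are in place, the balanced-recursion bookkeeping is routine.
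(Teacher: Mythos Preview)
Your plan diverges from the paper's and has a concrete gap at the quotient step.

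The paper does not layer per-threshold \CClust\ instances. It runs the single top-down pivot recursion \algoname\ with $\alpha=\tfrac13,\beta=0$ and analyzes it directly via the triple-based charging framework (Lemma~\ref{lem:approx-ratio}). Charges split into two types: type-1 charges satisfy the separation condition of Lemma~\ref{lem:bound-of-sum-B} with $q=\Omega(1)$ and hence sum to $O(1)$ per edge; type-2 charges occur only on the random edge of a $(d,d,r)$-same triangle. The triangle inequality on $w$ is used exactly once, locally inside the per-triangle inequality, to replace $w(i,k)$ by $w(i,j)+w(j,k)$ so that every term pairs a weight with the LP cost of the \emph{same} edge. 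The $O(L)$ bound then follows because every type-2 charge on $(j,k)$ strictly increases the recursion's upper-bound level $u$. The $O(\log n)$ bound is a ``bad-pivot'' harmonic argument on \emph{vertices}, not thresholds: a pivot $i$ is bad for $(j,k)$ if $\ell^*(y,i,j)=\ell^*(y,i,k)>u$; choosing a bad pivot eliminates all bad pivots whose dominant level is at least the chosen one, and the resulting recurrence $\overline B(n')\le 2+\tfrac1{n'}\sum_{i<n'}\overline B(i)$ solves to $O(\log n')$.

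Your balanced-threshold recursion breaks at the quotient. Contracting the clusters found at the median threshold and recursing on the $>$ side requires edge weights on the quotient graph; the natural choice $W(C_a,C_b)=\sum_{i\in C_a,\,j\in C_b} w(i,j)$ does \emph{not} satisfy the triangle inequality (take $|C_1|=|C_3|=m$, $|C_2|=1$, all $w\equiv 1$: then $W(C_1,C_2)+W(C_2,C_3)=2m\ll m^2=W(C_1,C_3)$), so you cannot re-invoke an $O(1)$ weighted-\CClust\ box on the quotient, and the recursion stalls after one level. Separately, that box is not a pre-existing tool you may cite: the ``probability-constraint'' weighted \CClust\ you allude to is a different hypothesis (see the footnote following Theorem~\ref{result:logn-app-s-weighted}), and an $O(1)$-approximation for weighted \CClust\ under a single triangle-inequality weight function is precisely Corollary~\ref{result:logn-app-s-weighted-cclust} here, obtained as the $L=2$ case of the very theorem you are proving. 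So even step~(3) of your plan is circular as stated.
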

An interesting corollary of this result is a constant factor approximation for a variant of weighted \CClust with triangle inequality constraints\footnote{We note that this name is also used in previous works but for different settings. Previously, weighted \CClust is defined by two weights $w_{i,j}^+$ and $w_{i,j}^-$ for each edge $(i,j)\in \binom{[n]}{2}$. The objective function of the problem is the sum of $w_{i,j}^+$ for all $-$ edges in the output and $w_{i,j}^-$ for all $+$ edges in the output. \cite{DBLP:journals/jacm/AilonCN08} defines the triangle inequality constraints to be $w_{i,j}^{-}+w_{j,k}^{-}\geq w_{i,k}^{-}$ for any tuple $(i,j,k)$. \cite{DBLP:journals/mor/ZuylenW09}'s definition adds an additional constraint $w_{i,j}^{+}+w_{j,k}^{-}\geq w_{i,k}^{+}$ for each tuple. Both settings are incomparable with ours.}
because \CClust problems are special cases of \UMVD problems with $L=2$.
To the best of our knowledge, this is the first $O(1)$-approximation for this variant of \CClust.
Assuming the Unique Games Conjecture, this result also separates weighted \CClust with and without triangle inequality constraints.
\begin{corollary}
\label{result:logn-app-s-weighted-cclust}
There exists an $O(1)$-approximation algorithm for weighted \CClust where the weights satisfy the triangle inequality.
\end{corollary}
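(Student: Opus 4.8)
The plan is to derive Corollary~\ref{result:logn-app-s-weighted-cclust} as a direct consequence of Theorem~\ref{result:logn-app-s-weighted}, using the fact that a \CClust instance is exactly a weighted \UMVD instance with $L=2$ distinct values. First I would make the reduction precise. Given a \CClust instance on $[n]$ with edge labels in $\{+,-\}$ and edge weights $\omega_{i,j}$, construct an \UMVD input $x\in\mathbb{R}_{>0}^{\binom{[n]}{2}}$ by setting $x_{i,j}=1$ if $(i,j)$ is a $+$ edge and $x_{i,j}=2$ if $(i,j)$ is a $-$ edge, with \UMVD weights $w_{i,j}=\omega_{i,j}$. As noted in the introduction, a valid ultrametric on a two-value instance corresponds precisely to a partition into clusters, with intra-cluster distances set to the larger value $2$ and inter-cluster distances set to the smaller value $1$; the entries modified are exactly the $+$ edges cut by the partition and the $-$ edges kept inside clusters, so the weighted \UMVD objective equals the weighted \CClust objective for the same clustering. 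Hence $\mathrm{OPT}_{\UMVD}=\mathrm{OPT}_{\CClust}$ on these instances.

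Next I would check that the hypothesis of Theorem~\ref{result:logn-app-s-weighted} — that the weights satisfy the triangle inequality — is inherited. Since the \CClust weights $\omega$ are assumed to satisfy triangle inequality by the statement of the corollary, the \UMVD weights $w=\omega$ do as well, so the theorem applies and yields a feasible ultrametric (equivalently, a clustering) of cost at most $O(\min\{L,\log n\})\cdot \mathrm{OPT}_{\UMVD}$. Now the key point: for this family of instances $L=2$, so $\min\{L,\log n\}=2=O(1)$. Translating back through the (cost-preserving) correspondence of the previous paragraph, the clustering produced has weighted \CClust cost $O(1)\cdot \mathrm{OPT}_{\CClust}$, and the algorithm runs in polynomial time since it is just the \UMVD algorithm of Theorem~\ref{result:logn-app-s-weighted} on an explicitly constructed instance of size $\binom{n}{2}$.

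The only subtlety — and the one place the argument needs a little care rather than a routine citation — is making sure the two-value \UMVD algorithm really does output a solution of the clustering form (inter-cluster $=$ smaller value, intra-cluster $=$ larger value) and that any feasible ultrametric on a two-value instance can be assumed WLOG to be of this form without increasing cost. For the latter, one observes that any ultrametric $y$ feasible for the instance partitions $[n]$ into the connected components of the graph on pairs with $y_{i,j}$ equal to the smaller value's "ball", and replacing $y$ by the canonical two-level ultrametric induced by this partition can only decrease the number (and weight) of edges on which $y$ differs from $x$ — this is exactly the observation already invoked in the introduction when \CClust is described as the two-value special case, so I would cite that discussion rather than reprove it. I do not expect any genuine obstacle here; the content is entirely in Theorem~\ref{result:logn-app-s-weighted}, and the corollary is a one-line specialization once the $L=2$ equivalence with \CClust is spelled out. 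I would close by remarking (as the surrounding text does) that this is, to our knowledge, the first $O(1)$-approximation for weighted \CClust under triangle-inequality weights, and that under the Unique Games Conjecture it separates the weighted problem with and without this constraint.
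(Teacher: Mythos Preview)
Your proposal is correct and is exactly the paper's approach: the corollary is stated without a separate proof, as \CClust is precisely the $L=2$ special case of weighted \UMVD, so Theorem~\ref{result:logn-app-s-weighted} immediately gives an $O(\min\{2,\log n\})=O(1)$ approximation. One small slip to fix: you have intra- and inter-cluster swapped in your reduction paragraph (for a valid ultrametric, intra-cluster distances must be the \emph{smaller} value and inter-cluster the larger, matching $+$ edges to the small value as you set up); the paper's own footnote contains the same typo, but your later ``ball'' description shows you have the right picture.
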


We consider another interesting variant of (unweighted) \UMVD with incomplete information, where the input distances are specified only on a subset of the edges.
In the absence of any structure on the specified edges,
a special case is unweighted \CClust on general graphs, which does not have an $O(1)$-approximation assuming the Unique Games Conjecture~\cite{demaine2006correlation,DBLP:journals/cc/ChawlaKKRS06}.
We give a 16-approximation if the specified edges form a complete $k$-partite graph.
A similar setting has been studied in the literature on \CClust \cite{amit2004bicluster,DBLP:journals/siamcomp/AilonALZ12,DBLP:conf/stoc/ChawlaMSY15}, culminating in a ratio of $3$ on complete $k$-partite graphs.

\begin{theorem}[see Theorem~\ref{thm:16-approx-k-part}]
\label{result:16-approx-k-part}
There exists a $16$-approximation algorithm for \UMVD on complete $k$-partite graphs.
\end{theorem}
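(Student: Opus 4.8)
The plan is to run \algoname{}, the algorithm behind Theorem~\ref{result:5-approx-complete}, with a single change for the $k$-partite case: when the uniformly random pivot $p$ chosen in a recursive call lies in a part $V_r$ of the given $k$-partition, every other vertex of the current vertex set that also lies in $V_r$ is placed in the ``removed'' recursive branch — no LP variable links it to $p$, so it cannot join $p$'s cluster — while the inter-part vertices of the current set are rounded exactly as in the complete-graph algorithm, using the optimal solution of the standard LP relaxation restricted to the current subinstance. First I would fix the LP: a variable $z^{(\ell)}_{i,j}\in[0,1]$ for each present pair $(i,j)$ and each of the $L-1$ thresholds $\ell$; monotonicity of $z^{(\ell)}_{i,j}$ in $\ell$ (with $z^{(0)}_{i,j}\equiv1$ and $z^{(L)}_{i,j}\equiv0$); the ultrametric inequalities $z^{(\ell)}_{i,k}\le z^{(\ell)}_{i,j}+z^{(\ell)}_{j,k}$ imposed only on triples all three of whose pairs are present; and the per-pair objective $\max\bigl(1-z^{(m_{ij}-1)}_{i,j},\,z^{(m_{ij})}_{i,j}\bigr)$, where $v_{m_{ij}}=x_{i,j}$. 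Evaluating this LP on the $0/1$ point induced by an optimal ultrametric (extended, at zero cost, arbitrarily to the absent pairs) shows $\mathrm{LP}^\ast\le\mathrm{OPT}$, and the recursive construction — which fixes every cross-branch distance at a level-$\ell$ call to $v_{\ell+1}$ — manifestly outputs a valid ultrametric on $\binom{[n]}{2}$ no matter what it does on the absent pairs.

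For the approximation ratio I would reuse the pivot-charging framework behind Theorem~\ref{result:5-approx-complete}. Fix a present pair $(i,j)$ with $x_{i,j}=v_m$; it is violated exactly when its final tree distance is not $v_m$, i.e.\ when it is \emph{over-separated} (split at some recursive call of level $\ge m$, so the distance exceeds $v_m$) or \emph{under-separated} (for $m\ge2$, never split at any call of level $\ge m-1$, so the distance is below $v_m$). Descending the recursion, at each call containing both $i$ and $j$ I condition on the uniformly random pivot $p$ and enumerate the ways $p$ can be ``decisive'': $p\in\{i,j\}$; $p$ in a third part — then the split probability is the usual correlation-clustering expression in $z^{(\ell)}_{p,i}$, $z^{(\ell)}_{p,j}$, $z^{(\ell)}_{i,j}$ and is charged to the present triple $\{p,i,j\}$ as in the complete case; or $p$ in the same part as $i$ or as $j$ (the forced-removal case). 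A ``first decisive call'' argument prevents double counting across the several calls that may occur at a given level, monotonicity of $z^{(\cdot)}_{i,j}$ collapses the contributions of different levels, and feasibility of the restricted LP guarantees the per-pair costs do not grow along the recursion. Carrying this out and summing over all present pairs and triples gives $\expect[\text{cost}]\le 16\cdot\mathrm{LP}^\ast\le16\cdot\mathrm{OPT}$; the increase of the constant from $5$ to $16$ is precisely the extra slack absorbed by the forced same-part removals.

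The main obstacle is the forced-removal case. When the pivot $p$ and $i$ lie in the same part, $(i,j)$ is split whenever $j$ does \emph{not} also go to the removed branch, an event of probability $1-z^{(\ell)}_{p,j}$ that is not directly controlled by any triangle inequality through $p$, since the pair $(p,i)$ is absent and carries no LP variable. The crux is to show that, averaged over the random pivot, such removals over-separate $(i,j)$ by no more than a constant times what the LP already pays for the relevant pairs; I would establish this by invoking the present-triple inequalities against suitably chosen auxiliary vertices of the current cluster (which lie in parts different from both $V_{r_i}$ and $V_{r_j}$), so that the ``damage'' is expressed through genuine present-pair LP costs, and by exploiting that many surviving same-part vertices force those costs up. (An alternative that sidesteps the issue is to refine the rule so that a same-part vertex's membership is decided by a secondary pivot drawn from the already-formed cluster, hence governed by an actual LP variable.) Reconciling whichever device is used with the cross-level ``first decisive call'' bookkeeping, so that no LP term is charged more than a constant number of times, is where the bulk of the technical work — and the precise value of the constant — will reside; solving and restricting the LP, verifying ultrametricity, and the summation over triples are routine adaptations of the complete-graph argument.
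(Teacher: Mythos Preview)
Your proposal diverges from the paper's approach in two essential ways, and the second is a genuine gap.

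First, the LP. You restrict variables and triangle constraints to present pairs only. The paper's LP keeps variables $y_\ell(i,j)$ for \emph{every} pair, including $(i,j)\in E_\varnothing$, and imposes the constraint $y_\ell(i,j)\le y_\ell(i,k)+y_\ell(k,j)$ on \emph{all} triples. The unspecified-pair variables cost nothing in the objective (their weight is zero), but they carry crucial structural information: for a triple $\{i,j,k\}$ with $i,j$ in the same part and $k$ in another, the constraint through the absent edge $(i,j)$ ties $y_\ell(i,k)$ and $y_\ell(j,k)$ together. Your LP has no such constraint, so it cannot support the argument the paper actually runs.

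Second, and more seriously, the algorithm and its analysis. The paper does \emph{not} forcibly separate same-part vertices from the pivot. Instead, it rounds the pivot--to--unspecified-vertex distance using the LP variable for that (unspecified) pair, with two twists governed by a new parameter $\beta=2/3$: (i) unspecified edges get a stricter deterministic threshold $1-\alpha\beta=3/4$ rather than $1-\alpha=5/8$; and (ii) specified random edges are sampled from the \emph{truncated} CCDF $(y_\ell-\alpha\beta)^+/(1-\alpha\beta)$ rather than $y_\ell$. The payoff is exactly the case you flag as the obstacle. In a $(d,d,r)$-same triangle with deterministic $(i,j)\in E$, deterministic $(i,k)\in E_\varnothing$, and random $(j,k)\in E$, the probability that the random edge over-separates $(i,j)$ is $\tfrac{1}{1-\alpha\beta}(y_{\ell^*-1}(j,k)-\alpha\beta)^+$; since $(i,k)$ is $E_\varnothing$-deterministic one has $y_{\ell^*-1}(i,k)<\alpha\beta$, and the triangle inequality through the \emph{unspecified} edge gives $y_{\ell^*-1}(j,k)\le y_{\ell^*-1}(i,j)+y_{\ell^*-1}(i,k)$, so the probability collapses to $O\bigl(y_{\ell^*-1}(i,j)\bigr)\le O\bigl(c^*(i,j)\bigr)$ alone. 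That is the key lemma. Your plan instead appeals to ``present-triple inequalities against suitably chosen auxiliary vertices'' or a ``secondary pivot,'' but neither is specified, and there is no reason to expect either device to give a constant bound uniformly over instances; indeed, Example~\ref{exp:counter-for-random} in the paper shows that the unmodified randomized rounding (without the $\beta$-truncation) fails on $k$-partite instances even at the triple level. Until the forced-removal analysis is actually carried out, the proposal does not establish the constant.
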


\subsection{Techniques}

\paragraph{Pivot-based LP rounding algorithm.} 
As is the case for \CClust~(e.g., \cite{DBLP:journals/jcss/CharikarGW05,DBLP:journals/jacm/AilonCN08,DBLP:conf/stoc/ChawlaMSY15}) and for fitting ultrametric with $\ell_1$ objective~\cite{DBLP:journals/siamcomp/AilonC11,DBLP:conf/focs/Cohen-Addad0KPT21}, a standard LP relaxation was introduced by Cohen-Addad et al.~\cite{DBLP:conf/focs/Cohen-AddadFLM22} to capture the objective of (weighted) \UMVD.
This LP can be viewed as a hierarchical generalization of the LP for \CClust, in which variables $y_\ell(i,j)$ are defined for every edge $(i,j)\in \binom{[n]}{2}$ at every level $\ell\in [L]$.
Variables at the same level are subject to the constraints of \CClust (constraint \eqref{eqn:umvd-lp-triangle} below), and variables for the same edge are monotonic with respect to the level (constraint \eqref{eqn:umvd-lp-increasing} below). 
Formally, the LP relaxation is as follows, where we define $y_0(i,j)=0$ for each edge $(i,j)\in \binom{[n]}{2}$ and define $\tilde{\ell}(i,j)$ as the input distance level of $(i,j)$.
We refer the reader to Section~\ref{subsec:lp-def} for more discussion of this LP relaxation.
\begin{align}
\text{minimize} \quad 
& 
\sum_{i\neq j \in [n]} w(i,j)\cdot(1- y_{\tilde{\ell}({i,j})}(i,j) + y_{\tilde{\ell}({i,j})-1}(i,j))
\label{eqn:umvd-lp}
\tag{\textsf{UMVD}\;\textsc{LP}}
\\
\text{subject to} \quad
&
y_\ell(i,j) \leq y_\ell(i,k) + y_\ell(k,j)
& 
\forall \ell\in [L], i,j,k\in [n]
\label{eqn:umvd-lp-triangle}
\\
&
y_{\ell-1}(i,j) \le y_{\ell}(i,j)
&
\forall \ell\in [L], i,j\in [n]
\label{eqn:umvd-lp-increasing}
\\
&
y_\ell({i,j}) \in [0,1]
&
\forall \ell\in [L], i,j\in [n]
\label{eqn:umvd-lp-[0,1]}
\end{align}
Previous works~\cite{DBLP:journals/siamcomp/AilonC11,DBLP:conf/focs/Cohen-Addad0KPT21,DBLP:conf/focs/Cohen-AddadFLM22} use this (or a very similar) LP relaxation as a hierarchy of LP solutions for the~\CClust problem.
In this work, we build our algorithm based on an alternative interpretation of the relaxation, 
using the LP variables for an edge as a specification of a probability distribution on the distance value for that edge because we can always find some optimal solution with $y_{L}(i,j)=1$.
This interpretation enables us to naturally use pivot-based algorithms to round the variables. 
More specifically, the algorithm starts with a recursive call involving all vertices.
In each recursive call, we choose a random pivot from the vertex set of the call and decide the distance of edges that connect to the pivot according to the LP solution. 
To determine these distances, we consider two different known rounding schemes:
\begin{description}
    \item[Randomized rounding:] we interpret the LP variables $\{y_{\ell}(i,j)\}_{\ell\in [L]}$ for each edge $(i,j)$ as a complementary cumulative distribution function (CCDF) for the distance of $(i,j)$ and sample 
    from the distribution; a similar method was introduced for \CClust~\cite{DBLP:journals/jacm/AilonCN08} .
    \item[Deterministic rounding:] we select the most probable outcome of the above distribution, breaking ties arbitrarily; a similar method can be found in~\cite{DBLP:journals/jcss/CharikarGW05} for \CClust.
\end{description}
Our algorithm uses a natural combination of these two schemes: if an edge's distribution produces a specific distance value with a high probability 
(close to $1$), we call it a \emph{deterministic edge} and use the deterministic rounding;
otherwise, we call it a \emph{random edge} and use the randomized rounding instead.
After determining the distances of the edges connected to the pivot, we modify the distances of the other edges minimally so that the ultrametric inequalities involving the pivot are satisfied.
Finally, we partition the non-pivot vertices into sets by grouping vertices with the same distance from the pivot into the same set and recursively solve the smaller instances with these vertex sets.
These latter two steps can also be found in the pivot-based algorithm of~\cite{DBLP:conf/focs/Cohen-AddadFLM22} that gives an $O(\min\{L,\log n\})$-approximation for \UMVD.

One key benefit of distinguishing deterministic edges and random edges is that the ultrametric inequality can be automatically satisfied in a triangle with three deterministic edges by 
the deterministic rounding
and 
appropriate choices of the parameters.
This property and its generalizations are frequently used in our analysis.
Next, we present an example demonstrating how this property is established. 
\begin{tcolorbox}[colback=lightgray!40,arc=0pt,outer arc=0pt,width=\textwidth,boxrule=.5pt]
    \textbf{Example.~}
Consider an \UMVD instance with $L=2$ and any three distinct vertices $i,j,k\in [n]$. 
Because we can w.l.o.g. assume $y_{L}(u,v)=1$ for any distinct $u,v\in [n]$, for simplicity, we can use $a,b,c$ to denote the only variables $y_1(i,j), y_1(i,k)$ and $y_1(j,k)$. 
Consider that we deterministically round the distance of $(i,j)$ to the larger input distance if $a>\frac{2}{3}$ and deterministically round the distance of $(i,j)$ to the smaller input distance if $a<\frac{1}{3}$.
Suppose that the same rounding is done for $(i,k)$ and $(j,k)$.
If all three edges are deterministically rounded, the only possible violation of the ultrametric inequality occurs when exactly one of the edges is rounded to the larger input distance. 
However, this implies that one of $a,b,c$ is strictly greater than $\frac{2}{3}$, while the other two are strictly less than $\frac{1}{3}$, 
violating the triangle inequality constraint in the LP for $a,b,c$.
\end{tcolorbox}

\paragraph{Overcoming the $\Omega(\log n)$ barrier.} 
Our main technical contribution is an improved triple-based analysis framework to overcome the previous $\Omega(\log n)$ barrier. 
We start with a simple triple-based 
analysis and explain its pitfalls.
\begin{description}
    \item[A simple triple-based approach:] we upper bound the number of modified edges by the total number of times the distances on the edges are modified by the algorithm.
    The triple-based approach bounds the cost -- modifications incurred for an edge $(j,k)$ when $i$ is chosen as a pivot -- by charging the triangle $(i,j,k)$, 
    and the LP contributions of all three edges of a triangle are used to pay for the expected cost charged to the triangle.
\end{description}
This approach fails for some ``bad'' triangles, resulting in the LP contributions of some edges being charged $\Omega(\log n)$ times.
Consider a triangle $(i,j,k)$.
Suppose in a recursive call where $i$ is selected as the pivot, the two pivot edges $(i,j)$ and $(i,k)$ are both deterministic and are going to be fixed to the same distance. 
If the current distance of $(j,k)$ is strictly greater than the distance $(i,j)$ will be fixed to, then the third edge $(j,k)$ will be modified by the algorithm in the recursive call\footnote{even if we use the purely randomized rounding scheme, this could still happen with probability $1-o(1)$}.
If the LP contributions of $(i,j)$ and $(i,k)$ are very small (e.g., when the input distance of $(i,j)$ and $(j,k)$ equals the distance they are fixed in this recursive call), 
we are forced
to charge an $\Omega(1)$ multiple of the LP contribution of $(j,k)$ to 
pay for the modification of $(j,k)$
in this triangle.
At the same time, the pair $j,k$ will be grouped together and get involved in a subsequent recursive call\footnote{this also possibly happens with probability $1-o(1)$ if we use the purely randomized rounding scheme.}.
Again, in subsequent recursive calls, the edge $(j,k)$ can be similarly charged by these bad triangles.
As analyzed in~\cite{DBLP:conf/focs/Cohen-AddadFLM22}, it can cause edge $(j,k)$ to be charged by $\Omega(\log n)$ such bad triangles in expectation, resulting in an $\Omega(\log n)$ bound on the approximation ratio for the simple triple-based approach.
To address this issue, we make one key observation about these triangles: 
\begin{center}
    \emph{all bad triangles have at least one edge with a high LP contribution.}
\end{center}
For example, in the above example, the LP contribution of $(j,k)$ must be $\Omega(1)$. 
This 
suggests
a different way to upper bound the number of modified edges to handle bad triangles:
\emph{we simply upper bound 
    the cost for edges with high LP contribution by 1.}

Thus, we focus on upper bounding the total number of modifications on edges with low LP contributions in the triple-based analysis; 
the simple triple-based analysis is sufficient for these edges.
By appropriately choosing the threshold for low- vs. high-cost, we can prove that edges with low LP contributions are always deterministic in the algorithm, simplifying the proof.
In Section~\ref{sec:framework} and~\ref{sec:ratio-complete}, we 
refine the analysis further
to obtain an approximation ratio of 5.
The bad triangles belong to the ``$(d,d,r)$-same-triangles'' described there.
\paragraph{Comparison between our rounding scheme vs purely randomized rounding.}
We believe that our improved triple-based analysis framework
can also be used to establish a constant approximation ratio for the purely randomized rounding scheme.
It appears that this would not give a bound better than 5,
and the analysis would be more tedious and less straightforward.
%
In addition, our improved analysis framework can be directly applied to analyze our algorithm on the weighted variant of \UMVD where the weights satisfy the triangle inequality constraints; we obtain an $O(\min\{L,\log n\})$ approximation.
However, the purely randomized rounding scheme cannot be analyzed similarly to obtain this bound because its analysis for the unweighted cases heavily rely on the fact that all edges contribute equally to the objective.
Example~\ref{exp:counter-for-random} 
suggests that, even when $L=2$ (the \CClust instances),
the purely randomized rounding scheme may not yield a comparable approximation in the (improved) triple-based analysis framework.


\paragraph{One step further: an improved algorithm for $k$-partite cases.}
Our improved triple-based analysis framework still runs into the $\Omega(\log n)$ issue (when $L=\Omega(\log n)$) for the $k$-partite case (a special case of the weighted case with triangle inequality) and new ideas are needed.
For example, consider a triangle $(i,j,u)$, where $(i,j)$ and $(j,u)$ are specified, but $(i,u)$ is not specified.
The LP contribution of $(i,u)$ is therefore $0$. 
In the previous analysis on complete cases, if $(j,u)$ is a random edge, we need to use the LP contributions of both $(i,j)$ and $(i,u)$ to pay for the cost on $(i,j)$ when $(i,j,u)$ is a bad triangle and $u$ is chosen as the pivot. 
However, in the $k$-partite case, we can no longer rely on the LP contribution of $(i,u)$ to pay for this type of cost. 
To resolve this issue, we use a (slightly) different rounding approach that treats specified edges and unspecified edges differently.
First, we use a larger threshold for unspecified edges.
For example, we may define a specified edge as deterministic if its distance distribution (given by the LP) puts a probability mass greater than $5/8$ on some distance value, while we define an unspecified edge as deterministic only if the probability is greater than $3/4$.
Then, we employ rejection sampling on the specified random edges: with the same parameters mentioned above, 
if the sampled distance is in the largest quantile of the distribution, we reject the outcome and repeat the process until we accept. 
Using the fact that the LP variables at each level satisfy the triangle inequality constraint, the expected number of modifications on $(i,j)$ in this triangle can be upper bounded by $O(1)$ times its LP contribution.
For a formal definition of this algorithm, see Section~\ref{sec:algo}, and for a detailed analysis, see Section~\ref{sec:ratio-k-part} and Appendix~\ref{app:edge-charging-for-k-partite-B}.

We note that the purely randomized rounding scheme 
does not give a constant approximation ratio for 
$k$-partite instances via triple-based analysis, 
even in the special case of \CClust on complete $k$-partite graphs. 
Example~\ref{exp:counter-for-random} shows the failure of fully randomized rounding for the $k$-partite case.
This further highlights the advantage of our new rounding scheme.

\subsection{Further Related Work}
\label{sec:related}

\paragraph{Fitting ultrametrics with other objectives.}
The problem of fitting ultrametrics for $\ell_{\infty}$ and $\ell_1$ norms is well studied in the literature.
Farach et al.~\cite{farach1993robust} give a polynomial-time exact algorithm for the $\ell_{\infty}$ objective.
However, for the $\ell_p$ objective ($1\leq p<\infty$), its APX-hardness can be derived from the APX-hardness of \CClust~\cite{DBLP:journals/jcss/CharikarGW05}.
There is a line of works studying its approximation algorithms~\cite{harb2005approximating,DBLP:journals/siamcomp/AilonC11,DBLP:conf/focs/Cohen-Addad0KPT21}.
The state-of-the-art algorithms are an $O(1)$-approximation algorithm for the $\ell_1$ norm and $O((\log n\log \log n)^{1/p})$-approximations for $\ell_p$ norms ($1< p<\infty$).
Many other objectives and assumptions are also considered in the literature of fitting ultrametrics, including the maximization version~\cite{duggal2013resolving,DBLP:conf/focs/Cohen-AddadFLM22}, the outlier deletion version~\cite{sidiropoulos2017metric}, multiplicative distortion version~\cite{di2015finding} and assuming the input is in a high-dimensional Euclidean space~\cite{cohen2020efficient,cohen2021improving}.

\paragraph{Fitting metrics and tree metrics.} 
Brickell et al.~\cite{brickell2008metric} formulated the problem of fitting metrics to a given data with $\ell_p$ objective for $1\leq p\leq \infty$, which can be solved exactly in polynomial-time via linear or convex programs.
The $\ell_0$ version was recently introduced and studied~\cite{gilbert2017if,fan2018metric,fan2020generalized,DBLP:conf/focs/Cohen-AddadFLM22}, culminating in an $O(\log n)$-approximation algorithm that runs in $O(n^3)$ time.

On the other hand, Cavalli-Sforza and Edwards~\cite{cavalli1967phylogenetic} introduced the tree fitting problem.
If there is a tree metric that exactly fits the data, the structure can be found in polynomial time~\cite{waterman1977additive}.
However, if there is no such tree metric, the problem for any $\ell_p$ objective ($0\leq p \leq \infty$) is APX-hard~\cite{agarwala1998approximability,DBLP:journals/jcss/CharikarGW05,DBLP:conf/esa/Kipouridis23}.
Recent work of Kipouridis~\cite{DBLP:conf/esa/Kipouridis23} shows that any $\rho$-approximation of Ultrametric Violation Distance can be converted to a $6\rho$-approximation for tree metric fiiting with $\ell_0$ objective.
Combined with our result, there is automatically a $30$-approximation.
Furthermore, Agarwala et al.~\cite{agarwala1998approximability} and Cohen-Addad et al.~\cite{DBLP:conf/focs/Cohen-Addad0KPT21} give reductions from tree metrics to ultrametrics with $\ell_p$ objectives ($1\leq p\leq \infty$), only losing a constant factor.
The state-of-the-art algorithms are then a 3-approximation for $\ell_{\infty}$ norm, an $O(1)$-approximation algorithm for $\ell_1$ norm, and $O((\log n\log \log n)^{1/p})$-approximations for $\ell_p$ norms ($1< p<\infty$).

\paragraph{Correlation clustering.} 
As mentioned before, \CClust has been extensively studied.
Bansal, Blum and Chawla~\cite{bansal2004correlation} introduced the problem and gave a constant factor approximation.
Later, it was improved by~\cite{DBLP:journals/jcss/CharikarGW05,DBLP:journals/jacm/AilonCN08,DBLP:conf/stoc/ChawlaMSY15,DBLP:conf/focs/Cohen-AddadLN22,cohen2023handling}.
The current best algorithm is a $1.73$-approximation based on the Sherali-Adams hierarchy.
Furthermore, \CClust has been proven to be APX-hard~\cite{DBLP:journals/jcss/CharikarGW05}.
Various variants of \CClust have also been studied. 
Among those, works regarding weighted cases include the general version~\cite{demaine2006correlation,DBLP:journals/jcss/CharikarGW05}, the version with probability constraints~\cite{gionis2007clustering,DBLP:journals/jacm/AilonCN08,DBLP:conf/stoc/ChawlaMSY15} and the version for asymmetric error costs~\cite{jafarov2020correlation}.

\subsection{Organization}
In Section~\ref{sec:prelim}, we present the problem formulations and discuss how the linear program captures the objective of (weighted) \UMVD.
In Section~\ref{sec:algo}, we formally describe a unified algorithm for all three cases that we study.
In Section~\ref{sec:framework}, we present our analytical framework.
Specifically, in Section~\ref{subsec:improved-triple}, we present our improved triple-based analysis for \UMVD and establish two key lemmas that are used for the three cases.
In Sections~\ref{sec:ratio-complete},~\ref{sec:ratio-s-weighted} and \ref{sec:ratio-k-part}, we respectively prove the approximation ratios for the three cases.

\section{Preliminaries}
\label{sec:prelim}

We use $[n]$ to denote the set $\{1,2,\cdots,n\}$. 
We use $\mathbb{R}_{\geq 0}$ and $\mathbb{R}_{>0}$ to denote non-negative and positive real numbers. 
For any set $S\subseteq [n]$, we use $\binom{S}{2}$ to denote the set of pairs of distinct elements in $S$, i.e., $\binom{S}{2}=\{(i,j):i,j\in S, i<j\}$. 
We use $x^+$ to denote the positive part $\max(x,0)$.
We use a tuple $t=(i,j,k)$ to denote a triangle consisting of three distinct vertices $i,j,k$ and all three edges between them.

\subsection{The Ultrametric Violation Distance Problem}
The \UMVD problem takes as input two disjoint sets of edges $E,E_{\varnothing}$ such that $E\cup E_{\varnothing}=\binom{[n]}{2}$ and distances $x_{\text{in}}\in \mathbb{R}_{>0}^{E}$ that are specified on the edges in $E$.
The goal is to find some ultrametric $x\in \mathbb{R}_{>0}^{\binom{[n]}{2}}$ such that its disagreement with the input $\sum_{(i,j)\in E} \mathbf{1}(x(i,j)\neq x_{\text{in}}(i,j))$ is minimized. 
We use $L$ to denote the number of distinct distances in $x_{\text{in}}$ and use $d_1>d_2>\cdots>d_L$ to denote the distinct distances in the input. 
For each edge $(i,j)\in E$, we use $\tilde{\ell}(i,j)$ to denote the input distance \emph{level} of $x_{\text{in}}(i,j)$, which satisfies $x_{\text{in}}(i,j)=d_{\tilde{\ell}(i,j)}$. 

The weighted version of \UMVD~takes input additional weights $w\in \mathbb{R}_{\geq 0}^{\binom{[n]}{2}}$.
In addition, in the weighted version, because of the existence of zero weights, we can w.l.o.g. assume the inputs are specified on all the edges, i.e., $E=\binom{[n]}{2}$.
The goal is to minimize $\sum_{(i,j)\in E} w(i,j)\cdot \mathbf{1}(x(i,j)\neq x_{\text{in}}(i,j))$.
In the weighted version with triangle inequality constraints, we assume $\forall i,j,k\in [n], w(i,j)+w(j,k)\geq w(i,j)$.

Moreover, the unweighted \UMVD on complete $k$-partite graphs can be viewed as a special case of the weighted version with triangle inequality constraints:
we can set $w(i,j)=1$ for $(i,j)\in E$ and set $w(i,j)=0$ for $(i,j)\in E_{\varnothing}$ in the weighted version, and the complete $k$-partite graph implies that $w$ satisfies the triangle inequality constraint.

\subsection{LP Definitions for Ultrametric Violation Distance}
\label{subsec:lp-def}
We can formulate a lower bound for (weighted) \UMVD~by an integer linear program.
For every edge $(i,j)\in \binom{[n]}{2}$ and every level $\ell\in [L]$, the integer program has a variable $y_{\ell}(i,j)\in \{0,1\}$ characterizing whether the distance of $(i,j)$ is $\geq d_{\ell}$: $y_{\ell}(i,j)=1$ if the output $x$ satisfies $x(i,j)\geq d_{\ell}$; $y_{\ell}(i,j)=0$ if the output $x$ satisfies $x(i,j)<d_{\ell}$.
We shall use $y_{\ell}(i,j) \text{ and } y_{\ell}(j,i)$ to denote the same variable for each $\ell\in [L], (i,j)\in \binom{[n]}{2}$.
For convenience, we define $y_\ell(i,j)\defeq 0$ for $\ell=0$ or $i=j$, and use $\Delta y_{\ell}(i,j)\defeq y_{\ell}(i,j)-y_{\ell-1}(i,j)$.
To guarantee that we can recover some $x\in \mathbb{R}_{>0}^{\binom{[n]}{2}}$ from the variables, we can use constraints requiring that $y_{\ell}(i,j)$ is increasing with respect to $\ell$ for every $(i,j)$ (constraint~\eqref{eqn:umvd-lp-increasing}).
To further guarantee that the recovered $x$ forms an ultrametric, we can use triangle inequality constraints on every level (constraint~\eqref{eqn:umvd-lp-triangle}) to ensure that the output forms a correlation clustering solution on every level.
The objective of (weighted) \UMVD~is to minimize the sum of weights of disagreement between the input and the output ultrametric.
Note that for each $(i,j)\in \binom{[n]}{2}$ and $\ell\in [L]$, $x(i,j)=d_{\ell}$ if and only if $y_{\ell}(i,j)=1$ and $y_{\ell-1}(i,j)=0$.
Recall that we define $\tilde{\ell}(i,j)$ as the distance level of $x_{\text{in}}(i,j)$. Therefore, we can write the objective as $\sum_{(i,j)\in E} w(i,j)\cdot (1-\Delta y_{\tilde{\ell}(i,j)}(i,j))$.

By relaxing the integer program, we can obtain the LP relaxation~\eqref{eqn:umvd-lp}, which is equivalent to the standard LP definition introduced by~\cite{DBLP:conf/focs/Cohen-AddadFLM22} for the weighted \UMVD.
To ensure convenient access, we are restating the LP as follows:
\begin{align*}
\text{minimize} \quad 
& 
\sum_{i\neq j \in [n]} w(i,j)\cdot(1- y_{\tilde{\ell}({i,j})}(i,j) + y_{\tilde{\ell}({i,j})-1}(i,j))
\\
\text{subject to} \quad
&
y_\ell(i,j) \leq y_\ell(i,k) + y_\ell(k,j)
& 
\forall \ell\in [L], i,j,k\in [n]
\\
&
y_{\ell-1}(i,j) \le y_{\ell}(i,j)
&
\forall \ell\in [L], i,j\in [n]
\\
&
y_\ell({i,j}) \in [0,1]
&
\forall \ell\in [L], i,j\in [n]
\end{align*}
It can be observed that any optimal solution of~\eqref{eqn:umvd-lp} can be easily modified to an optimal solution satisfying $y_L(i,j)=1$ for any $(i,j)\in E$ because these variables have non-positive coefficients in the objective of~\eqref{eqn:umvd-lp}. 
Hence, we can w.l.o.g. assume $y_L(i,j)=1$ for any $(i,j)\in E$ in the rest of the paper.


\section{Pivot-based Algorithm}
\label{sec:algo}

In this section, we present our pivot-based LP rounding algorithm for (weighted) \UMVD~(Algorithm~\ref{alg:pivot}). 
The algorithmic framework follows the pivot-based algorithms of~\cite{DBLP:journals/siamcomp/AilonC11,DBLP:conf/focs/Cohen-AddadFLM22} and utilizes the optimal LP solutions of~\eqref{eqn:umvd-lp}.
More specifically, the algorithm is recursive, parameterized by two constants $\alpha\in [0,0.5]$ and $\beta\in [0,1]$, and has access to all distinct distances $d_1>d_2>\cdots>d_L$ of the input and an optimal solution $y^*$ satisfying $y^*_{L}(i,j)=1$ for each $(i,j)\in \binom{[n]}{2}$.
Each recursive call takes as input a subset of vertices $V\subseteq [n]$, the distances $x$ between vertices in $V$, and an upper bound level $u\in [L]$ representing that the output distances of the call should be upper bounded by $d_u$.
To run the algorithm, we call $\algoname([n], x_{\text{in}}, 1)$ and use the distances it returns as the output ultrametric. 

\begin{algorithm2e}[!t]
    \caption{$\algoname(V, x, u)$}
    \label{alg:pivot}
    \DontPrintSemicolon
    \SetKwInOut{Input}{Input}
    \SetKwInOut{Output}{Output}
    \SetKwInOut{Parameter}{Parameter}
        
    \Input{ $V\subseteq [n], ~u\in [L], ~x\in \mathbb{R}_{\geq 0}^{\binom{V}{2}}$} 
    
    \Output{ $x'\in \mathbb{R}_{\geq 0}^{\binom{V}{2}}$}
    
    \Parameter{$\alpha\in [\frac{1}{2},1]$}
    
    \If{$|V|\leq 2$ \bf{or} $u=L$}{
    
        \Return{$x$}
    }
    
    $i\gets$ a vertex in $V$ uniformly at random
    
    $y\gets y^*$

    \For{$i,j\in \binom{V}{2}$}{
        \For{$\ell\in [u-1]$}{
            $y_{\ell}(i,j)\gets 0$
        }
    }
    
    \For{$j\in V\setminus \{i\}$}{

        $\ell^*\gets \arg\max_{\ell\in [L]} ~\Delta y_\ell(i,j)$ \tcp*{break ties arbitrarily}
    
        \eIf{$(i,j)\in E$}{
            \eIf{$\Delta y_{\ell^*}({i,j}) > 1-\alpha$}{
                $x'(i,j)\gets d_{\ell^*}$ \tcp*{deterministic edges in $E$}
            }{
                $x'(i,j)\gets$ sampled according to Eqn.~\eqref{eqn:distance-ccdf} \tcp*{random edges in $E$}
            }
        }
        {
            \eIf{$\Delta y_{\ell^*}({i,j}) > 1-\alpha\beta$}{
                $x'(i,j)\gets d_{\ell^*}$ \tcp*{deterministic edges in $E_{\varnothing}$}
            }{
                $x'(i,j)\gets$ sampled according to Eqn.~\eqref{eqn:empty-distance-ccdf} \tcp*{random edges in $E_{\varnothing}$}
            }
        }
    }
    
    \For{$(j,k)\in \binom{V\setminus \{i\}}{2}$}{
        \eIf{$x'({i,j})=x'({i,k})$}{
            $x'({j,k})\gets \min\{x({j,k}),x'({i,j})\}$
        }{
            $x'({j,k})\gets \max\{x'({i,j}),x'({i,k})\}$
        }
    }
    
    \For{$\ell\in \{u,u+1,\cdots,L\}$}{
        $V_\ell\gets \{j\in V: x'({i,j})=d_{\ell}\}$ \label{line:partition-non-pivots}
    
        $x'\big|_{\binom{V_\ell}{2}}\gets \algoname\Big(V_\ell, x'\big|_{\binom{V_\ell}{2}}, \ell\Big)$ \label{line:child-call}
        \tcp*{$x'\big|_{\binom{V_\ell}{2}}$ is $x'$ restricted to $\binom{V_{\ell}}{2}$}
    }
    
    \Return{$x'$}
\end{algorithm2e}

In each recursive call, we randomly select a pivot vertex $i\in V$ and compute a truncated LP solution $y$ in which all $y_\ell(i,j)$s are set to $0$ for any $\ell<u$. 
According to the constraints of~\eqref{eqn:umvd-lp}, the truncated LP solution satisfies the following lemma, which enables us to view $\Delta y_{\ell}(i,j)$s as a probability distribution of the distance of each edge $(i,j)$.
\begin{lemma}
    \label{lem:feasible-y-in-calls}
    In each recursive call, the truncated LP solution $y$ is a feasible solution of \eqref{eqn:umvd-lp} for the vertices in $V$ and satisfies $y_L(i,j)=1$ for any $(i,j)\in \binom{V}{2}$. 
    In particular, $\sum_{\ell\in [L]} \Delta y_{\ell}(i,j)=1$.
\end{lemma}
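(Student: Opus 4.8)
The plan is to verify the three LP constraints for the truncated solution $y$ by reducing each of them to the corresponding property of the original optimal solution $y^*$, which we know is feasible. Recall that $y$ is obtained from $y^*$ by setting $y_\ell(i,j) \gets 0$ for every pair $(i,j) \in \binom{V}{2}$ and every level $\ell < u$, while leaving $y_\ell(i,j) = y^*_\ell(i,j)$ for $\ell \geq u$. (I will first note, as an aside, that the code writes $y_\ell(i,j)$ for $i,j \in \binom{V}{2}$, so the "$i$" and "$j$" in that loop are generic pair indices, not the pivot; I would write the proof using fresh names $a,b$ for the pair to avoid clashing with the pivot $i$.) Since we are only considering the restriction to the vertex set $V$, the constraints we need are exactly \eqref{eqn:umvd-lp-triangle}, \eqref{eqn:umvd-lp-increasing}, \eqref{eqn:umvd-lp-[0,1]} ranging over $a,b,c \in V$ and $\ell \in [L]$.

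First I would handle the box constraint \eqref{eqn:umvd-lp-[0,1]}: for $\ell < u$ the value is $0 \in [0,1]$, and for $\ell \geq u$ it equals $y^*_\ell(a,b) \in [0,1]$ by feasibility of $y^*$. Next, the monotonicity constraint \eqref{eqn:umvd-lp-increasing}, $y_{\ell-1}(a,b) \leq y_\ell(a,b)$: split into three cases according to where $\ell$ sits relative to $u$. If $\ell \leq u-1$, both sides are $0$. If $\ell = u$, the left side is $0$ and the right side is $y^*_u(a,b) \geq 0$. If $\ell \geq u+1$, both sides equal the corresponding $y^*$ values and the inequality is inherited from $y^*$. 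For the triangle inequality \eqref{eqn:umvd-lp-triangle}, $y_\ell(a,b) \leq y_\ell(a,c) + y_\ell(c,b)$: if $\ell < u$ the left side is $0$ and the right side is nonnegative; if $\ell \geq u$ all three terms coincide with the $y^*$ values and the inequality follows from feasibility of $y^*$. This establishes feasibility for~\eqref{eqn:umvd-lp}.

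For the remaining assertions: $y_L(a,b) = 1$ for all $(a,b) \in \binom{V}{2}$ holds because $L \geq u$ (indeed $u \in [L]$), so $y_L(a,b) = y^*_L(a,b) = 1$ by our standing assumption on $y^*$. Finally, $\sum_{\ell \in [L]} \Delta y_\ell(a,b) = 1$ is just a telescoping sum: $\sum_{\ell=1}^{L} \big(y_\ell(a,b) - y_{\ell-1}(a,b)\big) = y_L(a,b) - y_0(a,b) = 1 - 0 = 1$, using the convention $y_0 \equiv 0$. The nonnegativity of each $\Delta y_\ell(a,b)$ (which makes "probability distribution" meaningful) is exactly the monotonicity constraint just verified, so the $\Delta y_\ell(a,b)$ indeed form a distribution over $\{d_1,\dots,d_L\}$.

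I do not expect any genuine obstacle here — the statement is essentially a bookkeeping check that zeroing out the low-level coordinates of a feasible point preserves feasibility of this particular LP. The only thing requiring a moment of care is the case analysis at the boundary $\ell = u$ in the monotonicity constraint (where one side is truncated to $0$ and the other is not), and making sure the triangle inequality is not broken when exactly one or two of the three pairs would "want" a positive value at a truncated level — but since truncation sets the left-hand side to $0$, that direction is trivially fine, and the constraint is only between pairs within $V$ so there is no issue of mixing truncated and untruncated pairs. If anything, the subtlety is purely notational: being careful that the pivot $i$ and the generic pair indices in the pseudocode are kept distinct in the write-up.
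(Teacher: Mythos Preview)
Your verification is correct; the paper in fact states this lemma without proof, treating it as immediate from the LP constraints, and your case analysis on $\ell$ versus $u$ is exactly the routine check one would write to fill in the details.
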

We call edges having the pivot $i$ as an endpoint \emph{pivot edges} and call the rest of the edges \emph{non-pivot edges}.
Then, we 
determine the distances on the pivot edges as follows.
For each non-pivot vertex $j\in V\setminus\{i\}$, we pick a \emph{dominant level} $\ell^*(y,i,j) \in \arg\max_{\ell\in [L]} ~\Delta y_\ell(i,j)$. 
If the dominant level is not unique, we break ties arbitrarily. 
With the dominant level, we divide the edges into the following four classes and set the distances for the pivot edges as follows.
Also, see Figure~\ref{fig:rounding-process} for an intuitive example of how the rounding process works on complete graphs under $\beta=0$.
\begin{description}
    \item[Random edges in $E$.] $(i,j)\in E$ falls into this class if 
    $\Delta y_{\ell^*(y,i,j)}(i,j)\leq 1-\alpha$. 
    When it is a pivot edge, its distance $x'(i,j)$ is random and follows the distribution (technically, the CCDF):
    \begin{align}
        \label{eqn:distance-ccdf}
        \forall \,\ell\in [L],\quad \Pr[x'(i,j)\geq d_{\ell}] = \frac{1}{1-\alpha\beta}\cdot (y_\ell(i,j)-\alpha\beta)^+~.
    \end{align}
    \item[Deterministic edges in $E$.] $(i,j)\in E$ falls into this class 
    $\Delta y_{\ell^*(y,i,j)}(i,j)>1-\alpha$. 
    When it is a pivot edge, its distance $x'(i,j)$ is fixed to its dominant distance level $d_{\ell^*(y,i,j)}$. 
    
    \item[Random edges in $E_{\varnothing}$.] $(i,j)\in E_{\varnothing}$ falls into this class if 
    $\Delta y_{\ell^*(y,i,j)}(i,j)\leq 1-\alpha\beta$. 
    When it is a pivot edge, its distance $x'(i,j)$ is random and follows the distribution (i.e., the CCDF):
    \begin{align}
        \label{eqn:empty-distance-ccdf}
        \forall \,\ell\in [L],\quad \Pr[x'(i,j)\geq d_{\ell}] = y_\ell(i,j)~.
    \end{align}

    \item[Deterministic edges in $E_{\varnothing}$.] $(i,j)\in E_{\varnothing}$ falls into this class if 
    $\Delta y_{\ell^*(y,i,j)}(i,j)>1-\alpha\beta$. 
    When it is a pivot edge, its distance $x'(i,j)$ is fixed to its dominant distance level $d_{\ell^*(y,i,j)}$. 
\end{description}

\begin{Remark}
When $\beta=0$, Eqn.~\eqref{eqn:distance-ccdf} becomes $\forall \,\ell\in [L],\; \Pr[x'(i,j)\geq d_{\ell}] = y_\ell(i,j)$, which is an analog of the purely randomized scheme introduced by~\cite{DBLP:journals/jacm/AilonCN08}. 
\end{Remark}

\begin{figure}
    \tikzset{%
        every neuron/.style={
            circle,
            draw,
            minimum size=24pt,
            very thick
        },
        neuron/.style={
            circle,
            minimum size=0.2cm
        },
    }
    \centering
    \begin{tikzpicture}[x=0.8cm, y=0.8cm, >=latex, every text node part/.style={align=center}]
        \node [every neuron/.try, neuron 1/.try] (i) at (17pt,0pt) {$i$};
        \draw (-30pt,0pt) node (pivot) {\small \color{blue} \textbf{random}\\\color{blue} \textbf{pivot}};
        \node [every neuron/.try, neuron 1/.try] (j) at (52pt,30pt) {$j$};
        \draw (52pt,3pt) node {$\vdots$};
        \node [every neuron/.try, neuron 1/.try] (k) at (52pt,-30pt) {$k$};
        
        \path[very thick] (i) edge (j);
        \path[very thick, dashed] (i) edge (k);
        \path[->, blue] (pivot) edge (i);

        \fill[fill=orange!40] (95pt, 20pt) rectangle ++ (150pt, 20pt);
        \draw[draw=lightgray, line width=0.7pt] (80 pt, 20 pt) rectangle ++ (200pt, 20pt);
        \fill[fill=black] (94.5 pt, 20 pt) rectangle ++ (1pt, 20pt);
        \fill[fill=black] (244.5 pt, 20 pt) rectangle ++ (1pt, 20pt);
        \fill[fill=black] (269.5 pt, 20 pt) rectangle ++ (1pt, 20pt);
        \fill[fill=black] (279.5 pt, 20 pt) rectangle ++ (1pt, 20pt);

        \draw (120pt,3pt) node {$\vdots$};

        \fill[fill=orange!40] (160pt, -40pt) rectangle ++ (90pt, 20pt);
        \draw[draw=lightgray, line width=0.7pt] (80 pt, -40 pt) rectangle ++ (200pt, 20pt);
        \fill[fill=black] (104.5 pt, -40 pt) rectangle ++ (1pt, 20pt);
        \fill[fill=black] (159.5 pt, -40 pt) rectangle ++ (1pt, 20pt);
        \fill[fill=black] (249.5 pt, -40 pt) rectangle ++ (1pt, 20pt);
        \fill[fill=black] (279.5 pt, -40 pt) rectangle ++ (1pt, 20pt);

        \draw (180pt,50pt) node {\small $\{y_{\ell}(i,j)\}_{\ell\in [L]}$, with $\ell^*=2$};
        \draw (180pt,-50pt) node {\small $\{y_{\ell}(i,k)\}_{\ell\in [L]}$, with $\ell^*=3$};

        \draw (230pt, 0pt) node (domlev) {\small \color{orange} $\boldsymbol{\Delta y}$ \textbf{for dominant levels}};
        \draw (205pt, -20pt) node (domik) {};
        \draw (205pt, -30pt) node (deltaik) {\small $\boldsymbol{\Delta y_3(i,k)=0.45}$};
        \draw (170pt, 20pt) node (domij) {};
        \draw (170pt, 30pt) node (deltaij) {\small $\boldsymbol{\Delta y_2(i,j)=0.75}$};
        \path[->, thick, orange] (domlev) edge (domij);
        \path[->, thick, orange] (domlev) edge (domik);

        \draw (340pt, 60pt) node {\color{blue} \large \textbf{rounding}};
        \draw (340pt, 30pt) node {\color{blue} fixed to $d_{2}$};
        \draw (340pt, -30pt) node {\color{blue} sample according\\\color{blue}to $\Delta y_{\ell}(i,k)$};

    \end{tikzpicture}
    \caption{Example of our rounding process on complete graphs, where we choose $\alpha=0.4$ and $\beta=0$. }
    \label{fig:rounding-process}
\end{figure}
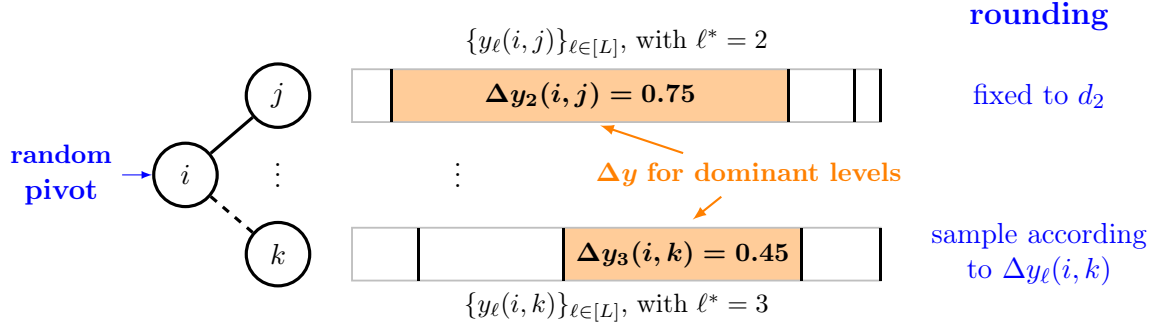

After determining the pivot edges, we partition the non-pivot vertices according to the distances between them and the pivot. 
Suppose $V_{\ell}$ denotes the set of non-pivot vertices $j\in V$ such that $x'(i,j)=d_{\ell}$. 
For any non-pivot edge $(j,k)$ whose endpoints $j,k$ are partitioned into the same set $V_{\ell}$, we minimally fix its distance by setting the distance to 
$\min(x(j,k), d_{\ell})$.
For any non-pivot edge $(j,k)$ whose endpoints $j,k$ are partitioned into different sets, say $j\in V_{\ell}, k\in V_{\ell'}$, we minimally fix its distance by setting the distance to 
$\max(d_{\ell},d_{\ell'})$.
Note that after this step, we can guarantee that the ultrametric inequality is satisfied for every triangle that involves $i$. 
Finally, in the recursive call, for each set $V_{\ell}$, we call our algorithm on $V_{\ell}$ with an upper bound of $d_{\ell}$ on the output distance and modify the distances between vertices in $V_{\ell}$ to the output of the call. 


Compared to the pivot-based algorithms of~\cite{DBLP:journals/siamcomp/AilonC11,DBLP:conf/focs/Cohen-AddadFLM22}, our algorithm differs only in how the distances on the edges incident to the pivot are determined.
Similar to their analysis, we can show that the algorithm outputs an ultrametric. 
For the sake of completeness, we present the proof in Appendix~\ref{app:output-ultrametric}.
\begin{lemma}
    \label{lem:output-ultrametric}
    Algorithm~\ref{alg:pivot} outputs an ultrametric in polynomial time.
\end{lemma}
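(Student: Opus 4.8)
The plan is to prove the two claims separately: that the algorithm runs in polynomial time, and that the distances it returns form an ultrametric. The polynomial-time claim is straightforward: each recursive call does $O(|V|^2 L)$ work (truncating $y$, choosing rounding for the pivot edges, fixing non-pivot edges), and the recursive calls at a given "node" of the recursion are on the sets $V_u,\dots,V_L$, which partition $V\setminus\{i\}$; hence the vertex sets form a laminar family and the total number of recursive calls is $O(n)$. Sampling from the CCDFs in \eqref{eqn:distance-ccdf} and \eqref{eqn:empty-distance-ccdf} is polynomial since there are only $L$ candidate values. So the running time is $\mathrm{poly}(n)$.

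For correctness, I would argue by induction on $|V|$ (with the recursion structure) that the output $x'$ of $\algoname(V,x,u)$ is an ultrametric on $V$ all of whose values are among $d_u,\dots,d_L$. The base cases $|V|\le 2$ or $u=L$ are immediate (a one- or two-point space is trivially an ultrametric; when $u=L$ every distance is forced to be $d_L$). For the inductive step, fix the pivot $i$ and the values $x'(i,j)$ it gets assigned; note each such value is some $d_{\ell^*}$ with $\ell^*\ge u$ (the truncation zeroes out levels below $u$, and by Lemma~\ref{lem:feasible-y-in-calls} we still have a valid CCDF supported on levels $\ge u$, so the dominant level and any sampled level are $\ge u$; also the minimal-fix step only takes minima/maxima of such values, and $x(j,k)$ itself is $\ge d_u$ by a separate easy induction on the inputs to recursive calls). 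Now take any triple $\{a,b,c\}\subseteq V$. If one of them, say $a$, is the pivot $i$: the three assignments $x'(i,b),x'(i,c),x'(b,c)$ were set precisely by the "\texttt{For} $(j,k)\in\binom{V\setminus\{i\}}{2}$" loop, which sets $x'(b,c)=\min\{x(b,c),x'(i,b)\}$ when $x'(i,b)=x'(i,c)$ and $x'(b,c)=\max\{x'(i,b),x'(i,c)\}$ otherwise; a direct two-line case check shows the ultrametric inequality $x'(a,c)\le\max\{x'(a,b),x'(b,c)\}$ (and its permutations) holds for this triple. Crucially, this value $x'(b,c)$ is $\le x'(i,b)$ when $b,c$ land in the same $V_\ell$, and equals $d_\ell$ for all pairs inside $V_\ell$; so when we later overwrite $x'|_{\binom{V_\ell}{2}}$ by the recursive call's output — which by induction is an ultrametric with all values in $\{d_\ell,\dots,d_L\}$, i.e., all values $\le d_\ell = x'(i,b)$ — the triple $(i,b,c)$ with $b,c\in V_\ell$ still satisfies the ultrametric inequality, since the two pivot edges are both $d_\ell$ and the third is $\le d_\ell$. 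If none of $a,b,c$ is the pivot: either all three lie in one block $V_\ell$, in which case the inequality holds by the inductive hypothesis applied to $\algoname(V_\ell,\cdot,\ell)$; or they are spread across blocks, in which case the value of a cross-block pair $x'(b,c)$ with $b\in V_\ell, c\in V_{\ell'}$, $\ell\ne\ell'$, equals $\max\{d_\ell,d_{\ell'}\}$ and is never touched again by any recursive call, so one reduces to the same elementary case analysis as before using that every pair inside a block has value at most the block's label.

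I expect the main obstacle to be the bookkeeping in the "mixed" case where the triple $\{i,b,c\}$ straddles the point where $x'(b,c)$ gets overwritten by a recursive call: one must be careful that the recursive output on $V_\ell$ only ever decreases (or leaves unchanged, relative to $d_\ell$) the intra-block distances, so that the inequalities verified at the pivot step are preserved. This is exactly where I would invoke the inductive hypothesis's "all values $\le d_\ell$" clause, and it is the reason for strengthening the induction to track the upper bound $u$ rather than just "is an ultrametric." Everything else is routine case analysis on $\max$ and $\min$. Since this mirrors the argument in~\cite{DBLP:conf/focs/Cohen-AddadFLM22} for their pivot-based algorithm (our algorithm differs only in how pivot-edge distances are chosen, not in the partitioning or minimal-fixing steps), I would keep the write-up brief and defer the fully detailed case check to the appendix as the paper indicates.
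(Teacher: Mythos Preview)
Your proposal is correct and follows essentially the same approach as the paper: you strengthen the inductive hypothesis to ``the output of $\algoname(V,x,u)$ is an ultrametric with all values $\le d_u$,'' establish the auxiliary fact that the \emph{input} $x$ to each call already satisfies $x(j,k)\le d_u$ (the paper states this as a separate lemma), and then do the same two-way case split on whether the triple contains the pivot. One small slip: where you write ``$x(j,k)$ itself is $\ge d_u$'' you mean $\le d_u$ (equivalently, the level is $\ge u$); the rest of your write-up uses the correct direction.
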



\section{Analytical Framework}
\label{sec:framework}
In this section, we present the analytical framework for Algorithm~\ref{alg:pivot} that will be used to obtain all the approximation ratios.

\subsection{Definitions and Basic Properties}
\label{subsec:basic-property}

Notice that the recursive calls in the algorithm form a tree structure.
We use the \emph{root call} to refer to the recursive call $\algoname([n], x_{\text{in}}, 1)$ at the beginning of the algorithm.
For any recursive call, we use \emph{child call}s to refer to the recursive calls on line~\ref{line:child-call} and use \emph{parent call} for vice versa.
If a recursive call does not have child calls, we say it is a \emph{leaf call}.

With an abuse of notation, for any LP solution $y$, 
we use $y_{\ell^*(u,v)}(i,j)$ to denote $y_{\ell^*(y,u,v)}(i,j)$. 
Let $c^*(i,j)$ denote the LP cost of edge $(i,j)$ in $y^*$, i.e., $c^*(i,j)\defeq 1-\Delta y^*_{\tilde{\ell}({i,j})}(i,j)$. 
With this definition, we can rewrite the LP objective~\eqref{eqn:umvd-lp} and thus lower bound the total number of distances modified in the optimal solution OPT as follows:
\begin{align}
    \label{eqn:optimal-lower-bound}
    \text{OPT} \geq \sum_{(i,j)\in E} w(i,j)\cdot c^*({i,j})~.
\end{align}
Based on the LP costs, we shall classify the edges in $E$ into the following two classes: 
\begin{description}
    \item[low-cost edges] include all edges $(i,j)\in E$ satisfying $c^*(i,j)<\alpha$, and
    \item[high-cost edges] include all edges $(i,j)\in E$ satisfying $c^*(i,j)\geq \alpha$.
\end{description}
We shall use $E_{\lowerr}$ and $E_{\higherr}$ to denote the set of low-cost and high-cost edges.
In the optimal solution of the LP, low-cost edges satisfy
\begin{align*}
    \Delta y^*_{\ell^*(i,j)}(i,j) \geq \Delta y^*_{\tilde{\ell}(i,j)}(i,j) = 1-c^*(i,j) > 1-\alpha~,
\end{align*}
and thus are deterministic in the root call. 
However, this may not be the case for high-cost edges.
We shall subdivide the high-cost edges into the following two classes based on whether they are deterministic in the root call:
\begin{description}
    \item[initially deterministic edges] include all edges $(i,j)\in E_{\higherr}$ such that $\Delta y^*_{\ell^*(i,j)}(i,j)>1-\alpha$,
    \item[initially random edges] include all edges $(i,j)\in E_{\higherr}$ such that $\Delta y^*_{\ell^*(i,j)}(i,j) \leq 1-\alpha$.
\end{description}
We shall use $E_{\highdet}$ and $E_{\highrand}$ to denote the set of high-cost edges that are initially deterministic and that are initially random. 
For the high-cost edges that are initially deterministic, we have the following tighter lower bound for their LP costs.
We defer the proof to Appendix~\ref{app:highdet-error-bound}.
\begin{lemma}
    \label{lem:highdet-error-bound}
    For any edge $(i,j)\in E_{\highdet}$, $c^*(i,j)>1-\alpha$.
\end{lemma}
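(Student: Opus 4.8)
The plan is to exploit the fact that the increments $\Delta y^*_\ell(i,j)$ form a probability distribution over the levels $[L]$ (Lemma~\ref{lem:feasible-y-in-calls}: they are nonnegative by constraint~\eqref{eqn:umvd-lp-increasing} and sum to $1$ since $y^*_L(i,j)=1$ and $y^*_0(i,j)=0$), and to do a short case analysis on whether the input level $\tilde\ell(i,j)$ coincides with the dominant level $\ell^*(i,j)$.

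First I would recall the two hypotheses defining $E_{\highdet}$: since $(i,j)\in E_{\higherr}$ we have $c^*(i,j)=1-\Delta y^*_{\tilde\ell(i,j)}(i,j)\ge \alpha$, i.e. $\Delta y^*_{\tilde\ell(i,j)}(i,j)\le 1-\alpha$; and since $(i,j)$ is initially deterministic we have $\Delta y^*_{\ell^*(i,j)}(i,j)>1-\alpha$ for the dominant level $\ell^*(i,j)$. In the first case, suppose $\tilde\ell(i,j)=\ell^*(i,j)$. Then the two displayed inequalities apply to the same quantity $\Delta y^*_{\ell^*(i,j)}(i,j)$ and give $1-\alpha<\Delta y^*_{\ell^*(i,j)}(i,j)\le 1-\alpha$, a contradiction, so this case is impossible. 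In the remaining case $\tilde\ell(i,j)\ne\ell^*(i,j)$, and then $\Delta y^*_{\tilde\ell(i,j)}(i,j)$ is one of the nonnegative summands distinct from $\Delta y^*_{\ell^*(i,j)}(i,j)$, whence
\[
\Delta y^*_{\tilde\ell(i,j)}(i,j)\ \le\ \sum_{\ell\ne \ell^*(i,j)}\Delta y^*_{\ell}(i,j)\ =\ 1-\Delta y^*_{\ell^*(i,j)}(i,j)\ <\ \alpha,
\]
and therefore $c^*(i,j)=1-\Delta y^*_{\tilde\ell(i,j)}(i,j)>1-\alpha$, as claimed.

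There is no real obstacle here; the only point requiring a moment's care is making sure the two regimes are handled correctly — in particular noticing that the case $\tilde\ell(i,j)=\ell^*(i,j)$ is outright excluded by combining the ``high-cost'' and ``initially deterministic'' assumptions, rather than needing a separate bound. Everything else is an application of nonnegativity of $\Delta y^*$ (constraint~\eqref{eqn:umvd-lp-increasing}) and the normalization $\sum_{\ell\in[L]}\Delta y^*_\ell(i,j)=1$ from Lemma~\ref{lem:feasible-y-in-calls}.
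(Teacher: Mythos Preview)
Your proof is correct and follows essentially the same approach as the paper's: both argue that $\tilde\ell(i,j)\ne\ell^*(i,j)$ by contradiction with the high-cost and initially-deterministic hypotheses, then use $\sum_{\ell}\Delta y^*_\ell(i,j)=1$ to bound $\Delta y^*_{\tilde\ell(i,j)}(i,j)\le 1-\Delta y^*_{\ell^*(i,j)}(i,j)<\alpha$. The only cosmetic difference is that you invoke Lemma~\ref{lem:feasible-y-in-calls} for the normalization, whereas the paper uses $y^*_0(i,j)=0$ and $y^*_L(i,j)=1$ directly.
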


Next, we present two basic properties of Algorithm~\ref{alg:pivot}. 
The first property claims that the increments of $y$ on the dominant levels $\Delta y_{\ell^*(i,j)}$ are increasing from the root call to the leaf calls.
We defer the proof to Appendix~\ref{app:delta-y-inc}.
\begin{lemma}
    \label{lem:delta-y-inc}
    Suppose $y$ is the truncated LP solution in a recursive call with vertex set $V$. Suppose $y'$ is the truncated LP solution in one of its child calls with vertex set $V'\subseteq V$. 
    For each edge $(i,j)\in \binom{V'}{2}$, $\Delta y_{\ell^*(i,j)}(i,j)\leq \Delta y'_{\ell^*(i,j)}(i,j)$. 
    In particular, for any recursive call with vertex set $V$ and truncated LP solution $y$, $\forall (i,j)\in \binom{V}{2}, \Delta y_{\ell^*(i,j)}(i,j)\geq \Delta y^*_{\ell^*(i,j)}(i,j)$.
\end{lemma}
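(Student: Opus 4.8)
The plan is to unwind the definitions of the truncated LP solutions in a parent call and a child call, and track what happens to $\Delta y_{\ell^*(i,j)}(i,j)$ for a fixed edge $(i,j)$ in the child's vertex set $V'$. Recall that the child call on line \ref{line:child-call} is invoked as $\algoname(V_\ell, x'|_{\binom{V_\ell}{2}}, \ell)$, where $V' = V_\ell$ is the set of non-pivot vertices $j$ with $x'(\text{pivot},j)=d_\ell$, and $\ell \ge u$ where $u$ is the parent's upper bound level. So the child's upper bound $u' = \ell \ge u$. The truncation step (the nested \textbf{for} loop in Algorithm~\ref{alg:pivot}) zeroes out $y_{\ell'}(i,j)$ for all levels $\ell' < u'$, starting from the global $y^*$. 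Hence in the parent call $y_{\ell'}(i,j) = y^*_{\ell'}(i,j)$ for $\ell' \ge u$ and $0$ otherwise, and in the child $y'_{\ell'}(i,j) = y^*_{\ell'}(i,j)$ for $\ell' \ge u'$ and $0$ otherwise, with $u' \ge u$.

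First I would argue that the dominant level $\ell^*(i,j)$ — which by the convention set in Section~\ref{subsec:basic-property} is defined with respect to a fixed reference solution (it is $\ell^*(y,u,v)$ but the notation $\ell^*(u,v)$ suppresses $y$, so I should pin down which $y$; the cleanest reading is that $\ell^*(i,j)$ refers to the dominant level in $y^*$, or equivalently that the statement holds for whichever common dominant level is used) — lies at a level $\ge u'$, so that $\Delta y'_{\ell^*(i,j)}(i,j)$ is not killed by the child's truncation. This should follow because truncation only removes mass at low levels and piles it onto level $u'$ (after truncation, $y_{u'}$ becomes the old $y_{u'}$, but $y_{u'-1}=0$, so $\Delta y_{u'}$ can only increase); combined with Lemma~\ref{lem:feasible-y-in-calls} guaranteeing $\sum_\ell \Delta y_\ell = 1$ in every call, a level achieving the max in the parent either survives in the child or is replaced by level $u'$ which only has more mass. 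Then the key computation: for $\ell^* \ge u' > u$ we have $\Delta y_{\ell^*}(i,j) = y^*_{\ell^*}(i,j) - y^*_{\ell^*-1}(i,j) = \Delta y'_{\ell^*}(i,j)$ when $\ell^* > u'$ (the truncation doesn't touch either term), and for $\ell^* = u'$ we get $\Delta y'_{u'}(i,j) = y^*_{u'}(i,j) - 0 \ge y^*_{u'}(i,j) - y^*_{u'-1}(i,j) = \Delta y_{u'}(i,j)$ in the parent (valid whether or not $u' = u$, since in the parent $y_{u'-1}(i,j)=y^*_{u'-1}(i,j)$ if $u' > u$, or $0$ if $u'=u$; in the latter case equality holds, in the former case we use $y^*_{u'-1}\ge 0$). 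In all cases $\Delta y_{\ell^*(i,j)}(i,j) \le \Delta y'_{\ell^*(i,j)}(i,j)$. The "in particular" clause then follows by induction up the recursion tree, since the root call has $u=1$ and $y = y^*$ so $\Delta y_{\ell^*(i,j)}(i,j) = \Delta y^*_{\ell^*(i,j)}(i,j)$, and each step down only increases this quantity.

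The main obstacle I anticipate is bookkeeping around the definition of $\ell^*$: the lemma uses $\ell^*(i,j)$ as a single symbol on both sides of the inequality, so I must be careful that it denotes the \emph{same} level throughout (the natural choice being the dominant level in $y^*$, which by Lemma~\ref{lem:delta-y-inc}'s "in particular" part and Lemma~\ref{lem:feasible-y-in-calls} remains a valid — indeed still dominant — level in every descendant call). If instead $\ell^*$ is allowed to be recomputed per call, the argument needs the extra observation that the max of $\Delta y$ over levels can only go up under truncation, which is exactly the monotonicity we just established pointwise for levels $\ge u'$ plus the fact that truncation never creates a larger value at a level $< u'$ (those become $0$). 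I would resolve this by fixing $\ell^* := \ell^*(y^*, i, j)$ once and for all and showing it is simultaneously a dominant level and satisfies the displayed inequality; this keeps the proof a short case analysis on whether $\ell^* > u'$ or $\ell^* = u'$ (noting $\ell^* \ge u'$ must be verified, e.g. because $\tilde\ell(i,j) \ge u'$ would be needed for $(i,j)$ to survive as an edge in the child with its input level — or more robustly, because any level below $u'$ has $\Delta y' = 0 < \Delta y'_{u'}$).
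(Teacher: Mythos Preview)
Your proposal misreads the notation and consequently leaves a real gap. By the convention in Section~\ref{subsec:basic-property}, $\ell^*(i,j)$ inside $\Delta y_{\ell^*(i,j)}(i,j)$ abbreviates $\ell^*(y,i,j)$, and inside $\Delta y'_{\ell^*(i,j)}(i,j)$ it abbreviates $\ell^*(y',i,j)$; the lemma is asserting $\max_\ell \Delta y_\ell(i,j) \le \max_\ell \Delta y'_\ell(i,j)$. You do eventually entertain this reading, but your argument for it---pointwise monotonicity $\Delta y_\ell \le \Delta y'_\ell$ for $\ell \ge u'$ together with $\Delta y'_\ell = 0$ for $\ell < u'$---does not cover the case where the parent's dominant level $\ell^*(y,i,j)$ lies in $[u,u')$. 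There $\Delta y_\ell$ can be positive while $\Delta y'_\ell = 0$, so neither observation bounds $\Delta y_{\ell^*(y,i,j)}$. Your fallback of fixing $\ell^* := \ell^*(y^*,i,j)$ and verifying $\ell^* \ge u'$ does not work either: this can simply be false (e.g.\ $y^*_1=0.6,\ y^*_2=0.7,\ y^*_3=1$ gives $\ell^*(y^*,i,j)=1$, yet $i,j$ can land in a child call with $u'=2$, since membership in $V_{u'}$ is governed by the pivot edges, not by $(i,j)$); and neither the ``$\tilde\ell(i,j)\ge u'$'' remark nor the ``$\Delta y'_\ell=0$ for $\ell<u'$'' remark says anything about $\ell^*(y^*,i,j)$.

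The paper's proof splits instead on whether $\ell^*(y,i,j) > u'$ or $\ell^*(y,i,j) \le u'$. Your sketch already handles the first branch. For the second, the missing step is
\[
\Delta y_{\ell^*(y,i,j)}(i,j) \;\le\; y^*_{\ell^*(y,i,j)}(i,j) \;\le\; y^*_{u'}(i,j) \;=\; \Delta y'_{u'}(i,j) \;\le\; \Delta y'_{\ell^*(y',i,j)}(i,j),
\]
where the first inequality uses $\Delta y_\ell \le y_\ell = y^*_\ell$ for $\ell\ge u$, the second is the LP monotonicity constraint~\eqref{eqn:umvd-lp-increasing}, and the equality is the child's truncation. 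Once this case is added, the ``in particular'' clause follows by the induction you describe.
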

\noindent From the above lemma and the fact that any edge $(i,j)\in E_{\lowerr}\cup E_{\highdet}$ satisfies $\Delta y^*_{\ell^*(i,j)}(i,j)>1-\alpha$, we obtain the following corollary.
\begin{corollary}
\label{cor:low-err-are-det}
In any recursive call, any low-cost (or initially deterministic high-cost) edge is deterministic. 
\end{corollary}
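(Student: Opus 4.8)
The plan is to derive Corollary~\ref{cor:low-err-are-det} directly from Lemma~\ref{lem:delta-y-inc} together with the definitions of the edge classes. First I would unpack what it means for an edge $(i,j)$ to be \emph{deterministic} in a recursive call. Looking at Algorithm~\ref{alg:pivot}, an edge $(i,j)\in E$ is rounded deterministically precisely when $\Delta y_{\ell^*(i,j)}(i,j) > 1-\alpha$, where $y$ is the truncated LP solution of that call and $\ell^* = \ell^*(y,i,j)$ is the dominant level. (For edges in $E_{\varnothing}$ the threshold is $1-\alpha\beta \le 1-\alpha$, so being deterministic is only easier there; but the corollary is stated for edges in $E$, so I will focus on that threshold.) Thus the whole claim reduces to showing $\Delta y_{\ell^*(i,j)}(i,j) > 1-\alpha$ in every recursive call for every edge $(i,j)$ in $E_{\lowerr} \cup E_{\highdet}$.

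Next I would establish the baseline inequality $\Delta y^*_{\ell^*(i,j)}(i,j) > 1-\alpha$ for edges in $E_{\lowerr}\cup E_{\highdet}$. For low-cost edges this is exactly the displayed computation in the paragraph preceding the definition of initially deterministic/random edges: since $\ell^*(i,j)$ is the argmax of $\Delta y^*_\ell(i,j)$, we have $\Delta y^*_{\ell^*(i,j)}(i,j) \ge \Delta y^*_{\tilde\ell(i,j)}(i,j) = 1 - c^*(i,j) > 1-\alpha$, the last step using $c^*(i,j) < \alpha$. For initially deterministic high-cost edges, $\Delta y^*_{\ell^*(i,j)}(i,j) > 1-\alpha$ holds \emph{by definition} of $E_{\highdet}$. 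So in both cases the inequality holds at the root.

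Then I would invoke the ``in particular'' clause of Lemma~\ref{lem:delta-y-inc}: for any recursive call with vertex set $V$ and truncated LP solution $y$, and any $(i,j)\in\binom{V}{2}$, we have $\Delta y_{\ell^*(i,j)}(i,j) \ge \Delta y^*_{\ell^*(i,j)}(i,j)$. Chaining this with the baseline inequality gives $\Delta y_{\ell^*(i,j)}(i,j) \ge \Delta y^*_{\ell^*(i,j)}(i,j) > 1-\alpha$ for every $(i,j)\in(E_{\lowerr}\cup E_{\highdet})\cap\binom{V}{2}$, which by the rounding rule in Algorithm~\ref{alg:pivot} means $(i,j)$ is deterministic in that call. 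That finishes the proof.

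I do not expect a genuine obstacle here — the statement is essentially a bookkeeping corollary. The one subtlety to be careful about is the notational overload flagged in Section~\ref{subsec:basic-property}: $\ell^*(i,j)$ without a named LP solution is shorthand for $\ell^*(y,i,j)$ with respect to whatever $y$ is in context, but Lemma~\ref{lem:delta-y-inc} is stated so that the \emph{same} level index $\ell^*(i,j)$ (determined by $y^*$, via the convention $y_{\ell^*(u,v)}(i,j) := y_{\ell^*(y^*,u,v)}(i,j)$) is used on both sides of the inequality, which is exactly what makes the chaining valid — one should not accidentally re-derive the dominant level with respect to the local $y$. As long as one reads $\ell^*(i,j)$ consistently as the $y^*$-dominant level throughout, the argument is a two-line deduction and needs no case analysis beyond the low-cost/initially-deterministic split already done above.
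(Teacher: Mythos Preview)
Your argument is correct and is exactly the paper's: combine the baseline $\Delta y^*_{\ell^*(i,j)}(i,j)>1-\alpha$ for $(i,j)\in E_{\lowerr}\cup E_{\highdet}$ with the ``in particular'' clause of Lemma~\ref{lem:delta-y-inc} to get $\Delta y_{\ell^*(i,j)}(i,j)>1-\alpha$ in every call. Two harmless slips in your commentary: (i) since $\beta\le 1$ we have $1-\alpha\beta\ge 1-\alpha$, so the $E_\varnothing$ threshold is \emph{higher}, not lower (irrelevant here since $E_{\lowerr}\cup E_{\highdet}\subseteq E$); (ii) the paper's convention is that $\ell^*(i,j)$ in an expression like $\Delta y_{\ell^*(i,j)}(i,j)$ is taken with respect to the LP solution in the subscript, so in Lemma~\ref{lem:delta-y-inc}'s inequality the two sides use \emph{different} dominant levels ($\ell^*(y,i,j)$ on the left, $\ell^*(y^*,i,j)$ on the right)---the left-hand side is precisely what the algorithm tests, so the chaining is exactly what you need.
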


Because we choose $\alpha\leq \frac{1}{2}$, we can give the second property that low-cost edges are modified only when they are non-pivot edges.
We defer the proof to Appendix~\ref{app:low-error-modify}.
\begin{lemma}
\label{lem:low-error-modify}
In any recursive call with vertex set $V$ and truncated LP solution $y$, for any edge $(i,j)\in E_{\lowerr}\cap \binom{V}{2}$, the input distance satisfies $x(i,j)=d_{\ell^*(y,i,j)}$.
In particular, when an edge $(i,j)\in E_{\lowerr}$ appears as a pivot edge, its distance $x(i,j)$ is not modified in the recursive call. 
\end{lemma}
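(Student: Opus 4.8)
The plan is to prove Lemma~\ref{lem:low-error-modify} by tracking what happens to a low-cost edge $(i,j)$ along the chain of recursive calls from the root down to the call with vertex set $V$, showing that its distance is never changed from the input value $x_{\text{in}}(i,j) = d_{\tilde{\ell}(i,j)}$, and that in fact $\tilde{\ell}(i,j) = \ell^*(y,i,j)$ in every call containing both endpoints. The starting observation is that low-cost means $c^*(i,j) < \alpha$, i.e., $\Delta y^*_{\tilde{\ell}(i,j)}(i,j) > 1 - \alpha \geq 1/2$ (using $\alpha \le 1/2$); since the $\Delta y^*_\ell(i,j)$ sum to $1$ by Lemma~\ref{lem:feasible-y-in-calls}, the dominant level in the root is unique and equals $\tilde{\ell}(i,j)$. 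By Corollary~\ref{cor:low-err-are-det}, $(i,j)$ is deterministic in every recursive call, and by Lemma~\ref{lem:delta-y-inc} the dominant-level increment only grows going down the tree, so it stays above $1-\alpha \ge 1/2$; hence the dominant level remains unique in every call. The remaining work is to argue this unique dominant level is always $\tilde{\ell}(i,j)$ and that the distance carried into each call stays at $d_{\tilde{\ell}(i,j)}$.

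First I would set up an induction on the depth of the recursive call. The base case is the root call, where the truncated solution equals $y^*$, the input distance is $x_{\text{in}}(i,j) = d_{\tilde{\ell}(i,j)}$, and the dominant level is $\tilde{\ell}(i,j)$ as just noted. For the inductive step, consider a child call with vertex set $V' \subseteq V$ and upper-bound level $u'$, obtained because both $j$ and $k$ — here relabel so the edge is $(i,j)$ with $i, j$ both landing in the same part $V_{u'}$ under some pivot $p \in V \setminus \{i,j\}$. Since both endpoints go into $V_{u'}$, we have $x'(p,i) = x'(p,j) = d_{u'}$, so the algorithm sets the non-pivot edge's new distance to $\min(x(i,j), d_{u'}) = \min(d_{\tilde{\ell}(i,j)}, d_{u'})$ by the inductive hypothesis. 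I need $\tilde{\ell}(i,j) \ge u'$ (so that $d_{\tilde{\ell}(i,j)} \le d_{u'}$ and the $\min$ is $d_{\tilde{\ell}(i,j)}$, i.e., unchanged); this is where I would use that $u'$ is the dominant level the pivot assigned, that the LP variables are truncated to zero below $u'$ in the child, and that the dominant level $\ell^*(y',i,j)$ of the edge $(i,j)$ in the child — which is $\ge u'$ because truncation zeroes out lower levels — must agree with the (unchanged, unique, $>1/2$-mass) dominant level carried down, forcing $\tilde{\ell}(i,j) = \ell^*(y',i,j) \ge u'$. The pivot-edge case at the end of the lemma then follows immediately: when $(i,j) \in E_{\lowerr}$ is a pivot edge in a call, it is deterministic (Corollary~\ref{cor:low-err-are-det}) with unique dominant level $\ell^*(y,i,j) = \tilde{\ell}(i,j)$, so $x'(i,j)$ is set to $d_{\ell^*(y,i,j)} = d_{\tilde{\ell}(i,j)} = x(i,j)$, unmodified.

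The main obstacle I anticipate is cleanly handling the interaction between the truncation step (which zeroes $y_\ell(i,j)$ for $\ell < u$ in each call) and the identity $\ell^*(y,i,j) = \tilde{\ell}(i,j)$: I must rule out the possibility that truncation destroys so much of the mass below the old dominant level that a new level above $u$ overtakes it, or that $\tilde{\ell}(i,j)$ itself has been truncated away. Both are prevented by the quantitative bound $\Delta y^*_{\tilde{\ell}(i,j)}(i,j) > 1-\alpha \ge 1/2 > \alpha \ge 1 - \text{(remaining mass)}$, combined with Lemma~\ref{lem:delta-y-inc} guaranteeing the dominant increment never decreases — but I would make sure the induction invariant explicitly records both ``$x(i,j) = d_{\tilde{\ell}(i,j)}$ in this call'' and ``$\tilde{\ell}(i,j)$ is the (unique) dominant level of $(i,j)$ in this call's truncated solution,'' so that the step goes through without re-deriving these each time. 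A secondary minor point is verifying $\tilde{\ell}(i,j) \ge u$ in the call at hand (so the level $\tilde{\ell}(i,j)$ survives truncation), which again follows because $\ell^*(y,i,j) \ge u$ always (truncation makes lower levels non-dominant) together with the equality $\ell^*(y,i,j) = \tilde{\ell}(i,j)$ from the invariant.
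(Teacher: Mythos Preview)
Your inductive invariant is too strong and is in fact false. You try to maintain that in every recursive call containing $i$ and $j$, both $x(i,j)=d_{\tilde\ell(i,j)}$ and $\ell^*(y,i,j)=\tilde\ell(i,j)$. But the lemma only asserts $x(i,j)=d_{\ell^*(y,i,j)}$; it does \emph{not} say the dominant level stays pinned at $\tilde\ell(i,j)$, and indeed it need not. Concretely, suppose $(j,k)\in E_{\lowerr}$ sits in a call with truncated solution $y$ and (by induction) $x(j,k)=d_{\ell^*(y,j,k)}$, the pivot $p$ is chosen, and $j,k$ both land in $V_{u'}$ with $u'>\ell^*(y,j,k)$. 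Then the algorithm sets the non-pivot edge's distance to $\min(x(j,k),d_{u'})=d_{u'}$, which is a genuine modification of a low-cost edge (this is perfectly consistent with the lemma, whose ``in particular'' clause concerns only pivot edges). In the child call the truncation wipes out levels below $u'$, and one checks $\Delta y'_{u'}(j,k)=y^*_{u'}(j,k)\ge y^*_{\ell^*(y,j,k)}(j,k)\ge \Delta y_{\ell^*(j,k)}(j,k)>1-\alpha$, so the new dominant level is $u'$, not $\tilde\ell(j,k)$. The lemma still holds in the child because $x'(j,k)=d_{u'}=d_{\ell^*(y',j,k)}$, but your invariant $\ell^*(y',j,k)=\tilde\ell(j,k)$ breaks. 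Your step ``the dominant level in the child \ldots\ must agree with the (unchanged, unique, $>1/2$-mass) dominant level carried down'' is exactly the place this fails: Lemma~\ref{lem:delta-y-inc} tells you the dominant \emph{increment} weakly increases, not that the dominant \emph{level} is preserved.

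The fix is to use the lemma's own statement as the inductive invariant and split the inductive step into two cases based on whether $x(j,k)\ge d_{u'}$ or $x(j,k)<d_{u'}$. In the first case both the distance and the dominant level move to $u'$; in the second case neither changes. This is precisely what the paper does. Your base case and your derivation of the ``in particular'' clause from the (corrected) invariant are fine.
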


\subsection{Towards the Approximation Ratios: Improved Triple-based Analysis for Ultrametric Violation Distance}
\label{subsec:improved-triple}
In this subsection, we develop two key lemmas,~\ref{lem:approx-ratio} and~\ref{lem:bound-of-sum-B}, to prove the approximation ratios for the three cases and any constant $\alpha\in (0,\frac{1}{2}]$.
By contrast to the triple-based analysis of the previous works (e.g.,~\cite{DBLP:journals/jacm/AilonCN08,DBLP:conf/stoc/ChawlaMSY15}), we separately upper bound the cost of the algorithm on different classes of edges: for the high-cost edges, we 
upper bound the algorithm's cost by $1$ and further by a constant multiple of its LP cost; for low-cost edges, we shall utilize the previous triple-based analysis.
In the triple-based analysis, Lemma~\ref{lem:approx-ratio} establishes the approximation ratio given a scheme charging the non-pivot edges so that: (1) in each triangle $t=(i,j,k)$ the expected number of modifications on low-cost edges in $t$ can be upper bounded by the expected total charges in $t$; and (2) the expected total charges on each edge can be upper bounded.
In addition, Lemma~\ref{lem:bound-of-sum-B} establishes an upper bound for the expected total charges on each edge by the probability that the edge can no longer be charged in the recursive algorithm (conditioned on any possible charging value).


Let ALG be the total number of distances modified by the algorithm. 
Because of the definition of high-cost edges and Lemma~\ref{lem:highdet-error-bound}, we can upper bound its expectation as follows:
\begin{align}
    \expect[\text{ALG}]
    &
    =
    \sum_{(i,j)\in E} w(i,j)\cdot \Pr[\text{$(i,j)$ is modified}]
    \notag
    \\ 
    &
    \leq 
    \sum_{(i,j)\in E_{\lowerr}} w(i,j)\cdot \Pr[\text{$(i,j)$ is modified}] + \sum_{(i,j)\in E_{\highdet}} w(i,j) + \sum_{(i,j)\in E_{\highrand}} w(i,j)
    \notag
    \\
    \begin{split}
    &
    \leq
    \sum_{(i,j)\in E_{\lowerr}} w(i,j)\cdot \expect[\text{\#times $(i,j)$ is modified}] + \frac{1}{1-\alpha} \cdot \sum_{(i,j)\in E_{\highdet}} w(i,j)\cdot c^*(i,j)
    \\
    & \qquad\qquad\qquad\qquad\qquad\qquad\qquad\qquad\qquad\quad\;\;\,
    + \frac{1}{\alpha} \cdot \sum_{(i,j)\in E_{\highrand}} w(i,j)\cdot c^*(i,j)~.
    \end{split}
    \label{eqn:bound-E-alg}
\end{align}
Recall the lower bound for the optimal solution~\eqref{eqn:optimal-lower-bound}.
To prove the approximation ratio, we only need to upper bound the number of modifications the algorithm makes on those low-cost edges by the LP value of $y^*$. Because of Lemma~\ref{lem:low-error-modify}, low-cost edges are only modified when they appear as non-pivot edges in the recursive calls. Hence, we only need to consider the modifications on non-pivot edges. 

For each triangle $t=(i,j,k)$, let $\mathcal{A}_t$ denote the event that $i,j,k$ appear in the same recursive call where one of them is chosen as a pivot. 
For each triangle $t$ and each edge $(i,j)\in t$, we define $M_{i,j,t}=1$ if $\mathcal{A}_t$ happens, $(i,j)\in E_{\lowerr}$, and $(i,j)$ is modified in the event $\mathcal{A}_t$.
Otherwise, we define $M_{i,j,t}=0$.
With this definition, we can rewrite the total number of modifications on low-cost edges as
\begin{align}
    \label{eqn:ub-of-modify-times}
    \sum_{(i,j)\in E_{\lowerr}} w(i,j)\cdot (\text{\#times $(i,j)$ is modified})
    =
    \sum_{(i,j)\in E_{\lowerr}} w(i,j)\cdot \sum_{t:i,j\in t} M_{i,j,t}~.
\end{align}

\paragraph{The charging scheme.}
To upper bound the sum of $M_{i,j,t}$ by the LP value of the optimal solution $y^*$, we introduce intermediate random variables $B_{i,j,t}$, which specify how much we charge the non-pivot edge in $\mathcal{A}_t$.
More specifically, for any triangle $t\in \binom{[n]}{3}$ and any edge $(i,j)\in t\cap E$, we will charge $(i,j)$ in triangle $t$ if both of the following conditions are satisfied:
\begin{itemize}
    \item the event $\mathcal{A}_t$ happens (i.e., three vertices of $t$ appear in the same recursive call where one of them is chosen as a pivot), and
    \item the edge $(i,j)$ is the non-pivot edge in the event $\mathcal{A}_t$.
\end{itemize}
When both conditions are satisfied, we assign a non-negative value to $B_{i,j,t}$ and use $B_{i,j,t}\cdot w(i,j)\cdot c^*(i,j)$ to denote how much $(i,j)$ is charged in triangle $t$. 
Note that since the event $\mathcal{A}_t$ can only happen in at most one recursive call, $B_{i,j,t}$ cannot be assigned twice and is thus well defined.
Otherwise, we assign $0$ to $B_{i,j,t}$ for convenience.
In the analysis, we want two properties for this charging scheme:
\begin{enumerate}
    \item we can upper bound the expectation of expression~\eqref{eqn:ub-of-modify-times} by the expected sum of $B_{i,j,t}\cdot w(i,j)\cdot c^*(i,j)$, and
    \item fixing an edge $(i,j)$, we can upper bound the expectation of the sum of $B_{i,j,t}$ over all $t\owns i,j$.
\end{enumerate}
Next, we present the key Lemma~\ref{lem:approx-ratio} in this section: given a charging scheme (of $B_{i,j,t}$s) satisfying the above two properties, we can establish an upper bound on the approximation ratio of Algorithm~\ref{alg:pivot}.

\begin{lemma}
    \label{lem:approx-ratio}
    Suppose there exists a charging scheme such that 
    \begin{enumerate}
        \item for any triangle $t\in \binom{[n]}{3}$, 
        \begin{align}
            \label{eqn:improved-M-bound-by-B}
            \sum_{(i,j)\in t} w(i,j) \cdot \expect[M_{i,j,t}|\mathcal{A}_t]\leq \sum_{(i,j)\in t\cap E} \expect[B_{i,j,t}|\mathcal{A}_t]\cdot w(i,j)\cdot c^*(i,j)~;
        \end{align}
        \item for any edge $(i,j)\in E$, the total charges on $(i,j)$ satisfies
        \begin{align}
            \label{eqn:total-edge-charge}
            \sum_{t:i,j\in t} \expect[B_{i,j,t}] \leq \begin{cases}
                \overline{B}_{\lowerr} & \text{if $(i,j)\in E_{\lowerr}$,}\\
                \overline{B}_{\highdet} & \text{if $(i,j)\in E_{\highdet}$,}\\
                \overline{B}_{\highrand} & \text{if $(i,j)\in E_{\highrand}$.}
            \end{cases}
        \end{align}
    \end{enumerate}
    Then, Algorithm~\ref{alg:pivot} is a $\max\{\overline{B}_{\lowerr},~\overline{B}_{\highdet}+\frac{1}{1-\alpha},~\overline{B}_{\highrand}+\frac{1}{\alpha}\}$-approximation.
\end{lemma}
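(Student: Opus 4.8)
The plan is to assemble four pieces that are already in place: the bound~\eqref{eqn:bound-E-alg} on $\expect[\text{ALG}]$, the rewriting~\eqref{eqn:ub-of-modify-times} of the total weighted low-cost modification count in terms of the $M_{i,j,t}$, the two hypotheses~\eqref{eqn:improved-M-bound-by-B} and~\eqref{eqn:total-edge-charge} on the charging scheme, and the LP lower bound~\eqref{eqn:optimal-lower-bound}. In~\eqref{eqn:bound-E-alg} the $E_{\highdet}$ and $E_{\highrand}$ contributions already have the desired shape, so the whole task reduces to bounding $\sum_{(i,j)\in E_{\lowerr}} w(i,j)\cdot\expect[\text{number of times }(i,j)\text{ is modified}]$ by $\overline{B}_{\lowerr}\sum_{(i,j)\in E_{\lowerr}} w(i,j)c^*(i,j)$ together with extra amounts charged onto $E_{\highdet}$ and $E_{\highrand}$.

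First I would take expectations in~\eqref{eqn:ub-of-modify-times}, interchange the nonnegative double sum, and use $M_{i,j,t}=0$ for $(i,j)\notin E_{\lowerr}$ to write the low-cost term as $\sum_{t\in\binom{[n]}{3}}\sum_{(i,j)\in t} w(i,j)\expect[M_{i,j,t}]$. Both $M_{i,j,t}$ and $B_{i,j,t}$ vanish unless the event $\mathcal{A}_t$ occurs (for $B_{i,j,t}$ directly by its definition, since it is assigned a nonzero value only when $\mathcal{A}_t$ occurs and $(i,j)$ is the non-pivot edge), hence $\expect[M_{i,j,t}]=\Pr[\mathcal{A}_t]\cdot\expect[M_{i,j,t}\mid\mathcal{A}_t]$ and $\expect[B_{i,j,t}]=\Pr[\mathcal{A}_t]\cdot\expect[B_{i,j,t}\mid\mathcal{A}_t]$. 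Multiplying the per-triangle hypothesis~\eqref{eqn:improved-M-bound-by-B} by $\Pr[\mathcal{A}_t]$ and summing over all $t$ gives
\[
\sum_{t\in\binom{[n]}{3}}\sum_{(i,j)\in t} w(i,j)\,\expect[M_{i,j,t}] \;\le\; \sum_{t\in\binom{[n]}{3}}\sum_{(i,j)\in t\cap E}\expect[B_{i,j,t}]\cdot w(i,j)\cdot c^*(i,j).
\]

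Next I would swap the order of summation on the right to collect by edge, turning it into $\sum_{(i,j)\in E} w(i,j)c^*(i,j)\sum_{t:i,j\in t}\expect[B_{i,j,t}]$, split the sum over the three classes $E_{\lowerr},E_{\highdet},E_{\highrand}$, and apply the per-edge bound~\eqref{eqn:total-edge-charge} within each. This bounds the low-cost term by $\overline{B}_{\lowerr}\sum_{(i,j)\in E_{\lowerr}} w(i,j)c^*(i,j) + \overline{B}_{\highdet}\sum_{(i,j)\in E_{\highdet}} w(i,j)c^*(i,j) + \overline{B}_{\highrand}\sum_{(i,j)\in E_{\highrand}} w(i,j)c^*(i,j)$. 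Substituting into~\eqref{eqn:bound-E-alg} and combining the $\overline{B}_{\highdet}$-term with the $\tfrac{1}{1-\alpha}$-term over $E_{\highdet}$, and the $\overline{B}_{\highrand}$-term with the $\tfrac{1}{\alpha}$-term over $E_{\highrand}$, expresses $\expect[\text{ALG}]$ as $\sum_{(i,j)\in E} w(i,j)c^*(i,j)$ with class-dependent coefficients, each of which is at most $\max\{\overline{B}_{\lowerr},\ \overline{B}_{\highdet}+\tfrac{1}{1-\alpha},\ \overline{B}_{\highrand}+\tfrac{1}{\alpha}\}$; a final application of~\eqref{eqn:optimal-lower-bound} then yields $\expect[\text{ALG}]\le\max\{\overline{B}_{\lowerr},\ \overline{B}_{\highdet}+\tfrac{1}{1-\alpha},\ \overline{B}_{\highrand}+\tfrac{1}{\alpha}\}\cdot\text{OPT}$.

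The argument is pure bookkeeping with sums and expectations; the only step requiring a moment's care is the de-conditioning, namely the observation that $\Pr[\mathcal{A}_t]$ is the common factor relating the conditional and unconditional expectations of both $M_{i,j,t}$ and $B_{i,j,t}$ (because each is zero when $\mathcal{A}_t$ fails), so that the conditional inequality~\eqref{eqn:improved-M-bound-by-B} lifts to an unconditional one after multiplying by $\Pr[\mathcal{A}_t]$, with the degenerate case $\Pr[\mathcal{A}_t]=0$ contributing nothing to either side. I do not expect any genuine obstacle beyond this.
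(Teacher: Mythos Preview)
Your proposal is correct and follows essentially the same argument as the paper: de-condition~\eqref{eqn:improved-M-bound-by-B} by multiplying through by $\Pr[\mathcal{A}_t]$ (using that $M_{i,j,t}$ and $B_{i,j,t}$ vanish off $\mathcal{A}_t$), sum over triangles, regroup by edge, apply~\eqref{eqn:total-edge-charge}, and feed the result into~\eqref{eqn:bound-E-alg} and~\eqref{eqn:optimal-lower-bound}. The paper's proof is organized identically, and your remark about the $\Pr[\mathcal{A}_t]=0$ case is the only nuance, which the paper handles implicitly.
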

\begin{proof}
    According to the definition, when $\mathcal{A}_t$ does not happen, $M_{i,j,t}=B_{i,j,t}=0$ for any $(i,j)\in t$. 
    Therefore, $\expect[M_{i,j,t}]=\expect[M_{i,j,t}|\mathcal{A}_t]\cdot \Pr[\mathcal{A}_t]$ and $\expect[B_{i,j,t}]=\expect[B_{i,j,t}|\mathcal{A}_t]\cdot \Pr[\mathcal{A}_t]$.
    By multiplying $\Pr[\mathcal{A}_t]$ on both sides of Eqn.~\eqref{eqn:improved-M-bound-by-B} and summing over all possible $t\in \binom{[n]}{3}$, we have
    \begin{align*}
        \sum_{(i,j)\in E_{\lowerr}} w(i,j) \cdot \sum_{t:i,j\in t} \expect[M_{i,j,t}] 
        &
        \leq 
        \sum_{(i,j)\in E} \bigg(\sum_{t:i,j\in t} \expect[B_{i,j,t}]\bigg) \cdot w(i,j)\cdot c^*(i,j)
        ~.
    \end{align*}
    According to Eqn.~\eqref{eqn:optimal-lower-bound},~\eqref{eqn:bound-E-alg},~\eqref{eqn:ub-of-modify-times}, and~\eqref{eqn:total-edge-charge}, 
    \begin{align*}
        \expect[ALG] 
        &
        \leq 
        \overline{B}_{\lowerr} \sum_{(i,j)\in E_{\lowerr}} w(i,j) \cdot c^*(i,j)
        + 
        \overline{B}_{\highdet} \sum_{(i,j)\in E_{\highdet}} w(i,j) \cdot c^*(i,j)
        + 
        \overline{B}_{\highrand} \sum_{(i,j)\in E_{\highrand}} w(i,j) \cdot c^*(i,j)
        \\
        &
        \qquad\qquad\qquad\qquad\qquad\qquad\, + 
        \frac{1}{1-\alpha} \sum_{(i,j)\in E_{\highdet}} w(i,j)\cdot c^*(i,j) 
        + 
        \frac{1}{\alpha} \sum_{(i,j)\in E_{\highrand}} w(i,j)\cdot c^*(i,j)
        \\
        &
        \leq
        \max\Big\{\overline{B}_{\lowerr},~\overline{B}_{\highdet}+\frac{1}{1-\alpha},~ \overline{B}_{\highrand}+\frac{1}{\alpha}\Big\} \sum_{(i,j)\in E} w(i,j) \cdot c^*(i,j) 
        \\
        &
        \leq 
        \max\Big\{\overline{B}_{\lowerr},~\overline{B}_{\highdet}+\frac{1}{1-\alpha},~ \overline{B}_{\highrand}+\frac{1}{\alpha}\Big\} \cdot \text{OPT}
    \end{align*}
    Therefore, Algorithm~\ref{alg:pivot} is a $\max\{\overline{B}_{\lowerr},~\overline{B}_{\highdet}+\frac{1}{1-\alpha},~\overline{B}_{\highrand}+\frac{1}{\alpha}\}$-approximation.
\end{proof}
    
\paragraph{Bounding the total charges.} Next, we present a method to upper bound the total charge $\sum_{t:(i,j)\in t} \expect[B_{i,j,t}]$ on each edge by lower bounding the probability that an edge can no longer be charged after each time it is charged. 
Similar analysis can be found in previous works~\cite{DBLP:journals/jacm/AilonCN08,DBLP:conf/stoc/ChawlaMSY15} on \CClust.
In Algorithm~\ref{alg:pivot}, when the endpoints of an edge are partitioned into different sets on line~\ref{line:partition-non-pivots} in a recursive call, the edge will no longer be charged in its child calls.
Then, the key lemma for the upper bound can be formulated as follows.

\begin{lemma}
    \label{lem:bound-of-sum-B}
    Fix an edge $(i,j)\in E$ and any $q>0$.
    Let $\mathcal{R}$ be a recursive call, $b>0$ be a positive number, and $t$ be a triangle containing $i,j$ such that $\mathcal{A}_t$ happens in $\mathcal{R}$ and $(i,j)$ is a non-pivot edge charged by $B_{i,j,t}=b$.
    If for any valid choice of $\mathcal{R}, b, t$ the probability that $i$ and $j$ are partitioned into different sets on line~\ref{line:partition-non-pivots} of Algorithm~\ref{alg:pivot} is at least $q\cdot b$,
    then the expected total charges on edge $(i,j)$ satisfy
    \[
        \sum_{t:(i,j)\in t} \expect[B_{i,j,t}] \leq q^{-1}~.
    \]
\end{lemma}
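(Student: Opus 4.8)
The plan is to think of the total charge on a fixed edge $(i,j)$ as a sum over the (random) sequence of recursive calls in which $(i,j)$ survives as a non-pivot edge and gets charged, and to show this sum is stochastically dominated by a geometric-type random variable whose expectation is $q^{-1}$. First I would set up the process along the root-to-leaf path of recursive calls that contains the pair $(i,j)$: let $\mathcal{R}_1,\mathcal{R}_2,\dots$ be the successive calls in the recursion tree whose vertex set contains both $i$ and $j$, so that the pair survives into $\mathcal{R}_{m+1}$ precisely when, in $\mathcal{R}_m$, the chosen pivot is neither $i$ nor $j$ and $i,j$ land in the same part $V_\ell$ on line~\ref{line:partition-non-pivots}. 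The key point is that $B_{i,j,t}$ can be nonzero only when $\mathcal{A}_t$ occurs, i.e., when all three vertices of $t$ sit in a common call and one of them is the pivot; conditioned on $(i,j)$ not being the pivot in call $\mathcal{R}_m$, there is at most one triangle $t$ (determined by the pivot vertex) for which $B_{i,j,t}$ is assigned a positive value in $\mathcal{R}_m$. So the total charge is $\sum_m Z_m$ where $Z_m$ is the charge collected in call $\mathcal{R}_m$, and $Z_m=0$ unless $(i,j)$ is a charged non-pivot edge there.

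Next I would exploit the hypothesis. Condition on the history $\mathcal{H}_m$ up to and including the start of call $\mathcal{R}_m$ together with the choice of pivot and all rounding decisions that determine the value $b=Z_m$ charged to $(i,j)$ in $\mathcal{R}_m$; this is exactly the data $(\mathcal{R},b,t)$ in the lemma statement. The hypothesis says that, conditioned on this data, the probability that $i$ and $j$ are split on line~\ref{line:partition-non-pivots} — i.e., that $(i,j)$ is never charged again — is at least $q\,b=q\,Z_m$. Equivalently, the probability that $(i,j)$ survives into $\mathcal{R}_{m+1}$ (and hence can be charged again) is at most $1-q\,Z_m$. Writing $S_m=\mathbf{1}[(i,j)\text{ survives into }\mathcal{R}_{m+1}]$, we get $\mathbb{E}[S_m \mid \mathcal{H}_m, Z_m] \le 1 - q\,Z_m$ on the event $Z_m>0$ (and $S_m$ is whatever it is when $Z_m=0$, but then no further charge-relevant bookkeeping is needed because a call with $Z_m=0$ still just passes the pair along; I would handle the $Z_m=0$ steps by noting they contribute nothing to the sum and only need $S_m\le 1$ there).

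The final step is a telescoping/martingale argument. Let $T_m = \prod_{k<m} S_k$ be the indicator that $(i,j)$ reaches call $\mathcal{R}_m$ still chargeable. Then the total charge is $\sum_m T_m Z_m$, and
\[
\mathbb{E}\Big[\sum_m T_m Z_m\Big] = \sum_m \mathbb{E}[T_m Z_m].
\]
Using $\mathbb{E}[T_{m+1}\mid \mathcal{H}_m,Z_m] = T_m\,\mathbb{E}[S_m\mid\mathcal{H}_m,Z_m] \le T_m(1-qZ_m)$, so $\mathbb{E}[T_{m+1}] \le \mathbb{E}[T_m] - q\,\mathbb{E}[T_m Z_m]$, i.e. $\mathbb{E}[T_m Z_m] \le q^{-1}\big(\mathbb{E}[T_m]-\mathbb{E}[T_{m+1}]\big)$. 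Summing over $m$ and telescoping (with $T_1=1$ and $\mathbb{E}[T_m]\to 0$, since the recursion depth is finite as each child call has strictly fewer vertices) gives $\sum_m \mathbb{E}[T_m Z_m] \le q^{-1}$, which is exactly $\sum_{t:(i,j)\in t}\mathbb{E}[B_{i,j,t}] \le q^{-1}$.

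The main obstacle I expect is the careful conditioning: making precise that "the value $b$ charged to $(i,j)$ in $\mathcal{R}_m$" is measurable with respect to exactly the data the lemma hypothesis conditions on (the call $\mathcal{R}$, the chosen pivot, and enough of the rounding randomness to pin down $t$ and $b$) while the split event on line~\ref{line:partition-non-pivots} still has the claimed conditional probability. I would also need to argue cleanly that at most one $B_{i,j,t}$ per call is positive and that distinct calls handle disjoint triangles, so the double sum over $t$ really does reorganize into the single sum $\sum_m T_m Z_m$ without double counting; this uses that $\mathcal{A}_t$ occurs in at most one call, which is already noted in the text.
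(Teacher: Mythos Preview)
Your proposal is correct and is essentially the same argument as the paper's, just organized as a forward telescoping/martingale computation along the root-to-leaf chain of calls containing $(i,j)$ rather than as a bottom-up induction on $|V|$. Both proofs hinge on the identical observation that, conditioned on a charge of $b$ in the current call, the survival probability drops by at least $qb$, so the total expected charge is bounded by $q^{-1}$; your telescoping $\mathbb{E}[T_m Z_m]\le q^{-1}(\mathbb{E}[T_m]-\mathbb{E}[T_{m+1}])$ is exactly the unrolling of the paper's inductive step $b+(1-qb)q^{-1}=q^{-1}$.
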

\begin{proof}
    We consider a stronger argument on the calls: \emph{for any call $\algoname(V,x,u)$ such that $i,j\in V$, $\sum_{t\in\binom{V}{3}:(i,j)\in t} \expect[B_{i,j,t}]\leq q^{-1}$.}
    We shall prove it by induction in a bottom-up way. 
    The base case is when $|V|=2$ or $u=L$. 
    Because the algorithm returns the input distance at the beginning of the call, according to the definition, $B_{i,j,t}=0$ for any $i,j \in t\in \binom{V}{3}$.
    
    Consider any call $\algoname(V,x,u)$ such that $|V|\geq 3$ and $u<L$. 
    Consider any $(i,j)\in \binom{V}{2}$. 
    Suppose that the stronger argument holds for any $\algoname(V',x',u')$ satisfying $|V'|<|V|$. 
    If $i$ or $j$ is chosen as the pivot vertex in the call, $B_{i,j,t}=0$ for any $i,j\in t \in \binom{V}{3}$ because $(i,j)$ is not a non-pivot edge and $i, j$ do not simultaneously appear in any of its children calls.  
    We prove the stronger argument for this case.
    Otherwise, suppose that the pivot vertex is $k\neq i,j$.
    Let $t=(i,j,k)$.
    In this case, the expected total charge of $(i,j)$ equals the sum of its expected charge in triangle $t$ and its expected total charge in the child call involving both $i,j$ (if exists):
    \begin{align*}
        \sum_{t'\in \binom{V}{3}:i,j\in t'} \expect[B_{i,j,t'}] = \expect[B_{i,j,t}] + \sum_{\ell\geq u} \Pr[i,j\in V_{\ell}] \cdot \expect\Big[\sum\nolimits_{t'\in \binom{V_{\ell}}{3}:i,j\in t'} B_{i,j,t'} \;\Big|\; i,j\in V_{\ell}\Big]
    \end{align*}
    For any $b>0$, conditioning on $B_{i,j,t}=b$, we have $\sum_{\ell\geq u} \Pr[i,j\in V_{\ell}]\leq 1-q\cdot b$ according to the assumption of the lemma. 
    According to the induction hypothesis, (conditioning on $B_{i,j,t}=b$) $\expect\big[\sum\nolimits_{t'\in \binom{V_{\ell}}{3}:i,j\in t'} B_{i,j,t'} \big| i,j\in V_{\ell}\big]\leq q^{-1}$ for any possible $V_{\ell}\subset V$.
    Therefore, the stronger argument holds in this case because
    \begin{align*}
        \sum_{t'\in \binom{V}{3}:i,j\in t'} \expect[B_{i,j,t'}] 
        &
        \leq 
        \sum_{b\geq 0} \Pr[B_{i,j,t}=b]\cdot \big(b+(1-q\cdot b)\cdot q^{-1}\big)
        \\
        &
        =
        \sum_{b\geq 0} \Pr[B_{i,j,t}=b]\cdot q^{-1}
        =
        q^{-1}~.
    \end{align*}
\end{proof}


\section{5-Approximation Analysis for Complete Graphs}
\label{sec:ratio-complete}
In this section, we prove that Algorithm~\ref{alg:pivot} is a 5-approximation algorithm for unweighted \UMVD~on complete graphs by setting $\beta=0$ and 
an appropriate choice of $\alpha$. 
More specifically, we will prove the following theorem:
\begin{theorem}
    \label{thm:5-approx-complete}
    If $\alpha\in \big[\frac{3-\sqrt{5}}{2},0.5\big], \beta=0$, Algorithm~\ref{alg:pivot} is a randomized polynomial-time $\max\{\frac{3}{1-\alpha}, \frac{2}{\alpha}\}$-approximation algorithm for \UMVD~on complete graphs. 
    In particular, with $\alpha=0.4$, it is a randomized polynomial-time $5$-approximation for \UMVD~on complete graphs. 
\end{theorem}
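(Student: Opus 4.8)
The plan is to instantiate the framework of Section~\ref{subsec:improved-triple}. For complete graphs we have $E=\binom{[n]}{2}$, $E_{\varnothing}=\varnothing$, all weights equal to $1$, and $\beta=0$, so random edges are rounded according to the CCDF $\Pr[x'(i,j)\ge d_\ell]=y_\ell(i,j)$. By Lemma~\ref{lem:approx-ratio} it suffices to construct a charging scheme $\{B_{i,j,t}\}$ satisfying \eqref{eqn:improved-M-bound-by-B} and \eqref{eqn:total-edge-charge} with $\overline{B}_{\lowerr}\le\tfrac{3}{1-\alpha}$, $\overline{B}_{\highdet}\le\tfrac{2}{1-\alpha}$, and $\overline{B}_{\highrand}\le\tfrac1\alpha$; these feed into Lemma~\ref{lem:approx-ratio} to give ratio $\max\{\overline{B}_{\lowerr},\,\overline{B}_{\highdet}+\tfrac1{1-\alpha},\,\overline{B}_{\highrand}+\tfrac1\alpha\}\le\max\{\tfrac3{1-\alpha},\tfrac2\alpha\}$, which is $5$ at $\alpha=0.4$. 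The content is to produce such a scheme, and the role of the hypothesis $\alpha\ge\tfrac{3-\sqrt5}{2}$ will be to make the scheme feasible.

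I would define the scheme per triangle $t=(i,j,k)$, conditioned on $\mathcal{A}_t$: say $k$ is the pivot and $(i,j)$ the non-pivot edge. The decisive split is whether $x'(i,k)=x'(j,k)$. If $x'(i,k)\ne x'(j,k)$ (the ``different-level'' case), the algorithm sets $x'(i,j)=\max\{x'(i,k),x'(j,k)\}$ and $i,j$ land in distinct parts on line~\ref{line:partition-non-pivots}, hence are separated with probability $1$; here I charge $(i,j)$ just enough to pay for a possible modification of $(i,j)$, which is the regime the classical triple-based analysis already handles. If $x'(i,k)=x'(j,k)=d_m$ (the ``same-level'' case) and $x(i,j)>d_m$, then $(i,j)$ is reset to $d_m$ --- modified but \emph{not} separated --- so I set $B_{i,j,t}=0$ and pay for this modification out of the charges placed on $(i,k)$ and $(j,k)$ in the counterfactual pivot choices pivot $=i$ and pivot $=j$: in those worlds the respective edge is the non-pivot edge, and since $(i,j)$, being low-cost, is deterministic (Corollary~\ref{cor:low-err-are-det}) and fixed to its dominant distance $d_{\ell^*(y,i,j)}$ with $\ell^*(y,i,j)<m$ (Lemma~\ref{lem:low-error-modify}), that edge falls into a different-level configuration and separates with probability $1$. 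The structural input making this affordable is the triangle inequality $y_{\ell}(i,j)\le y_{\ell}(i,k)+y_{\ell}(j,k)$ at the dominant level $\ell=\ell^*(y,i,j)$ of $(i,j)$ in the (feasible) truncated LP $y$ of that call: because $(i,j)$ is deterministic at level $\ell$ while $(i,k),(j,k)$ are rounded up to level $m>\ell$, this yields a lower bound of the shape $c^*(i,k)+c^*(j,k)\gtrsim 1-\alpha$ --- the concrete form of ``every bad triangle has an edge of large LP contribution'' --- which lets me split an $O(1)$ amount of charge across $(i,k),(j,k)$ so each $B$ stays $\le\tfrac3{1-\alpha}$ (resp.\ $\le\tfrac1\alpha$ on a high-cost edge).

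With the scheme fixed, \eqref{eqn:improved-M-bound-by-B} is verified by the above per-triangle case analysis (low-cost edges being deterministic, only these two cases can arise for them), and \eqref{eqn:total-edge-charge} follows from Lemma~\ref{lem:bound-of-sum-B}: every nonzero charge $B_{i,j,t}=b$ that is actually placed sits in a configuration where $i$ and $j$ separate on line~\ref{line:partition-non-pivots} with probability at least $q\cdot b$, where the quantitative separation bounds are immediate for deterministic pivot edges and come from the $\Delta y$ distribution for random ones. Substituting the resulting $\overline{B}_{\lowerr},\overline{B}_{\highdet},\overline{B}_{\highrand}$ into Lemma~\ref{lem:approx-ratio} gives the stated ratio $\max\{\tfrac3{1-\alpha},\tfrac2\alpha\}$. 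The exact threshold $\alpha\ge\tfrac{3-\sqrt5}{2}$ (the smaller root of $\alpha^2-3\alpha+1=0$) arises as the feasibility condition for the charge split in the ``mixed'' same-level triangle, where $(i,j)$ and one pivot edge are low-cost and the other pivot edge is high-cost: the portion dumped on the low-cost edge must keep $\overline{B}_{\lowerr}\le\tfrac3{1-\alpha}$ while the portion on the high-cost edge keeps $\overline{B}_{\highrand}\le\tfrac1\alpha$, and these two constraints are jointly satisfiable exactly when $\alpha$ is at least that value.

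The step I expect to be the main obstacle is the analysis of the same-level (``$(d,d,r)$-same'') triangles: (i) extracting the sharp lower bound on $c^*(i,k)+c^*(j,k)$ from the per-level LP triangle inequalities and the deterministic threshold $1-\alpha$, while carefully tracking which of the three edges are low- vs.\ high-cost and whether each pivot edge is deterministic or random (so that whenever the non-pivot edge fails to separate, there is enough LP budget elsewhere); (ii) choosing the split of the $O(1)$ charge between the two pivot edges so that all three of $\overline{B}_{\lowerr}$, $\overline{B}_{\highdet}+\tfrac1{1-\alpha}$, $\overline{B}_{\highrand}+\tfrac1\alpha$ stay within $\max\{\tfrac3{1-\alpha},\tfrac2\alpha\}$ --- this is where $\tfrac{3-\sqrt5}{2}$ is forced; and (iii) ensuring via Lemma~\ref{lem:bound-of-sum-B} that an unseparated, modified low-cost edge (which persists into a child call) is itself never charged, so that its later re-modifications are always paid for by other edges.
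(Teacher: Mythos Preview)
Your high-level framework is exactly right: instantiate Lemma~\ref{lem:approx-ratio} with $\overline{B}_{\lowerr}\le\max\{\tfrac{3}{1-\alpha},\tfrac{2}{\alpha}\}$, $\overline{B}_{\highdet}\le\tfrac{2}{1-\alpha}$, $\overline{B}_{\highrand}\le\tfrac{1}{\alpha}$, and verify the hypotheses via Lemma~\ref{lem:bound-of-sum-B}. The gap is in the charging scheme itself.

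Your plan is to set $B_{i,j,t}=0$ whenever the low-cost non-pivot edge $(i,j)$ is modified but \emph{not} separated (pivot $k$, $x'(i,k)=x'(j,k)=d_m$, $x(i,j)>d_m$), and to pay for this by charging $(j,k)$ and $(i,k)$ in the counterfactual pivots, claiming those edges then ``separate with probability $1$''. That claim is false. In the counterfactual pivot $=i$, the pivot edges are $(i,j)$ (deterministic at level $\ell^*(y,i,j)$) and $(i,k)$; if $(i,k)$ is \emph{random}, it lands at level $\ell^*(y,i,j)$ with probability $\Delta y_{\ell^*(i,j)}(i,k)$, which can be as large as $1-\alpha$, so $(j,k)$ need not separate at all. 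Concretely this happens in any $(d,r,r)$-triangle (one deterministic low-cost edge, two random edges), and your budget claim $c^*(i,k)+c^*(j,k)\gtrsim 1-\alpha$ also does not follow there: the level $m$ is a \emph{realized} level of a random edge, not its dominant level, so you cannot invoke Lemma~\ref{lem:stronger-y-bounded-by-c*} via $y_{\ell^*-1}$ as the paper does in the $(d,d,d)$ case.

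The paper's route is different and avoids this issue entirely. It classifies triangles by the deterministic/random \emph{pattern} (not by realized outcomes), and in the troublesome $(d,d,r)$-same case it \emph{does} charge the non-pivot deterministic edge, setting $B_{i,j,t}=2$; the point is that one pivot edge is random, so by Lemma~\ref{lem:prob-disappear} separation happens with probability at least $\alpha$, giving $q\ge\alpha/2$ in Lemma~\ref{lem:bound-of-sum-B} and hence total charge $\le 2/\alpha$. The modification probability itself is controlled not by a crude ``$\ge 1-\alpha$'' budget but by the sharper bound $\Pr[x'(j,k)>d_{\ell^*(i,j)}]=y_{\ell^*(i,j)-1}(j,k)\le y_{\ell^*(i,j)-1}(i,j)+y_{\ell^*(i,k)-1}(i,k)$, each term then tied to $c^*$ via Lemmas~\ref{lem:stronger-y-bounded-by-c*} and~\ref{lem:y-bounded-by-0}. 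Finally, the threshold $\alpha\ge\tfrac{3-\sqrt5}{2}$ is \emph{not} forced by the $(d,d,r)$-same analysis as you surmise; it arises in the $(d,r,r)$ case (Appendix~\ref{app:edge-charging-for-B}), where one needs $2-\alpha\le 1+2\alpha-\alpha^2$ to make the charge $B=1$ on all three edges suffice.
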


Recall that we have $E=\binom{[n]}{2}, E_{\varnothing}=\emptyset$ and $\forall (i,j)\in E, w(i,j)=1$ in this case. 
For convenience, in the rest of this section, we assume $\beta=0$.
Upon this assumption, the CCDF function~\eqref{eqn:distance-ccdf} for random edges in $E$ becomes:
\begin{align}
    \label{eqn:complete-distance-ccdf}
    \forall \,\ell\in [L],
    \quad 
    \Pr[x'(i,j)\geq d_{\ell}] = y_\ell(i,j)~.
\end{align}

To prove Theorem~\ref{thm:5-approx-complete}, we shall use Lemma~\ref{lem:approx-ratio}, ~\ref{lem:bound-of-sum-B} and the following Lemma~\ref{lem:edge-charging-for-B}, which presents a charging scheme that meets the two conditions we need for the approximation ratio. 
\begin{lemma}
    \label{lem:edge-charging-for-B}
    If $\alpha\in \big[\frac{3-\sqrt{5}}{2},0.5\big]$, there exists a charging scheme on complete graphs such that 
    \begin{enumerate}
        \item for any triangle $t\in \binom{[n]}{3}$, 
        \begin{align*}
            \sum_{(i,j)\in t} \expect[M_{i,j,t}|\mathcal{A}_t]\leq \sum_{(i,j)\in t} \expect[B_{i,j,t}|\mathcal{A}_t]\cdot c^*(i,j)~.
        \end{align*}
        \item for any $b>0$, $(i,j)\in E$ and any $t\owns i,j$, in any recursive call, conditioning on $\mathcal{A}_t$ and that $(i,j)$ is a non-pivot edge charged by $B_{i,j,t}=b$, the probability that $i$ and $j$ are partitioned into different sets on line~\ref{line:partition-non-pivots} of Algorithm~\ref{alg:pivot} is at least $q_{i,j}\cdot b$, where 
            \begin{align*}
                q_{i,j} 
                =
                \begin{cases}
                   \min\{\frac{1-\alpha}{3},\frac{\alpha}{2}\}  & \text{if $(i,j)\in E_{\lowerr}$}\\
                   \frac{1-\alpha}{2}  & \text{if $(i,j)\in E_{\highdet}$}\\
                   \alpha  & \text{if $(i,j)\in E_{\highrand}$}
                \end{cases}
            \end{align*}
    \end{enumerate}
\end{lemma}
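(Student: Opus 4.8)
The plan is to construct the charges $B_{i,j,t}$ explicitly by a case analysis and then verify properties~1 and~2. Fix a triangle $t=(i,j,k)$. Conditioned on $\mathcal A_t$ the algorithm is inside one fixed recursive call with a uniform pivot among $i,j,k$, and the only edge that can receive a positive charge in a given mode is the edge \emph{opposite} the pivot; since the three modes each carry probability $\tfrac13$ given $\mathcal A_t$, property~1 becomes, after cancelling $\tfrac13$, the single inequality ``(total modification of low‑cost opposite edges over the three modes) $\le$ (total $B\cdot c^*$ over the three modes)'', while property~2 is checked separately for each edge in each mode. Throughout I use: Corollary~\ref{cor:low-err-are-det} (low‑cost and initially‑deterministic edges are deterministic in every call), Lemma~\ref{lem:low-error-modify} (a low‑cost edge equals its dominant distance level and is never modified as a pivot edge), Lemma~\ref{lem:highdet-error-bound} ($c^*>1-\alpha$ on $E_{\highdet}$), Lemma~\ref{lem:delta-y-inc} (the increments $\Delta y_{\ell^*}$ only grow down the recursion), and — crucially — the LP triangle inequality~\eqref{eqn:umvd-lp-triangle} at individual levels. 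Given such a scheme, Lemmas~\ref{lem:approx-ratio} and~\ref{lem:bound-of-sum-B} (applied with $q_{i,j}$ as in the statement) finish the argument.

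I first handle the modes where both pivot edges at the pivot are deterministic. If their dominant levels coincide (level $\ell$) the opposite edge is forced to $\min\{x(\cdot),d_\ell\}$; if they differ ($\ell_1\neq\ell_2$) it is forced to $d_{\min(\ell_1,\ell_2)}$. A low‑cost opposite edge is modified only if its current level (which by Lemma~\ref{lem:low-error-modify} is its dominant level) differs from the forced one; writing the LP triangle inequality at the level just below the smallest of the three dominant levels — where one edge's $y$ exceeds $1-\alpha$ while each pivot edge's $y$ is below $\alpha$ — and tracing the inequality back to $y^*$ (here $\alpha>\tfrac13$ is used) shows that in such a configuration all three edges of $t$ have LP cost bounded below (by $\alpha$, resp.\ $>1-\alpha$, if high‑cost, and by a positive constant of the form $1-2\alpha$ or $\tfrac{1-\alpha}{2}$ if low‑cost). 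Moreover whenever the forced value differs from the other pivot edge's level, the two endpoints of the modified edge land in different parts and are \emph{separated in the very same round}; so property~2 for a charge placed then just asks $b\le 1/q$. The charge for each modification is therefore routed onto the edge of $t$ with the best cost floor, in the mode in which that edge is the opposite edge; the cost‑floor estimates make property~1 hold and the simultaneous separation makes property~2 hold. A deterministic edge in the ``different levels'' case is treated the same way (the generalized ``Example'' argument — again the LP triangle inequality — shows the only possible modification is of this ``separating'' kind).

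The remaining modes have a random pivot edge, whose distance I analyze through the CCDF~\eqref{eqn:complete-distance-ccdf}. The governing observation is: whenever the opposite edge is pushed \emph{upward} (to a random sample strictly above the other pivot edge's level, or to the larger of two distinct deterministic levels) its endpoints are separated in that round, so again property~2 collapses to $b\le 1/q$. This covers the ``bad'' $(d,d,r)$‑same triangles: when $k$ is the pivot the third edge is a high‑cost random edge whose modification is paid for by the $E_{\highrand}$ term of~\eqref{eqn:bound-E-alg} and whose endpoints are not separated, so nothing is charged in that mode; but in the other two modes a low‑cost third edge can be pushed up by the random pivot edge $(i,j)$, with probability $y_{\ell-1}(i,j)$, and the LP triangle inequality at level $\ell-1$ bounds $y_{\ell-1}(i,j)\le y_{\ell-1}(i,k)+y_{\ell-1}(j,k)$, which is in turn controlled by the LP costs of the two deterministic pivot edges. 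The charge for this modification is split between the (possibly tiny‑cost) low‑cost opposite edge itself, capped at $1/q_{\lowerr}$, and a cost‑floored pivot edge charged in its own opposite‑edge mode, where its separation probability is only $\ge\alpha$ (that pivot edge sits across from the random edge, whose dominant‑level increment is $\le 1-\alpha$). Optimizing this split against the cost‑floor and the bound on $y_{\ell-1}(i,j)$ produces the inequality $(1-\alpha)^2\le\alpha$, i.e. $\alpha\ge\tfrac{3-\sqrt5}{2}$; this is the only step that forces the lower bound on $\alpha$.

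Finally I verify property~2 in the residual situations where no upward modification occurs — both pivot edges random, or a deterministic pivot edge whose forced level happens to equal the other's — by computing $\Pr[\text{endpoints separated}]$ directly from the CCDF and comparing with $q_{i,j}\cdot b$; the worst of these computations over a low‑cost opposite edge pins $q_{\lowerr}=\min\{\tfrac{1-\alpha}{3},\tfrac{\alpha}{2}\}$, over an initially‑deterministic high‑cost edge $q_{\highdet}=\tfrac{1-\alpha}{2}$, and over an initially‑random high‑cost edge $q_{\highrand}=\alpha$. I expect the main obstacle to be making the case analysis genuinely exhaustive while keeping consistent, across the three pivot modes of each triangle, the bookkeeping of which edge absorbs which modification; in particular the ``cost‑floor'' step — transferring a lower bound on $y_{\ell-1}$ of a pivot edge in the \emph{truncated} LP of a deep recursive call back to the global quantity $c^*$ defined from $y^*$ — must be argued carefully, since in a deep call the original input level $\tilde\ell$, the evolving distance $x$, and the truncated solution $y$ can all disagree.
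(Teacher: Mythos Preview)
Your plan follows the paper's approach --- a case analysis over triangle types by the pattern of deterministic versus random edges, with explicit charges $B_{i,j,t}$ and separate verification of properties~1 and~2 --- and you correctly identify the main tools (Corollary~\ref{cor:low-err-are-det}, Lemma~\ref{lem:low-error-modify}, Lemma~\ref{lem:prob-disappear}, and the LP triangle inequality). However, there is a real gap.

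The case you have essentially skipped is the $(d,r,r)$-triangle: one deterministic edge $(j,k)$ (possibly low-cost) and two random edges $(i,j),(i,k)\in E_{\highrand}$. Here the only low-cost modification is to $(j,k)$ in the single mode where $i$ is pivot, and bounding $\expect[M_{j,k,t}\mid\mathcal A_t]$ by $\tfrac13\big(c^*(i,j)+c^*(i,k)+c^*(j,k)\big)$ is the most delicate step of the entire lemma. In the paper it needs a subcase split on the relative order of $\ell^*(y,i,j),\ell^*(y,i,k),\ell^*(y,j,k)$ and, in the worst subcase, a weighted average of two different upper bounds for Eqn.~\eqref{eqn:type-4-error}. \emph{This} is where the constraint $\alpha\ge\tfrac{3-\sqrt5}{2}$ (equivalently $2-\alpha\le 1+2\alpha-\alpha^2$) genuinely appears --- not in the $(d,d,r)$-same case as you assert. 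Your sketch only mentions ``both pivot edges random'' when checking property~2; for property~1 you have no argument, and the ``upward modification separates'' heuristic does not cover this mode, since with two random pivot edges the modification to $(j,k)$ can be \emph{downward} (both samples equal a common level strictly below $d_{\ell^*(y,j,k)}$) without separating $j$ and $k$.

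Two smaller points. First, the ``cost-floor'' worry you flag --- passing from the truncated $y$ in a deep call back to $c^*$ --- is handled cleanly by Lemma~\ref{lem:stronger-y-bounded-by-c*} (which rests on Lemma~\ref{lem:delta-y-inc}), together with Lemma~\ref{lem:y-bounded-by-0} for the $E_{\highrand}$ refinement; it is not an obstacle. Second, your $(d,d,r)$-same analysis with the ``split charge'' and an optimization yielding $(1-\alpha)^2\le\alpha$ is more complicated than necessary: the paper simply sets $B_{j,k,t}=0$ on the random edge and charges only the two deterministic edges (with $B=2$ if in $E_{\lowerr}$, $\tfrac{2\alpha}{1-\alpha}$ if in $E_{\highdet}$, $0$ if in $E_{\highrand}$), using $y_{\ell^*-1}\le c^*$ from Lemma~\ref{lem:stronger-y-bounded-by-c*} directly. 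No optimization, and no $(1-\alpha)^2\le\alpha$ condition, arises in that case.
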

Assuming the correctness of the above lemma, we can obtain the proof of Theorem~\ref{thm:5-approx-complete}.
\begin{proof}[Proof of Theorem~\ref{thm:5-approx-complete}]
Because of Lemma~\ref{lem:bound-of-sum-B} and the second bullet of Lemma~\ref{lem:edge-charging-for-B}, we can present the second condition of Lemma~\ref{lem:approx-ratio} as follows:
\begin{align*}
\forall (i,j)\in E, \quad 
\sum_{t:i,j\in t} \expect[B_{i,j,t}]
\leq
\begin{cases}
    \max\{\frac{3}{1-\alpha}, \frac{2}{\alpha}\} & \text{if $(i,j)\in E_{\lowerr}$}~,\\
    \frac{2}{1-\alpha}  & \text{if $(i,j)\in E_{\highdet}$}~,\\
    \frac{1}{\alpha}  & \text{if $(i,j)\in E_{\highrand}$}~.
\end{cases}
\end{align*}
Because the first bullet of Lemma~\ref{lem:edge-charging-for-B} meets that of Lemma~\ref{lem:approx-ratio}, if $\alpha\in \big[\frac{3-\sqrt{5}}{2},0.5\big]$, Algorithm~\ref{alg:pivot} is a $\max\{\frac{3}{1-\alpha}, \frac{2}{\alpha}\}$-approximation for \UMVD.
\end{proof}

In the rest of this section, we present the charging scheme and prove Lemma~\ref{lem:edge-charging-for-B}. 
The charging scheme defines $B_{i,j,t}$ differently for different classes of triangles. 
Recall that given the truncated solution $y$, we say an edge is deterministic if $\Delta y_{\ell^*(i,j)}(i,j)>1-\alpha$, and otherwise, we say it is random.
The triangles are classified according to the number of deterministic edges and the dominant levels of the deterministic edges in the triangles:
\begin{description}
    \item[$(d,d,d)$-triangles] have three deterministic edges.
    \item[$(d,d,r)$-same-triangles] have two deterministic edges and one random edge, and the dominant levels of the deterministic edges are the same.
    \item[$(d,d,r)$-diff-triangles] have two deterministic edges and one random edge, and the dominant levels of the deterministic edges are different.
    \item[$(d,r,r)$-triangles] have one deterministic edge and two random edges.
    \item[$(r,r,r)$-triangles] have three random edges.
\end{description}

Next, we present three useful lemmas. 
In the discussion of different classes, we have the following observation: the probability that a low-cost edge is modified is highly related to the values of $y_{\ell^*(i,j)-1}(i,j)$ and $1-\Delta y_{\ell^*(i,j)}(i,j)$ on each edge $(i,j)\in t$ of the triangle. 
The first two lemmas give upper bounds for these terms based on the LP contributions $c^*(i,j)$ of the edges and will be extensively used to prove the first bullet of Lemma~\ref{lem:edge-charging-for-B}.
We note that the proofs of these two lemmas are independent of $E$ and $w(i,j)$ and can work for all three cases.
\begin{lemma}
\label{lem:stronger-y-bounded-by-c*}
In any recursive call with vertex set $V$ and truncated LP solution $y$, for any edge $(i,j)\in \binom{V}{2}$, $1-\Delta y_{\ell^*(i,j)}(i,j)$ can be upper bounded by
\begin{align*}
    \begin{cases}
        c^*(i,j) & \text{if $(i,j)\in E_{\lowerr} \cup E_{\highrand}$}~,\\
        \frac{\alpha}{1-\alpha}\cdot c^*(i,j) & \text{if $(i,j)\in E_{\highdet}$}~.
    \end{cases}
\end{align*} 
In particular, $y_{\ell^*(i,j)-1}(i,j)$ and $1-y_{\ell^*(i,j)}(i,j)$ satisfy the same upper bound. 
\end{lemma}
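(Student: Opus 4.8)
The plan is to trace the quantity $1-\Delta y_{\ell^*(i,j)}(i,j)$ back to the root call, where the truncated LP solution coincides with $y^*$, and then to invoke the relationships between $\Delta y^*_{\ell^*(i,j)}(i,j)$, $\Delta y^*_{\tilde\ell(i,j)}(i,j)$, and $c^*(i,j)$ established earlier in Section~\ref{subsec:basic-property}. The key monotonicity input is Lemma~\ref{lem:delta-y-inc}, which says $\Delta y_{\ell^*(i,j)}(i,j)\geq \Delta y^*_{\ell^*(i,j)}(i,j)$ in any recursive call, so it suffices to prove the stated bounds with $\Delta y^*_{\ell^*(i,j)}(i,j)$ in place of $\Delta y_{\ell^*(i,j)}(i,j)$; since $1-\Delta y_{\ell^*(i,j)}(i,j)\leq 1-\Delta y^*_{\ell^*(i,j)}(i,j)$, the root-call bound dominates.

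First I would handle the case $(i,j)\in E_{\lowerr}\cup E_{\highrand}$. For these edges the dominant level in $y^*$ is in fact $\tilde\ell(i,j)$ itself — for low-cost edges this was already observed in Section~\ref{subsec:basic-property} (where $\Delta y^*_{\ell^*(i,j)}(i,j)\geq\Delta y^*_{\tilde\ell(i,j)}(i,j)=1-c^*(i,j)>1-\alpha$), and more directly, since $\ell^*$ maximizes $\Delta y^*_\ell(i,j)$ we always have $\Delta y^*_{\ell^*(i,j)}(i,j)\geq\Delta y^*_{\tilde\ell(i,j)}(i,j)=1-c^*(i,j)$, hence $1-\Delta y^*_{\ell^*(i,j)}(i,j)\leq c^*(i,j)$. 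This inequality is unconditional, so it gives the first branch for both $E_{\lowerr}$ and $E_{\highrand}$ at once.

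For the remaining case $(i,j)\in E_{\highdet}$ I would combine two facts: (1) $1-\Delta y^*_{\ell^*(i,j)}(i,j)\leq c^*(i,j)$ as above, and (2) the tighter lower bound $c^*(i,j)>1-\alpha$ from Lemma~\ref{lem:highdet-error-bound}. Since $\Delta y^*_{\ell^*(i,j)}(i,j)>1-\alpha$ by definition of $E_{\highdet}$, we get $1-\Delta y^*_{\ell^*(i,j)}(i,j)<\alpha$, and multiplying the trivial bound through yields $1-\Delta y^*_{\ell^*(i,j)}(i,j)=\bigl(1-\Delta y^*_{\ell^*(i,j)}(i,j)\bigr)\cdot\frac{\alpha}{\alpha}<\bigl(1-\Delta y^*_{\ell^*(i,j)}(i,j)\bigr)\cdot\frac{c^*(i,j)}{1-\alpha}\leq\frac{\alpha}{1-\alpha}c^*(i,j)$, using $1-\Delta y^*_{\ell^*(i,j)}(i,j)<\alpha$ in the last step and $c^*(i,j)/(1-\alpha)>1$ in the middle step; this needs a little care to order the inequalities correctly, but it is routine. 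Finally, for the "in particular" clause: in any recursive call with level $u$, by Lemma~\ref{lem:feasible-y-in-calls} we have $y_0(i,j)=0$ and $y_L(i,j)=1$, so $y_{\ell^*(i,j)-1}(i,j)\leq 1-\Delta y_{\ell^*(i,j)}(i,j)$ and $1-y_{\ell^*(i,j)}(i,j)\leq 1-\Delta y_{\ell^*(i,j)}(i,j)$ because each of $y_{\ell^*-1}$ and $1-y_{\ell^*}$ is one of the two "outside" parts of the unit mass split at level $\ell^*$, both bounded above by their sum $1-\Delta y_{\ell^*}$.

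I expect the main obstacle to be purely bookkeeping: making sure the case $(i,j)\in E_{\highdet}$ is argued with the inequalities in the right direction (the factor $\frac{\alpha}{1-\alpha}$ is smaller than $1$ when $\alpha\leq\frac12$, so the claimed bound is actually stronger than the trivial $c^*(i,j)$ bound, and the argument genuinely needs Lemma~\ref{lem:highdet-error-bound} together with $\alpha\leq\frac12$), and being careful that every step is stated for the truncated $y$ in an arbitrary recursive call rather than only for $y^*$, which is exactly what the appeal to Lemma~\ref{lem:delta-y-inc} buys us.
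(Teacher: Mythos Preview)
Your proposal is correct and follows essentially the same approach as the paper: reduce to the root call via Lemma~\ref{lem:delta-y-inc}, use the argmax property of $\ell^*$ to get the $c^*(i,j)$ bound for all edges, and for $E_{\highdet}$ combine the definition of initially deterministic with Lemma~\ref{lem:highdet-error-bound}. Your chain for the $E_{\highdet}$ case is a bit more roundabout than the paper's direct $1-\Delta y^*_{\ell^*}<\alpha<\tfrac{\alpha}{1-\alpha}c^*(i,j)$, and the aside that the dominant level equals $\tilde\ell$ is not actually true for $E_{\highrand}$ edges, but since you immediately fall back on the argmax inequality this does not affect the argument.
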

\begin{proof}
For any $(i,j)\in \binom{[n]}{2}$, because of Lemma~\ref{lem:delta-y-inc}, we have
\begin{align*}
    1-\Delta y_{\ell^*(i,j)}(i,j) 
    \leq 
    1-\Delta y^*_{\ell^*(i,j)}(i,j)
    =
    c^*(i,j)~.
\end{align*}
In particular, for any $(i,j)\in E_{\highdet}$, we have
\begin{align*}
    1-\Delta y_{\ell^*(i,j)}(i,j) 
    &
    \leq 
    1-\Delta y^*_{\ell^*(i,j)}(i,j)
    \tag{Lemma~\ref{lem:delta-y-inc}}
    \\
    &
    < 
    \alpha
    \tag{$(i,j)$ is initially deterministic}
    \\
    &
    <
    \frac{\alpha}{1-\alpha}\cdot c^*(i,j)~.
    \tag{Lemma~\ref{lem:highdet-error-bound}}
\end{align*}
In particular, the second lemma gives an improved bound for a certain type of edge under some conditions, which helps us to refine the analysis.
\begin{lemma}
    \label{lem:y-bounded-by-0}
    When an edge $(i,j)\in E_{\highrand}$ appears as deterministic in some recursive call with truncated LP solution $y$, $y_{\ell^*(i,j)-1}(i,j)=0$.
\end{lemma}
\begin{proof}
Because $(i,j)\in E_{\highrand}$, $\forall \ell\in [L], \Delta y^*_{\ell}(i,j)\leq 1-\alpha$. 
Consider any call $\algoname(V, x, u)$ with truncated LP solution $y$. 
For any $\ell>u$, $\Delta y_{\ell}(i,j)=\Delta y^*_{\ell}(i,j)\leq 1-\alpha$. 
Since $(i,j)$ is deterministic in the call and $y_{\ell}(i,j)=0$ for $\ell<u$, the dominant level $\ell^*(y,i,j)$ can only be $u$. 
Therefore, $y_{\ell^*(i,j)-1}(i,j)=0$.
\end{proof}

In particular, because $1-\Delta y_{\ell^*(i,j)}(i,j) = y_{\ell^*(i,j)-1}(i,j)+1-y_{\ell^*(i,j)}(i,j)$ and $\forall \ell\in [L], y_{\ell}(i,j)\in [0,1]$, $y_{\ell^*(i,j)-1}(i,j)$ and $1-y_{\ell^*(i,j)}(i,j)$ satisfy the same upper bound. 
\end{proof}
The third lemma gives lower bounds for the probabilities that the endpoints of a non-pivot edge are partitioned into different sets on line~\ref{line:partition-non-pivots} of the algorithm, which will be used for the second bullet of Lemma~\ref{lem:edge-charging-for-B}.
The proof of this lemma follows on the fact that the distance on a random edge can only equal a fixed value with probability at most $1-\alpha$, i.e., $\Delta y_{\ell}(i,j)\leq 1-\alpha$ for any random $(i,j)$ and $\ell\in [L]$.
\begin{lemma}
    \label{lem:prob-disappear}
    Consider any recursive call $\algoname(V, x, u)$ with $|V|>2$.
    Suppose $i$ is the pivot vertex of the call. 
    For any $(j,k)\in \binom{V\setminus\{i\}}{2}$, the probability that $j,k$ are partitioned into different sets on line~\ref{line:partition-non-pivots} can be lower bounded by
    \begin{itemize}
        \item $1$ if both $(i,j)$ and $(i,k)$ are deterministic in the call but they have different dominant levels, or
        \item $\alpha$ if at least one of $(i,j)$ and $(i,k)$ is random.
    \end{itemize}
\end{lemma}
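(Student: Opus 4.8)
The plan is to directly analyze the two partition outcomes $x'(i,j)$ and $x'(i,k)$ produced in the recursive call, and to lower bound the probability that they differ (since $j,k$ get separated on line~\ref{line:partition-non-pivots} exactly when $x'(i,j)\neq x'(i,k)$). The first bullet is immediate: if both $(i,j)$ and $(i,k)$ are deterministic, then $x'(i,j)=d_{\ell^*(i,j)}$ and $x'(i,k)=d_{\ell^*(i,k)}$ are fixed (non-random), and by hypothesis $\ell^*(i,j)\neq \ell^*(i,k)$, so $x'(i,j)\neq x'(i,k)$ with probability $1$.

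For the second bullet, suppose without loss of generality that $(i,j)$ is random. The key fact I would invoke is that for any random edge $(i,j)$ and any level $\ell\in[L]$, we have $\Delta y_{\ell}(i,j)\le 1-\alpha$: this is exactly the defining condition of a random edge in $E$ (with $\beta=0$ in this section the CCDF is $\Pr[x'(i,j)\ge d_\ell]=y_\ell(i,j)$, so $\Pr[x'(i,j)=d_\ell]=\Delta y_\ell(i,j)$), so the distance of $(i,j)$ equals any particular value $d_\ell$ with probability at most $1-\alpha$. I would condition on the (possibly random) outcome of $x'(i,k)$; whatever value $d_{\ell_0}$ it takes, $\Pr[x'(i,j)\neq x'(i,k)\mid x'(i,k)=d_{\ell_0}] = 1-\Pr[x'(i,j)=d_{\ell_0}\mid x'(i,k)=d_{\ell_0}]$. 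Since the distances $x'(i,j)$ and $x'(i,k)$ are sampled independently across distinct non-pivot vertices $j\neq k$, this conditional probability is just $1-\Delta y_{\ell_0}(i,j)\ge 1-(1-\alpha)=\alpha$. Averaging over the outcome of $x'(i,k)$ gives $\Pr[x'(i,j)\neq x'(i,k)]\ge \alpha$, which is the claimed bound.

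The only place requiring a small amount of care is the independence claim: the rounding of each pivot edge $(i,v)$, $v\in V\setminus\{i\}$, in Algorithm~\ref{alg:pivot} is done per vertex $v$ using only the LP variables $\{y_\ell(i,v)\}_\ell$, with independent randomness for distinct $v$; hence $x'(i,j)$ and $x'(i,k)$ are independent once the pivot $i$ is fixed, and in particular $\Pr[x'(i,j)=d_{\ell_0}\mid x'(i,k)=d_{\ell_0}]=\Pr[x'(i,j)=d_{\ell_0}]=\Delta y_{\ell_0}(i,j)$. I expect this independence bookkeeping (and, symmetrically, handling the case where it is $(i,k)$ rather than $(i,j)$ that is random, or both are random — in which case one simply conditions on either one) to be the only mild obstacle; the probabilistic estimate itself is a one-line consequence of $\Delta y_\ell\le 1-\alpha$ for random edges.
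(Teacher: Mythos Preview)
Your proof is correct and follows essentially the same approach as the paper: both argue the first bullet directly from the deterministic assignment, and for the second bullet both use that a random edge satisfies $\Delta y_\ell(i,j)\le 1-\alpha$ for every $\ell$, together with independence of the samples $x'(i,j)$ and $x'(i,k)$, to bound $\Pr[x'(i,j)=x'(i,k)]\le 1-\alpha$. The paper writes this as a direct sum $1-\sum_\ell \Delta y_\ell(i,j)\Pr[x'(i,k)=d_\ell]$ rather than via conditioning on $x'(i,k)$, but the computation is identical.
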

\begin{proof}
    For the first bullet, according to the algorithm, the distances of $(i,j)$ and $(i,k)$ are deterministically different, and thus $j,k$ are partitioned into different sets with probability $1$.

    For the second bullet, suppose that the truncated LP solution in the call is $y$. 
    W.l.o.g., we assume that $(i,j)$ is random. 
    Note that this implies $\forall \ell\in [L], \Delta y_{\ell}(i,j)\leq \Delta y_{\ell^*(i,j)}(i,j)\leq 1-\alpha$.
    According to the CCDF~\eqref{eqn:complete-distance-ccdf}, the probability that $j$ and $k$ are partitioned into different sets on line~\ref{line:partition-non-pivots} is
    \begin{align*}
        1-\sum_{\ell\in [L]} \Pr[j,k\in V_{\ell}] 
        &
        = 
        1-\sum_{\ell\in [L]} \Pr[x'(i,j)=d_{\ell}]\cdot \Pr[x'(i,k)=d_{\ell}]
        \\
        &
        =
        1-\sum_{\ell\in [L]} \Delta y_{\ell}(i,j)\cdot \Pr[x'(i,k)=d_{\ell}]
        \\
        &
        \geq 
        1-\sum_{\ell\in [L]} (1-\alpha)\cdot \Pr[x'(i,k)=d_{\ell}]
        \\
        &
        =
        1-(1-\alpha)
        =
        \alpha
        ~.
    \end{align*}
\end{proof}

Next, we present the charging scheme and prove Lemma~\ref{lem:edge-charging-for-B} for each class of triangles.
Due to Corollary~\ref{cor:low-err-are-det}, there is no low-cost edge in $(r,r,r)$-triangles.
Therefore, we only need to discuss the remaining four classes of triangles.
Consider any triangle $t=(i,j,k)$ and any recursive call with a vertex set involving $i,j,k$ and truncated LP solution $y$.
For convenience, we will condition on $\mathcal{A}_t$ in the rest of the analysis in this section, and all the expectations will be automatically conditioned on $\mathcal{A}_t$. 
Note that the two pivot edges in $(d,r,r)$-triangles and $(d,d,r)$-diff-triangles always satisfy one of the conditions in Lemma~\ref{lem:prob-disappear}.
The endpoints of the non-pivot edges in these triangles will be partitioned into different sets with a constant probability (under the choice of $\alpha\in [\frac{3-\sqrt{5}}{2},0.5]$).
Hence, it should not be surprising to prove Lemma~\ref{lem:edge-charging-for-B} for these triangles, and we shall defer their proofs to Appendix~\ref{app:edge-charging-for-B}.

Recall our charging scheme. 
Conditioning on $\mathcal{A}_t$, we charge an edge in $t$ only when the edge is non-pivot. 
If an edge $(i',j')$ is not charged in $t$, we automatically assign $0$ to $B_{i',j',t}$.
For convenience, we will only specify $B_{i',j',t}$ when $(i',j')$ is the non-pivot edge in the rest of the analysis.

\paragraph{$(d,d,d)$-triangles.}
Because all three edges in the triangle are deterministic, we have $\Delta y_{\ell^*(i,j)}(i,j)$, $\Delta y_{\ell^*(i,k)}(i,k)$, $\Delta y_{\ell^*(j,k)}(j,k)> 1-\alpha$. 
Note that this implies that
\begin{align}
\label{eqn:y-of-a-det-edge}
    \forall i'\neq j' \in t, 
    \quad
    y_{\ell^*(i',j')-1}(i',j') < \alpha, 
    \;
    y_{\ell^*(i',j')}(i',j') > 1-\alpha~.
\end{align}

Note that the distance of any pivot edge $(u,v)\in t$ is set to its dominant distance level $d_{\ell^*(y,u,v)}$.
Because of Lemma~\ref{lem:low-error-modify}, the input distance of any low-cost edge $(u,v)\in t$ is also $d_{\ell^*(y,u,v)}$.
Therefore, a low-cost edge is modified only when it is non-pivot in the call and the ultrametric inequality is violated on the three dominant distance levels, i.e., there exists a permutation $(i',j',k')$ of $i,j,k$ such that $\ell^*(y,i',j')<\min\{\ell^*(y,j',k'), \ell^*(y,k',i')\}$ (equivalently, $d_{\ell^*(y,i',j')}>\max\{d_{\ell^*(y,j',k')}, d_{\ell^*(y,k',i')}\}$).

If there is no violation of the ultrametric inequality on these three deterministic levels, $M_{i,j,t}=M_{i,k,t}=M_{j,k,t}=0$. 
For this case, we define $B_{i,j,t},B_{i,k,t},B_{j,k,t}\defeq 0$, and Lemma~\ref{lem:edge-charging-for-B} then clearly holds. 

Otherwise, w.l.o.g., we assume that $\ell^*(y,j,k)<\min\{\ell^*(y,i,j), \ell^*(y,i,k)\}$. 
In this case, for any permutation $(i',j',k')$ of $i,j,k$, 
\[
    M_{i',j',t}=\ind(\text{$k'$ is the pivot vertex})\cdot \ind((i',j')\in E_{\lowerr}).
\]
Because each vertex in $t$ is chosen as the pivot with equal probability $1/3$ conditioning on $\mathcal{A}_t$, in this case,
\begin{align*}
    \sum_{(i',j')\in t} \expect[M_{i',j',t}] = \frac{1}3 \cdot \Big(\ind((i,j)\in E_{\lowerr}) + \ind((i,k)\in E_{\lowerr}) + \ind((j,k)\in E_{\lowerr})\Big)~.
\end{align*}
Because $y$ is feasible in~\eqref{eqn:umvd-lp} and because of Lemma~\ref{lem:stronger-y-bounded-by-c*},
\begin{align*}
c^*(i,j) + c^*(i,k)
&
\geq 
y_{\ell^*(i,j)-1}(i,j)+y_{\ell^*(i,k)-1}(i,k) 
\tag{Lemma~\ref{lem:stronger-y-bounded-by-c*}}
\\
&
\geq 
y_{\ell^*(j,k)}(i,j)+y_{\ell^*(j,k)}(i,k) 
\tag{LP constraint~\eqref{eqn:umvd-lp-increasing}}
\\
&
\geq
y_{\ell^*(j,k)}(j,k) 
>
1-\alpha
~.
\tag{LP constraint~\eqref{eqn:umvd-lp-triangle} \& Eqn.~\eqref{eqn:y-of-a-det-edge}}
\end{align*}
\begin{Remark}
If $\alpha\leq \frac{1}{3}$, $y_{\ell^*(i,j)-1}(i,j)+y_{\ell^*(i,k)-1}(i,k) \leq  2\alpha\leq 1-\alpha$ and one can easily find a contradiction. 
This implies no violation of the ultrametric inequality in this type of triangle if the parameter $\alpha$ is less than $\frac13$. 
\end{Remark}
Because of Lemma~\ref{lem:y-bounded-by-0}, Eqn.~\eqref{eqn:y-of-a-det-edge} and $\alpha\leq \frac{1}{2}$, if $(i,j)\in E_{\highrand}$ or $(i,k)\in E_{\highrand}$, $y_{\ell^*(i,j)-1}(i,j)+y_{\ell^*(i,k)-1}(i,k)<\alpha\leq 1-\alpha$, which violates the above inequality. 
Therefore, edges $(i,j),(i,k)\notin E_{\highrand}$.
Based on this fact, when the corresponding edge is non-pivot in the recursive call, we shall define $B_{i,j,k},B_{i,k,t},B_{j,k,t}$ as follows:
\begin{align*}
    B_{i,j,t}
    \defeq
    \begin{cases}
        \frac{3}{1-\alpha} & \text{if }(i,j)\in E_{\lowerr}\\
        \frac{2}{1-\alpha} & \text{if }(i,j)\in E_{\highdet}\\
    \end{cases}
    ~, 
    \quad
    B_{i,k,t}
    \defeq
    \begin{cases}
        \frac{3}{1-\alpha} & \text{if }(i,k)\in E_{\lowerr}\\
        \frac{2}{1-\alpha} & \text{if }(i,k)\in E_{\highdet}\\
    \end{cases}
    ~,
    \quad
    B_{j,k,t}
    \defeq
    0
    ~,
\end{align*}
which implies the first bullet of Lemma~\ref{lem:edge-charging-for-B} for this case:
\begin{align*}
    \sum_{(i',j')\in t} \expect[B_{i',j',t}]\cdot c^*(i',j')
    &
    =
    \frac{2+\ind((i,j)\in E_{\lowerr})}{3(1-\alpha)}\cdot c^*(i,j)
    +
    \frac{2+\ind((i,k)\in E_{\lowerr})}{3(1-\alpha)}\cdot c^*(i,k)
    \\
    &
    \geq
    \frac{\ind((i,j)\in E_{\lowerr}) + \ind((i,k)\in E_{\lowerr}) + \ind((j,k)\in E_{\lowerr})}{3(1-\alpha)} \cdot \big(c^*(i,j)+c^*(i,k)\big)
    \\
    &
    >
    \frac{\ind((i,j)\in E_{\lowerr}) + \ind((i,k)\in E_{\lowerr}) + \ind((j,k)\in E_{\lowerr})}{3} 
    =
    \sum_{(i',j')\in t} \expect[M_{i',j',t}]~,
\end{align*}
Further, when the edge $(i,j)$ (or $(i,k)$) is non-pivot, two pivot edges in the triangle are both deterministic but have different dominant levels. 
Because of Lemma~\ref{lem:prob-disappear}, the probability $i,j$ (or $i,k$) are then partitioned into different sets on line~\ref{line:partition-non-pivots} is at least $1$, which is at least
\begin{align*}
    \begin{cases}
        \frac{1-\alpha}{3}\cdot B_{i,j,t} & \text{if $(i,j)\in E_{\lowerr}$}\\
        \frac{1-\alpha}{2}\cdot B_{i,j,t} & \text{if $(i,j)\in E_{\highdet}$}
    \end{cases}
    \quad 
    \text{and}
    \quad
    \begin{cases}
        \frac{1-\alpha}{3}\cdot B_{i,k,t} & \text{if $(i,k)\in E_{\lowerr}$}\\
        \frac{1-\alpha}{2}\cdot B_{i,k,t} & \text{if $(i,k)\in E_{\highdet}$}
    \end{cases}
    ~.
\end{align*}
This definition satisfies the second bullet of Lemma~\ref{lem:edge-charging-for-B} for this case.


\paragraph{$(d,d,r)$-same-triangles.} 
W.l.o.g., we assume $(i,j)$ and $(i,k)$ are the deterministic edges. 
Because of Lemma~\ref{lem:low-error-modify} and $\ell^*(y,i,j)=\ell^*(y,i,k)$, the input distances of edges $(i,j)$ and $(i,k)$ are both $d_{\ell^*(y,i,j)}$.
Therefore, $M_{i,j,t}$ (or $M_{i,k,t}$) equals $1$ only when the pivot is $k$ (resp., $j$), it is low-cost, and the random distance $x'(j,k)>d_{\ell^*(y,i,j)}$, i.e.,
\begin{align*}
    M_{i,j,t} &= \ind(\text{$k$ is the pivot vertex})\cdot \ind((i,j)\in E_{\lowerr}) \cdot \ind(x'(j,k)>d_{\ell^*(y,i,j)})~,
    \\
    M_{i,k,t} &= \ind(\text{$j$ is the pivot vertex})\cdot \ind((i,k)\in E_{\lowerr}) \cdot \ind(x'(j,k)>d_{\ell^*(y,i,j)})~.
\end{align*}
According to the CCDF~\eqref{eqn:complete-distance-ccdf} of the random distance, the probability that $x'({j,k})>d_{\ell^*(y,i,j)}$ is 
\begin{align}
    y_{\ell^*(i,j)-1}(j,k) 
    &
    \leq 
    y_{\ell^*(i,j)-1}(i,j) + y_{\ell^*(i,j)-1}(i,k)
    \tag{LP constraint~\eqref{eqn:umvd-lp-triangle}}
    \\
    &
    =
    y_{\ell^*(i,j)-1}(i,j) + y_{\ell^*(i,k)-1}(i,k)~.
    \label{eqn:ddr-same-modify-prob}
\end{align}
Because each vertex in $t$ is chosen as the pivot with equal probability $1/3$ conditioning on $\mathcal{A}_t$, in this class of triangles,
\begin{align*}
    \sum_{(i',j')\in t} \expect[M_{i',j',t}] \leq \frac{2(y_{\ell^*(i,j)-1}(i,j) + y_{\ell^*(i,k)-1}(i,k))}{3}~.
\end{align*}

Accordingly, when the corresponding edge is non-pivot, we define $B_{i,j,t},B_{i,k,t},B_{j,k,t}$ as follows:
\begin{align*}
    B_{i,j,t}
    \defeq
    \begin{cases}
     2 & \text{if } (i,j)\in E_{\lowerr}\\
     \frac{2\alpha}{1-\alpha} & \text{if }(i,j)\in E_{\highdet}\\
     0 & \text{if } (i,j)\in E_{\highrand}
    \end{cases}
    ~, 
    \quad
    B_{i,k,t} 
    \defeq
    \begin{cases}
     2 & \text{if } (i,k)\in E_{\lowerr}\\
     \frac{2\alpha}{1-\alpha} & \text{if }(i,k)\in E_{\highdet}\\
     0 & \text{if } (i,k)\in E_{\highrand}
    \end{cases}
    ~,
    \quad
    B_{j,k,t}
    \defeq
    0
\end{align*}
Because of Lemma~\ref{lem:stronger-y-bounded-by-c*} and~\ref{lem:y-bounded-by-0}, this definition implies the first bullet of Lemma~\ref{lem:edge-charging-for-B}: 
\begin{align*}
    \sum_{(i',j')\in t}
    \expect[B_{i',j',t}] \cdot c^*(i',j')
    &
    \geq
    \frac{2}{3} \cdot y_{\ell^*(i,j)-1}(i,j) 
    +
    \frac{2}{3} \cdot y_{\ell^*(i,k)-1}(i,k)
    \geq
    \sum_{(i', j')\in t} \expect[M_{i',j',t}]~.
\end{align*} 

Further, when the edge $(i,j)$ (or $(i,k)$) is non-pivot, one of the pivot edges in the triangle is random.
Because of Lemma~\ref{lem:prob-disappear}, the probability $i,j$ (or $i,k$) are then partitioned into different sets on line~\ref{line:partition-non-pivots} is at least $\alpha$, which is at least
\begin{align*}
    \begin{cases}
        \frac{\alpha}{2}\cdot B_{i,j,t} & \text{if $(i,j)\in E_{\lowerr}$}\\
        \frac{1-\alpha}{2}\cdot B_{i,j,t} & \text{if $(i,j)\in E_{\highdet}$}\\
        1\cdot B_{i,j,t} & \text{if $(i,j)\in E_{\highrand}$}
    \end{cases}
    \quad
    \text{and}
    \quad
    \begin{cases}
        \frac{\alpha}{2}\cdot B_{i,k,t} & \text{if $(i,k)\in E_{\lowerr}$}\\
        \frac{1-\alpha}{2}\cdot B_{i,k,t} & \text{if $(i,k)\in E_{\highdet}$}\\
        1\cdot B_{i,k,t} & \text{if $(i,k)\in E_{\highrand}$}
    \end{cases}
    ~.
\end{align*}
This definition satisfies the second bullet of Lemma~\ref{lem:edge-charging-for-B} for this class of triangles.

\section{$\boldsymbol{O(\min\{L,\log n\})}$-Approximation Analysis for Weighted Instances with Triangle Inequality Constraints}
\label{sec:ratio-s-weighted}
In this section, we prove that Algorithm~\ref{alg:pivot} is an $O(\min\{L,\log n\})$-approximation algorithm for weighted \UMVD~with triangle inequality constraints by setting $\beta=0$ and 
an appropriate choice of $\alpha$. 
More specifically, we will prove the following theorem:
\begin{theorem}
\label{thm:logn-app-s-weighted}
If $\alpha=\frac{1}{3},\beta=0$, Algorithm~\ref{alg:pivot} is a randomized polynomial-time $O(\min\{L,\log n\})$-approximation algorithm for weighted \UMVD~where weights satisfy the triangle inequality.
\end{theorem}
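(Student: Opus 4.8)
The plan is to run Algorithm~\ref{alg:pivot} with $\alpha=\frac13$ and $\beta=0$ and to feed the analytical framework of Section~\ref{subsec:improved-triple}. With these parameters $\frac1{1-\alpha}=\frac32$ and $\frac1\alpha=3$ are absolute constants, so by Lemma~\ref{lem:approx-ratio} it suffices to design a charging scheme $\{B_{i,j,t}\}$ with $\overline B_{\lowerr}=O(\min\{L,\log n\})$ and $\overline B_{\highdet},\overline B_{\highrand}=O(1)$; the ratio is then $O(\min\{L,\log n\})$. Every structural lemma of Section~\ref{sec:ratio-complete} that does not refer to $E$ or $w$ carries over unchanged: the triangle classification, Lemma~\ref{lem:stronger-y-bounded-by-c*}, Lemma~\ref{lem:y-bounded-by-0}, and Lemma~\ref{lem:prob-disappear}. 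For $\alpha=\frac13$ the strict inequality chain behind the Remark in Section~\ref{sec:ratio-complete} already rules out every ultrametric violation on the dominant levels of a $(d,d,d)$-triangle (the argument needs only $2\alpha\le 1-\alpha$, i.e.\ $\alpha\le\frac13$), so no low-cost edge is modified there, and a $(r,r,r)$-triangle contains no low-cost edge by Corollary~\ref{cor:low-err-are-det}. Thus only the $(d,d,r)$-same, $(d,d,r)$-diff and $(d,r,r)$ triangles matter.

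For $(d,d,r)$-diff and $(d,r,r)$-triangles I would mirror the complete-graph analysis and its Appendix~\ref{app:edge-charging-for-B}. Whenever a low-cost non-pivot edge is modified in such a triangle, at least one of its two pivot edges is random, so by Lemma~\ref{lem:prob-disappear} its endpoints are separated with probability $\ge\alpha$; likewise each random edge of the triangle, when it is itself the non-pivot edge, sits next to a random pivot edge and is separated with probability $\ge\alpha$, so Lemma~\ref{lem:bound-of-sum-B} is available for all three edges. The one genuinely new point is that, weights being non-uniform, a charge that the complete-graph scheme splits among the edges of a triangle must now be moved around using the weight triangle inequality $w(i,j)\le w(i,k)+w(j,k)$. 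This is affordable here because the LP costs one moves the charge onto are comparable to what is charged: a random edge is high-cost, hence $c^*\ge\alpha$, so $c^*(e)\le\alpha^{-1}c^*(e')$ whenever $e'$ is random, while the $c^*$-value of a deterministic edge controls the relevant $y$-increments by Lemmas~\ref{lem:stronger-y-bounded-by-c*} and~\ref{lem:highdet-error-bound}. I expect this part to be routine case analysis costing only constant factors, contributing $O(1)$ to all three $\overline B$'s.

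The crux is the class of $(d,d,r)$-same-triangles. Say $(i,j)$ and $(i,k)$ are the deterministic edges with common dominant level $\ell^*$ and $(j,k)$ is random; if $(i,j)$ is a low-cost non-pivot edge then $k$ is the pivot, $x'(i,j)$ is set by the $\max$/$\min$ step, and $(i,j)$ is modified exactly when $x'(j,k)>d_{\ell^*}$, an event of probability $y_{\ell^*-1}(j,k)\le y_{\ell^*-1}(i,j)+y_{\ell^*-1}(i,k)\le c^*(i,j)+c^*(i,k)$ by constraint~\eqref{eqn:umvd-lp-triangle} and Lemma~\ref{lem:stronger-y-bounded-by-c*}. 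Two things go wrong compared with all other triangle classes. First, the random edge $(j,k)$ cannot be charged at all: it is a non-pivot edge only when $i$ is the pivot, but then $x'(i,j)=x'(i,k)=d_{\ell^*}$ deterministically, so $j,k$ are never separated and Lemma~\ref{lem:bound-of-sum-B} gives nothing for $(j,k)$. Second, since $w$ obeys only the triangle inequality, $w(i,k)$ can be arbitrarily small relative to $w(i,j)$, so the term $w(i,j)\,c^*(i,k)$ of the modification cost cannot be routed onto $(i,k)$; and the detour $w(i,j)\le w(i,k)+w(j,k)$ merely dumps the leftover onto $w(j,k)\,c^*(i,k)$, which would again have to go to the uncharge\-able edge $(j,k)$. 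So $(i,j)$ must absorb essentially the whole charge, and the naive instantiation of Lemma~\ref{lem:approx-ratio} would require an edge-dependent charge of size $\Theta\big(y_{\ell^*-1}(j,k)/c^*(i,j)\big)$ (legitimate in Lemma~\ref{lem:bound-of-sum-B} since a modification here forces a separation, but yielding only $\expect[\mathrm{ALG}|_{E_{\lowerr}}]\le O\big(\sum_{(i,j)\in E_{\lowerr}}w(i,j)\big)$, which need not be $O(\mathrm{OPT})$ because a low-cost edge can have $c^*(i,j)=0$ yet positive weight).

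The remaining ingredient — and the step I expect to be the main obstacle — is to bound, per edge, how many times a low-cost edge can actually be modified, and thereby push the loss down to $O(\min\{L,\log n\})$. Two facts feed this. By Lemma~\ref{lem:low-error-modify} the dominant level of a low-cost edge always equals its current distance level, so along any root-to-leaf branch of the recursion its distance value is non-increasing, except possibly for a single final modification (the $\max$-case above), which moreover separates its endpoints; hence it is modified at most $L$ times overall, and through a $(d,d,r)$-same-triangle at most once. And a standard harmonic-series/coupling argument, as used for the combinatorial pivot in~\cite{DBLP:conf/focs/Cohen-AddadFLM22}, bounds by $O(\log n)$ the expected number of recursive calls in which a fixed low-cost edge is active and charged. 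Interleaving either counting bound with the triangle-by-triangle accounting above — treating each low-cost modification as a single event chargeable to its triangle's LP cost, while capping the total number of such events on one edge by $O(\min\{L,\log n\})$ — yields $\overline B_{\lowerr}=O(\min\{L,\log n\})$; combined with the $O(1)$ bounds on $\overline B_{\highdet}$ and $\overline B_{\highrand}$, Lemma~\ref{lem:approx-ratio} with $\alpha=\frac13$ gives the theorem. The delicate part is making this interleaving go through in the presence of arbitrary triangle-inequality weights: the $w(j,k)\,c^*(i,k)$ leftovers of the bad triangles, which no charging scheme can attribute locally, are exactly what the $O(\min\{L,\log n\})$ factor (rather than $O(1)$) must swallow, and one must verify that across all recursive calls their total is bounded by $O(\min\{L,\log n\})\cdot\mathrm{OPT}$.
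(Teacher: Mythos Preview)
You correctly isolate the obstacle: the $(d,d,r)$-same-triangle with deterministic $(i,j),(i,k)$ sharing dominant level $\ell^*$ and random $(j,k)$, and you are right that when $i$ is the pivot the vertices $j,k$ are never separated, so Lemma~\ref{lem:bound-of-sum-B} gives nothing for $(j,k)$ there. The gap is the inference that ``$(j,k)$ cannot be charged at all.'' The paper \emph{does} charge $(j,k)$; it simply bounds that total charge by a mechanism other than Lemma~\ref{lem:bound-of-sum-B}.

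The local step you are missing is that the leftover \emph{can} be attributed to $(j,k)$. After the detour $w(i,j)\le w(i,k)+w(j,k)$ the residual term is $w(j,k)\cdot y_{\ell^*-1}(i,k)$ (not $w(j,k)\cdot c^*(i,k)$), and since $(i,k)$ is deterministic one has $y_{\ell^*-1}(i,k)<\alpha\le c^*(j,k)$; hence $w(j,k)\,y_{\ell^*-1}(i,k)\le w(j,k)\,c^*(j,k)$. This already yields the first condition of Lemma~\ref{lem:approx-ratio} with \emph{constant} charges: the paper sets $B_{i,j,t}=B_{i,k,t}=B_{j,k,t}=2\cdot\ind(\ell^*(y,i,j)>u)$. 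The charges on the two deterministic edges are ``type~1'' and handled by Lemma~\ref{lem:bound-of-sum-B} exactly as you describe. The charge on the random edge $(j,k)$ is ``type~2'' and is summed directly: whenever it is nonzero, $j,k$ pass together to a child call whose upper-bound level equals $\ell^*(y,i,j)>u$, so distinct type-2 charges on $(j,k)$ occur at strictly increasing upper-bound levels, giving $O(L)$; a bad-pivot harmonic argument (pivots $i$ with $\ell^*(y,i,j)=\ell^*(y,i,k)>u$, which by Lemma~\ref{lem:bad-pivots} are determined by $y^*$ and can only become good along the recursion) gives $O(\log n)$. The $O(\min\{L,\log n\})$ factor therefore lives on the \emph{high-cost random} edge $(j,k)$, not on the low-cost edges.

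Your alternative of pushing everything onto the deterministic edges does not close. If $B_{j,k,t}=0$, the cross term $w(i,j)\cdot y_{\ell^*-1}(i,k)$ on the left of~\eqref{eqn:improved-M-bound-by-B} cannot be controlled by $O\big(w(i,j)c^*(i,j)+w(i,k)c^*(i,k)\big)$ (take $c^*(i,j)$ and $w(i,k)$ tiny while $y_{\ell^*-1}(i,k)$ is close to $\alpha$), so no constant $B_{i,j,t},B_{i,k,t}$ works; and letting $B_{i,j,t}$ scale with $y_{\ell^*-1}(j,k)/c^*(i,j)$ destroys any uniform $q$ in Lemma~\ref{lem:bound-of-sum-B}. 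Your observation that a low-cost deterministic edge is modified at most once via a $(d,d,r)$-same-triangle is correct but orthogonal: it caps the \emph{count} of such events, not their weighted cost in terms of $\mathrm{OPT}$, and the resulting bound $\sum_{(i,j)\in E_{\lowerr}} w(i,j)$ can be arbitrarily larger than $\mathrm{OPT}$.
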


In Section~\ref{sec:prelim}, we have w.l.o.g. assumed that $E=\binom{[n]}{2}$ for weighted instances. 
Because of Lemma~\ref{lem:approx-ratio}, the proof of Theorem~\ref{thm:logn-app-s-weighted} directly follows the following lemma.

\begin{lemma}
    \label{lem:s-weighted-edge-charging-for-B}
    If $\alpha=\frac{1}{3}, \beta=0$, there exists a charging scheme on weighted cases with triangle inequality constraints such that 
    \begin{enumerate}
        \item for any triangle $t\in \binom{[n]}{3}$, 
        \begin{align*}
            \sum_{(i,j)\in t} w(i,j) \cdot \expect[M_{i,j,t}|\mathcal{A}_t]\leq \sum_{(i,j)\in t} \expect[B_{i,j,t}|\mathcal{A}_t]\cdot w(i,j) \cdot c^*(i,j)~.
        \end{align*}
        \item for any edge $(i,j)\in \binom{[n]}{2}$,
        $
            \sum_{t:i,j\in t} \expect[B_{i,j,t}] = O(\min\{L,\log n\})~.
        $
    \end{enumerate}
\end{lemma}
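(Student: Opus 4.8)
The plan is to design a charging scheme that works with $\alpha=\tfrac13$, following the same skeleton as the $(d,d,d)$- and $(d,d,r)$-analysis in Section~\ref{sec:ratio-complete}, but exploiting the weight triangle inequality to pay for the non-pivot modification cost. Recall that with $\alpha=\tfrac13$ the Remark in the $(d,d,d)$ case already tells us there are \emph{no} ultrametric violations among three deterministic edges: $y_{\ell^*(i,j)-1}(i,j)+y_{\ell^*(i,k)-1}(i,k)\le 2\alpha\le 1-\alpha$ would be contradicted. So $(d,d,d)$- and $(d,d,r)$-diff-triangles contribute nothing, and we set $B\defeq0$ there. The only triangles where a low-cost (hence deterministic, by Corollary~\ref{cor:low-err-are-det}) non-pivot edge can be modified are the $(d,d,r)$-same-triangles and, potentially, the $(d,r,r)$- and $(r,r,r)$-triangles. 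For the first bullet, in a $(d,d,r)$-same-triangle with deterministic edges $(i,j),(i,k)$, exactly as in Section~\ref{sec:ratio-complete} we have $\Pr[x'(j,k)>d_{\ell^*(y,i,j)}]=y_{\ell^*(i,j)-1}(j,k)\le y_{\ell^*(i,j)-1}(i,j)+y_{\ell^*(i,k)-1}(i,k)\le c^*(i,j)+c^*(i,k)$ by the LP triangle inequality and Lemma~\ref{lem:stronger-y-bounded-by-c*}; but now the modified edge is $(i,j)$ (or $(i,k)$), carrying weight $w(i,j)$, while the LP budget we are allowed to spend sits on edges $(i,j)$ and $(i,k)$ with weights $w(i,j), w(i,k)$. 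The weight triangle inequality $w(j,k)\le w(i,j)+w(i,k)$ is exactly what we need, except here the weights already line up: $\expect[M_{i,j,t}|\mathcal A_t]\le\tfrac13(y_{\ell^*(i,j)-1}(i,j)+y_{\ell^*(i,k)-1}(i,k))$, so charging $B_{i,j,t},B_{i,k,t}=O(1)$ on the two deterministic edges (scaled by $1/c^*$, which is safe since $c^*\ge$ the relevant $y$ term up to the $\frac{\alpha}{1-\alpha}$ factor of Lemma~\ref{lem:stronger-y-bounded-by-c*}) pays for it. The $(d,r,r)$-triangles are handled as in Appendix~\ref{app:edge-charging-for-B}: the deterministic edge plus one random edge gives a disappearance probability $\ge\alpha$, so a bounded charge suffices.

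For the second bullet — bounding $\sum_{t:i,j\in t}\expect[B_{i,j,t}]$ by $O(\min\{L,\log n\})$ — I would argue the two bounds separately. The $O(\log n)$ bound is the standard pivot-based recursion: by Lemma~\ref{lem:bound-of-sum-B} it suffices to show that whenever $(i,j)$ is charged $B_{i,j,t}=b$ as a non-pivot edge, the two other pivot edges disappear (endpoints split on line~\ref{line:partition-non-pivots}) with probability $\ge q\cdot b$ for a constant $q$ — i.e., $q_{i,j}=\Omega(1)$. This holds in the $(d,d,r)$-same case because one pivot edge is random, giving disappearance probability $\ge\alpha$ by Lemma~\ref{lem:prob-disappear}; but \emph{wait} — if both pivot edges are deterministic with the \emph{same} dominant level, Lemma~\ref{lem:prob-disappear} gives nothing, and indeed a low-cost $(i,j)$ in such a triangle need not be charged at all since it is not modified (its input distance already equals $d_{\ell^*}$ and $j,k$ land in the same set). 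So the charge only happens when at least one pivot edge is random, and then disappearance probability $\ge\alpha$. Hence each edge is charged $O(1)$ times per "round" and, since the recursion tree has depth $O(\log n)$ in expectation along any root-to-leaf path containing a fixed edge — actually the cleaner statement is the depth-$L$ bound: the upper level $u$ strictly increases at each recursion step, so any edge participates in at most $L$ recursive calls, giving $\sum_t\expect[B_{i,j,t}]\le L\cdot O(1)=O(L)$ directly. The $O(\log n)$ bound then comes from the same $q^{-1}$ argument of Lemma~\ref{lem:bound-of-sum-B} — but there $q^{-1}$ is a \emph{constant}, not $\log n$! — so the $\log n$ must enter differently.

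The resolution, and the main obstacle, is the following: unlike the $5$-approximation case, for weighted instances we must \emph{also} charge low-cost edges that are modified by the minimal-fixing step with a pivot edge that is deterministic of the \emph{same} level — the very "bad triangles" discussed in the Techniques section. There, the key observation is that such a triangle has an edge with $\Omega(1)$ LP contribution, namely $(j,k)$. In the \emph{weighted} setting we cannot simply bound the cost of $(j,k)$ by $1$ because $w(j,k)$ may exceed $w(i,j)$. But the weight triangle inequality $w(j,k)\le w(i,j)+w(i,k)$ lets us redistribute: charge the modification of $(i,j)$ (weight $w(i,j)$) against the $\Omega(1)$ LP contribution of $(j,k)$ but paid at rate $w(i,j)/c^*(j,k)=O(w(i,j))$, and since each high-LP edge $(j,k)$ of weight $w(j,k)$ can be a "bad third edge" for many triangles but the edge $(j,k)$ itself gets grouped with $j,k$ in the next recursive call — recursing $O(\min\{L,\log n\})$ times — the total charge on a fixed edge telescopes to $O(\min\{L,\log n\})$ via the $L$-depth bound and the $\log n$-via-$q^{-1}$ geometric-series bound in tandem. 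I expect the delicate bookkeeping to be: (i) making sure the weight triangle inequality is invoked with the right orientation in every one of the $(d,d,r)$-same subcases (which endpoint is the pivot, which distributional tail we use), and (ii) verifying that the per-round disappearance probability stays bounded below by a constant even after the weighting, so that the $\sum_t\expect[B_{i,j,t}]=O(\log n)$ half of the minimum genuinely follows from Lemma~\ref{lem:bound-of-sum-B} applied with a constant $q$ — while the $O(L)$ half follows from the strictly-increasing-$u$ depth bound combined with the per-round $O(1)$ charge.
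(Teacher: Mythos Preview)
Your proposal has a genuine gap: you never identify the mechanism that actually produces the $O(\min\{L,\log n\})$ bound. The paper splits the charges into two types. \emph{First-type} charges occur only when at least one pivot edge is random; then Lemma~\ref{lem:prob-disappear} gives separation probability $\ge \alpha$, and Lemma~\ref{lem:bound-of-sum-B} bounds their total by $O(1)$. \emph{Second-type} charges are placed on the random edge $(j,k)$ of a $(d,d,r)$-same-triangle when $(j,k)$ is the non-pivot edge, i.e., when both pivot edges are deterministic with the same dominant level; in that round $j,k$ are \emph{never} separated, so Lemma~\ref{lem:bound-of-sum-B} is useless here. This charge on $(j,k)$ is unavoidable because the weight triangle inequality routes the cross term $y_{\ell^*(i,j)-1}(i,j)\cdot w(i,k)$ through $w(j,k)$ (using $w(i,k)\le w(i,j)+w(j,k)$ and $y_{\ell^*(i,j)-1}(i,j)<\alpha\le c^*(j,k)$). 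The paper then bounds the second-type total via a dedicated ``bad pivots'' counting argument (Lemma~\ref{lem:bad-pivots} and the recurrence $\overline{B}^{(2)}(n')\le 2+\frac{1}{n'}\sum_{i<n'}\overline{B}^{(2)}(i)$): a vertex $i$ is a bad pivot for $(j,k)$ if $\ell^*(y,i,j)=\ell^*(y,i,k)>u$; once a bad pivot is chosen, a whole initial segment of bad pivots (ordered by $\ell^*(y^*,\cdot,j)$) becomes good, and a harmonic-sum recursion yields $4\ln n'+2$. Your ``recursion depth $O(\log n)$'' and ``geometric series via $q^{-1}$'' remarks are not this argument and do not work.

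Two further errors. (i) $(d,d,r)$-diff-triangles do \emph{not} contribute nothing: when the random edge $(j,k)$ is a pivot edge, the low-cost deterministic non-pivot edge can be modified (whenever $x'(j,k)\neq d_{\ell^*(y,i,j)}$); the paper handles this case in Appendix~\ref{app:s-weighted-edge-charging-for-B} with first-type charges. (ii) Your $O(L)$ argument (``$u$ strictly increases at each recursion step'') is false: the child call $V_u$ has the same upper-bound level $u$. The paper's $O(L)$ bound applies only to second-type charges and works because those are defined as $2\cdot\ind(\ell^*(y,i,j)>u)$; only when this indicator is $1$ does the child call containing $j,k$ have a strictly larger upper-bound level, so distinct second-type charges on a fixed edge occur at distinct values of $u$.
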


In the rest of this section, we will prove Lemma~\ref{lem:s-weighted-edge-charging-for-B}.
For convenience, we assume $\alpha=\frac{1}{3}$ and $\beta=0$.
Because $\beta=0$, we can still use the lemmas and inequalities established in Section~\ref{sec:ratio-complete}, including Lemma~\ref{lem:stronger-y-bounded-by-c*},~\ref{lem:y-bounded-by-0},~\ref{lem:prob-disappear} and inequalities~\eqref{eqn:y-of-a-det-edge},~\eqref{eqn:ddr-same-modify-prob}.

We classify the charge $B_{i,j,t}$ into two types.
Let $B_{i,j,t}^{(1)}$ and $B_{i,j,t}^{(2)}$ denote charges of the two types, where $B_{i,j,t}^{(1)}=B_{i,j,t},B_{i,j,t}^{(2)}=0$ if the charge is classified to the first type and $B_{i,j,t}^{(1)}=0,B_{i,j,t}^{(2)}=B_{i,j,t}$ if the charge is classified as the second type.
With the classification, 
the total charge on $(i,j)$ can be rewritten as the sum of its first-type charges and its second-type charges, i.e., $\sum_{t:i,j\in t} B_{i,j,t}^{(1)} + \sum_{t:i,j\in t} B_{i,j,t}^{(2)}$.
Therefore, we can upper bound for total charges of different types to upper bound the total charges:
\begin{align*}
    \sum_{t:i,j\in t} \expect[B_{i,j,t}^{(1)}]&=O(1)
    \quad\text{and}\quad
    \sum_{t:i,j\in t} \expect[B_{i,j,t}^{(2)}]=O(\min\{L,\log n\})~.
\end{align*}
For the first type, we will use the same method in Section~\ref{sec:ratio-complete} by showing the condition in Lemma~\ref{lem:bound-of-sum-B} with $q=\Omega(1)$.
For the second type, we will use the technique of~\cite{DBLP:conf/focs/Cohen-AddadFLM22} to analyze the $O(\log\{L,\log n\})$ approximation ratio of their pivot-based algorithm.

Similarly, as in Section~\ref{sec:ratio-complete}, we prove the lemma by presenting the charging scheme for each class of triangles.
Consider any triangle $t=(i,j,k)$ and any recursive call with a vertex set involving all of $i,j,k$.
Suppose that the recursive call has an upper bound level $u$ and a truncated LP solution $y$.
For convenience, we will condition on $\mathcal{A}_t$ in the rest of this section, and all expectations will be automatically conditioned on $\mathcal{A}_t$. 
Further, we similarly defer the proofs for $(d,d,r)$-diff-triangles and $(d,r,r)$-triangles to Appendix~\ref{app:s-weighted-edge-charging-for-B}.

\paragraph{$(d,d,d)$-triangles.} Note that the distance of any pivot edge $(u,v)\in t$ is set to its dominant distance level $d_{\ell^*(y,u,v)}$.
Because of Lemma~\ref{lem:low-error-modify}, any low-cost edge $(u,v)$ has an input distance $x(u,v)$ equal to the dominant distance level $d_{\ell^*(y,u,v)}$. 
If the non-pivot edge is low-cost, it is modified only when the ultrametric inequality is violated at the three dominant distance levels, i.e., there exists a permutation $(i',j',k')$ of $i,j,k$ such that $\ell^*(y,i',j')<\min\{\ell^*(y,i',k'),\ell^*(y,j',k')\}$. 
However, $\ell^*(y,i',j')<\min\{\ell^*(y,i',k'),\ell^*(y,j',k')\}$ implies
\begin{align}
    \alpha
    &
    > 
    1 - y_{\ell^*(i',j')}(i',j')
    \tag{Eqn.~\eqref{eqn:y-of-a-det-edge}}
    \\
    &
    \geq 
    1 - y_{\ell^*(i',j')}(i',k') - y_{\ell^*(i',j')}(j',k') 
    \tag{LP constraint~\eqref{eqn:umvd-lp-triangle}}
    \\
    &
    \geq 
    1 - y_{\ell^*(i',k')-1}(i',k') - y_{\ell^*(j',k')-1}(j',k')
    \tag{LP constraint~\eqref{eqn:umvd-lp-increasing}}
    \\
    &
    >
    1 - 2\alpha
    \tag{Eqn.~\eqref{eqn:y-of-a-det-edge}}
    ~,
\end{align}
which violates our choice of parameter $\alpha=\frac{1}{3}$.
Therefore, the low-cost non-pivot edges cannot be modified, and we have $M_{i,j,t}=M_{i,k,t}=M_{j,k,t}=0$ in this case.
Accordingly, by always defining $B_{i,j,t},B_{i,k,t},B_{j,k,t}\defeq 0$ and classifying the charges into the first type, we show the first bullet of Lemma~\ref{lem:s-weighted-edge-charging-for-B} and the condition in Lemma~\ref{lem:bound-of-sum-B} for this class of triangles.

\paragraph{$(d,d,r)$-same-triangles.} W.l.o.g. we assume $(i,j)$ and $(i,k)$ are deterministic.
Because of Corollary~\ref{cor:low-err-are-det}, we have $(j,k)\in E_{\higherr}$, $M_{j,k,t}=0$, and $c^*(j,k)\geq \alpha$. 
Recall that we have shown for this class of triangles in Section~\ref{sec:ratio-complete}: 
\begin{align*}
    M_{i,j,t} &= \ind(\text{$k$ is the pivot vertex})\cdot \ind((i,j)\in E_{\lowerr}) \cdot \ind(x'(j,k)>d_{\ell^*(y,i,j)})~,
    \\
    M_{i,k,t} &= \ind(\text{$j$ is the pivot vertex})\cdot \ind((i,k)\in E_{\lowerr}) \cdot \ind(x'(j,k)>d_{\ell^*(y,i,j)})~,
\end{align*}
and the probability that $x'(j,k)>d_{\ell^*(y,i,j)}$ is at most $y_{\ell^*(i,j)-1}(i,j) + y_{\ell^*(i,k)-1}(i,k)$. 
Because of Lemma~\ref{lem:stronger-y-bounded-by-c*}, $y_{\ell^*(i,j)-1}(i,j)\leq c^*(i,j)$, $y_{\ell^*(i,k)-1}(i,k)\leq c^*(i,k)$.
Because edges $(i,j)$ and $(i,k)$ are deterministic and Eqn.~\eqref{eqn:y-of-a-det-edge}, $y_{\ell^*(i,j)-1}(i,j), y_{\ell^*(i,k)-1}(i,k)< \alpha \leq c^*(j,k)$.
Therefore, in this class of triangles, we have 
\begin{align*}
    \sum_{(i', j')\in t} w(i',j')\cdot \expect[M_{i',j',t}] 
    &
    \leq 
    \frac{y_{\ell^*(i,j)-1}(i,j) + y_{\ell^*(i,k)-1}(i,k)}{3} \cdot \big(w(i,j)+w(i,k)\big)
    \\
    &
    \leq 
    \frac{w(i,j)\cdot c^*(i,j)+w(i,k)\cdot c^*(i,k)}{3} 
    + \frac{y_{\ell^*(i,j)-1}(i,j)\cdot w(i,k)}{3} 
    \\
    &
    \quad 
    + \frac{y_{\ell^*(i,k)-1}(i,k)\cdot w(i,j)}{3}
    \\
    &
    \leq 
    \frac{w(i,j)\cdot c^*(i,j)+w(i,k)\cdot c^*(i,k)}{3} 
    + \frac{y_{\ell^*(i,j)-1}(i,j)\cdot (w(i,j)+w(j,k))}{3} 
    \\
    &
    \quad 
    + \frac{y_{\ell^*(i,k)-1}(i,k)\cdot (w(i,k)+w(j,k))}{3}
    \tag{triangle inequality for $w$}
    \\
    &
    \leq 
    \frac{w(i,j)\cdot c^*(i,j)+w(i,k)\cdot c^*(i,k)}{3} 
    + 
    \frac{c^*(i,j)\cdot w(i,j)+c^*(j,k)\cdot w(j,k)}{3} 
    \\
    &
    \quad 
    + 
    \frac{c^*(i,k)\cdot w(i,k)+c^*(j,k)\cdot w(j,k)}{3}
    \\
    &
    = 
    \frac{2\big(w(i,j)\cdot c^*(i,j)+w(j,k)\cdot c^*(j,k) + w(i,k)\cdot c^*(i,k)\big)}{3} 
    ~.
\end{align*}
On the other hand, if $\ell^*(y,i,j)=u$, according to the definition of the truncated LP solution, $y_{\ell^*(i,j)-1}(i,j)=0$.
In this case, $\sum_{(i', j')\in t} w(i',j')\cdot \expect[M_{i',j',t}]=0$.
Accordingly, when the corresponding edge is non-pivot, we can define $B_{i,j,t},B_{i,k,t},B_{j,k,t}\defeq 2\cdot \ind(\ell^*(y,i,j)>u)$ to prove the first bullet of Lemma~\ref{lem:s-weighted-edge-charging-for-B}.
Furthermore, we classify the charges on $(i,j)$ and $(i,k)$ into the first type, while we classify the charge on $(j,k)$ into the second type. 
Note that in this class of triangles, when $(i,j)$ or $(i,k)$ is non-pivot, there is a random pivot edge.
Because of Lemma~\ref{lem:prob-disappear}, when being non-pivot, $i,j$ (or $i,k$) will be partitioned into different sets on line~\ref{line:partition-non-pivots} with probability at least $\alpha=\frac{1}{3}$, which is at least $\Omega(1)\cdot B_{i,j,t}$ (resp., $\Omega(1) \cdot B_{i,k,t}$). 
Therefore, this definition for $(i,j)$ and $(i,k)$ satisfies the condition of Lemma~\ref{lem:bound-of-sum-B} with $q=\Omega(1)$. 
Next, we will upper bound the expected total charge of the second type on the random edges in this class of triangles.

\paragraph{Bounding the total charges of the second type.}
Fix an edge $(j,k)\in E$.
In any recursive call with an upper bound level $u$, if $B_{j,k,t}^{(2)}\neq 0$, where the pivot in the call is $i$ and $t=(i,j,k)$, we have $\ell^*(y,i,j)>u$.
At the same time, the algorithm modifies the distance of $(j,k)$ to $d_{\ell^*(y,i,j)}$ and partitions $j,k$ to a child call with upper bound level $u'=\ell^*(y,i,j)>u$. 
Therefore, the recursive calls that charge $(j,k)$ for the second type must have distinct upper bound levels $u$, and thus $\sum_{t:(i,j)\in t} B_{j,k,t}^{(2)} \leq O(L)$.

On the other hand, we have the following characterization for the dominant levels of each edge in each recursive call: the dominant level equals either its dominant level in $y^*$ or the upper bound level $u$ of the call. 
Its proof follows the fact that the truncated LP solution satisfies $\forall \ell<u, \Delta y_{\ell}(i,j)=0$, $\Delta y_{u}(i,j)=y^*_u(i,j)$ and $\forall \ell>u, \Delta y_{\ell}(i,j)=\Delta y^*_{\ell}(i,j)$ in a call with upper bound level $u$.
Furthermore, because descendant calls have higher upper bound levels, if the dominant level of edge $(i,j)$ does not equal $\ell^*(y^*,i,j)$ in a call, it will no longer equal $\ell^*(y^*,i,j)$ in all its descendant calls.
Formally, all previous arguments can be summarized by the following lemma.
\begin{lemma}
    \label{lem:bad-pivots}
    In any recursive call with an upper bound level $u$, the dominant distance level of any edge $(i,j)$ satisfies $\ell^*(y,i,j)=u$ or $\ell^*(y,i,j)=\ell^*(y^*,i,j)$.
    In particular, if $\ell^*(y,i,j)=u$ in any recursive call with truncated LP solution $y$ and upper bound level $u$, $\ell^*(y',i,j)=u'$ in any of its descendant calls with truncated LP solution $y'$ and upper bound $u'$.
\end{lemma}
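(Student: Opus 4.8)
The plan is to exploit the explicit description of the truncated solution $y$ in a call with upper bound $u$: from the truncation step of Algorithm~\ref{alg:pivot} we have $\Delta y_\ell(i,j)=0$ for $\ell<u$, $\Delta y_u(i,j)=y^*_u(i,j)$, and $\Delta y_\ell(i,j)=\Delta y^*_\ell(i,j)$ for $\ell>u$; and by Lemma~\ref{lem:feasible-y-in-calls} these increments sum to $1$, so the maximizer of $\Delta y_\cdot(i,j)$ has positive value and hence lies at some level $\ge u$. I would also fix once and for all a consistent tie-breaking rule for the dominant level --- say ``smallest maximizing level'' --- which does not change the output of the algorithm but makes $\ell^*(y^*,i,j)$ well defined; I expect this bookkeeping to be the only genuinely fiddly part of the argument.

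First, for the dichotomy in the statement, I would assume $\ell^*(y,i,j)=\ell^*>u$ and show $\ell^*$ must be a maximizer of $\Delta y^*_\cdot(i,j)$: for levels $\ell>u$ this is immediate since $\Delta y^*_\ell(i,j)=\Delta y_\ell(i,j)\le \Delta y_{\ell^*}(i,j)=\Delta y^*_{\ell^*}(i,j)$, and for levels $\ell\le u$ it follows from $\Delta y^*_\ell(i,j)\le\sum_{\ell'\le u}\Delta y^*_{\ell'}(i,j)=y^*_u(i,j)=\Delta y_u(i,j)\le \Delta y_{\ell^*}(i,j)$. A short argument that the maximizer sets of $\Delta y$ and $\Delta y^*$ coincide on the levels $>u$ --- which is where the smallest maximizer sits in this case, since then $y^*_u(i,j)$ is strictly below the common maximum --- then upgrades ``a maximizer'' to ``$\ell^*(y^*,i,j)$''.

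Second, for the ``in particular'' claim, the key observation I would isolate is that $\ell^*(y,i,j)=u$ is \emph{equivalent} to the inequality $y^*_u(i,j)\ge \Delta y^*_\ell(i,j)$ for every $\ell>u$ (levels below $u$ contribute zero increment and so never compete). I would then take an arbitrary descendant call with upper bound $u'$; since each child call of Algorithm~\ref{alg:pivot} carries an upper bound in $\{u,u+1,\dots,L\}$, transitivity gives $u'\ge u$. For the descendant's truncated solution $y'$ and any level $\ell>u'$ we have $\ell>u$, hence $\Delta y'_\ell(i,j)=\Delta y^*_\ell(i,j)\le y^*_u(i,j)\le y^*_{u'}(i,j)=\Delta y'_{u'}(i,j)$, where the first inequality is the isolated equivalence and the second is monotonicity of $y^*$ in $\ell$ (constraint~\eqref{eqn:umvd-lp-increasing}); levels below $u'$ again contribute zero. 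So $u'$ maximizes $\Delta y'_\cdot(i,j)$, and by the tie-breaking rule $\ell^*(y',i,j)=u'$. This argument uses only $u'\ge u$, so it applies to all descendants simultaneously and no induction on recursion depth is needed --- in particular the fact that child calls need not \emph{strictly} increase the upper bound causes no trouble. The main obstacle, as flagged, is merely making the tie-breaking consistent so that the two levels ``$u$'' and ``$\ell^*(y^*,i,j)$'' in the statement are meaningful when the maximizer of $\Delta y$ is non-unique.
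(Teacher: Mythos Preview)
Your argument is correct and follows the same route as the paper's own (very brief) proof sketch: use the explicit description $\Delta y_\ell=0$ for $\ell<u$, $\Delta y_u=y^*_u$, $\Delta y_\ell=\Delta y^*_\ell$ for $\ell>u$, and then exploit that descendant upper bounds only increase. Your write-up is more careful than the paper's in that you explicitly fix a consistent tie-breaking rule to make $\ell^*(y^*,i,j)$ well defined; the paper leaves this implicit under ``break ties arbitrarily'', so your bookkeeping is a genuine clarification rather than a deviation.
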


In a recursive call, we call vertices $i\neq j,k$ a bad pivot of $(j,k)$ if $\ell^*(y,i,j)=\ell^*(y,i,k)>u$. 
Otherwise, we call $i$ a good pivot.
Because of Lemma~\ref{lem:bad-pivots}, any bad pivot $i$ of $(j,k)$ satisfies $\ell^*(y,i,j)=\ell^*(y^*,i,j)$, and good pivots in a call never become bad in its descendant calls. 
Further, if the pivot $i$ is bad in the recursive call, $j,k$ will appear in a child call with an upper bound level $u' = \ell^*(y^*,i,j)$.
Therefore, any bad pivot $i'$ with $\ell^*(y^*,i',j)\geq \ell^*(y^*,i,j)$ will become good (or disappear) in the descendant calls.
Let $\overline{B}^{(2)}(n')$ denote an upper bound for the expected total charges of the second type on $(j,k)$ in a recursive call and its descendant calls, where there are at most $n'$ bad pivots for the edge.
It is clear that $\sum_{t:j,k\in t}\expect[B^{(2)}_{j,k,t}]\leq \overline{B}^{(2)}(n-2)$. 
We claim that $\overline{B}^{(2)}(0)=0$, and that $\overline{B}^{(2)}(n')\leq 4\ln{n'} + 2$ for any $n'\geq 1$ to finish proving that $\sum_{t:j,k\in t}\expect[B^{(2)}_{j,k,t}]=O(\log n)$.
We prove the claim by induction.
The base case is when $n'=0$.
It is clear that we have $\overline{B}^{(2)}(0)=0$ because there is no bad pivot and there is no charge of the second type on $(j,k)$. 
For $n'\geq 1$, 
let $n'_{\ell}$ be the number of bad pivots $i$ with $\ell^*(y,i,j)\geq \ell$ in the recursive. 
Based on our previous discussion, we have 
\begin{align*}
    \overline{B}^{(2)}(n') 
    &
    \leq
    2 + \sum_{\ell>u} \frac{n'_{\ell}-n'_{\ell+1}}{n'} \cdot \overline{B}^{(2)}(n'_{\ell+1})
    \\
    &
    \leq
    2 + \sum_{i\in [n']} \frac{1}{n'}\cdot \overline{B}^{(2)}(i-1)
    \\
    &
    \leq 
    2 + \sum_{i\in [n'-1]} \frac{4\ln{i}+2}{n'} 
    \\
    &
    \leq
    4 - \frac{2}{n'} + \frac{4}{n'} \int_{1}^{n'} \ln{x} \;dx
    \\
    &
    =
    4 - \frac{2}{n'} + \Big(4\ln{n'} - 4 + \frac{4}{n'}\Big)
    \leq
    4\ln{n'}+2~.
\end{align*}

\section{16-Approximation Analysis for k-Partite Graphs}
\label{sec:ratio-k-part}

In this section, we prove that Algorithm~\ref{alg:pivot} is a $16$-approximation algorithm for unweighted \UMVD~on complete $k$-partite graphs by appropriate choices of the parameters $\alpha,\beta$.
More specifically, we will prove the following theorem:
\begin{theorem}
\label{thm:16-approx-k-part}
If $\alpha=\frac{3}{8}, \beta=\frac{2}{3}$, Algorithm~\ref{alg:pivot} is a randomized polynomial-time $16$-approximation algorithm for \UMVD ~on complete $k$-partite graphs.
\end{theorem}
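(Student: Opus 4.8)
The plan is to reduce to the weighted formulation, apply the framework of Section~\ref{sec:framework}, and carry out the triangle case analysis with the $\beta\neq 0$ rounding. Concretely, on a complete $k$-partite graph put $w(i,j)=1$ on every cross-part edge $(i,j)\in E$ and $w(i,j)=0$ on every within-part edge $(i,j)\in E_{\varnothing}$; as noted in Section~\ref{sec:prelim} these weights satisfy the triangle inequality, so Algorithm~\ref{alg:pivot} with $\alpha=\tfrac38,\beta=\tfrac23$ is well-defined and, by Lemma~\ref{lem:output-ultrametric}, runs in polynomial time and outputs an ultrametric. For the ratio I would invoke Lemma~\ref{lem:approx-ratio}: since $\tfrac1{1-\alpha}=\tfrac85$ and $\tfrac1\alpha=\tfrac83$, it suffices to produce a charging scheme $\{B_{i,j,t}\}$ satisfying the per-triangle inequality~\eqref{eqn:improved-M-bound-by-B} together with total-charge bounds obeying $\max\{\overline{B}_{\lowerr},\,\overline{B}_{\highdet}+\tfrac85,\,\overline{B}_{\highrand}+\tfrac83\}\le 16$; exactly as in the proof of Theorem~\ref{thm:5-approx-complete}, those total-charge bounds would then come from Lemma~\ref{lem:bound-of-sum-B} once, for every charged non-pivot edge, I lower-bound the probability that its endpoints are split on line~\ref{line:partition-non-pivots}. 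So the real content is a $k$-partite analog of Lemma~\ref{lem:edge-charging-for-B}. Two preliminary remarks: because $\alpha\beta=\tfrac14>0$, a \emph{random} cross-part edge now uses the rejection-sampled CCDF~\eqref{eqn:distance-ccdf}, $\Pr[x'(i,j)\ge d_\ell]=\tfrac43(y_\ell(i,j)-\tfrac14)^+$, while a random within-part edge still uses the plain CCDF~\eqref{eqn:empty-distance-ccdf}; and Lemmas~\ref{lem:stronger-y-bounded-by-c*} and~\ref{lem:y-bounded-by-0} carry over verbatim, while Lemma~\ref{lem:prob-disappear} needs a $k$-partite version, which I would prove by noting that a random cross-part edge takes any fixed value with probability at most $\tfrac{1-\alpha}{1-\alpha\beta}=\tfrac56$ and a random within-part edge with probability at most $1-\alpha\beta=\tfrac34$, so that a random pivot edge separates the two non-pivot endpoints with at least a constant probability (and with probability $1$ if both pivot edges are deterministic with distinct dominant levels).

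The case analysis is organized by the within-part structure of $t=(i,j,k)$. If all three vertices of $t$ lie in one part then $w\equiv 0$, hence $M\equiv 0$, and I set $B\equiv 0$. If all three edges of $t$ are cross-part, $t$ is handled by the same $(d,d,d)$/$(d,d,r)$-same/$(d,d,r)$-diff/$(d,r,r)$/$(r,r,r)$ split used in Sections~\ref{sec:ratio-complete}--\ref{sec:ratio-s-weighted}; the only change is that rejection sampling replaces the factor $\Delta y_\ell$ by at most $\tfrac43\Delta y_\ell$ in the modification-probability estimates (the subtracted $\tfrac14$ only helps) and by at most $\tfrac56$ in the disappearance estimates, so the $B$-values grow by a bounded factor and still fit inside the $16$-budget even though $\alpha=\tfrac38<\tfrac{3-\sqrt5}{2}$; this adaptation is routine and I would defer it to Appendix~\ref{app:edge-charging-for-k-partite-B}. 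The genuinely new case is a triangle with exactly one within-part edge. If the non-pivot edge in the event $\mathcal{A}_t$ is the within-part edge, its weight is $0$, so $M=0$ and I set $B=0$. Otherwise the non-pivot edge is cross-part, say $(i,j)$, the pivot is $k$, and then exactly one pivot edge --- say $(i,k)$ --- is within-part while the other, $(j,k)$, is cross-part; this is the configuration the modified rounding was designed for.

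This last configuration is the main obstacle, because $c^*(i,k)=0$ (its weight is $0$) cannot be used to pay for a modification of $(i,j)$: I must bound $\expect[M_{i,j,t}\mid\mathcal{A}_t]$ by $O(1)\big(w(i,j)\,c^*(i,j)+w(j,k)\,c^*(j,k)\big)$ using only the two cross-part edges. I would split on the rounding of the pivot edges $(i,k)$ and $(j,k)$. When $(i,k)$ is deterministic, the larger within-part threshold gives $y_{\ell-1}(i,k)<\alpha\beta=\tfrac14$ at its dominant level $\ell$; combined with the $E$-threshold bound on $(j,k)$ and the level-wise triangle inequality~\eqref{eqn:umvd-lp-triangle}, this keeps the relevant value $y_{\ell-1}(i,j)$ below $\tfrac58=1-\alpha$, which --- since a low-cost non-pivot $(i,j)$ is deterministic and is modified only if its dominant level lies strictly below the level(s) the pivot edges are fixed to --- is a contradiction, so $M_{i,j,t}=0$ here. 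When a pivot edge is random, I would instead expand $\Pr[(i,j)\text{ modified}]$ as a sum over levels weighted by $\Pr[x'(i,k)=d_\ell]$ (or $\Pr[x'(j,k)=d_\ell]$), apply the \emph{per-level} triangle inequality $y_\ell(i,j)\le y_\ell(i,k)+y_\ell(j,k)$, and use Lemma~\ref{lem:stronger-y-bounded-by-c*} together with the rejection-sampled CCDF of the cross-part random edge to convert the result into an $O(1)$ multiple of $c^*(i,j)+c^*(j,k)$; the essential point is that under the plain CCDF $\Pr[x'(j,k)=d_\ell]$ could reach $1-\alpha$, too large to absorb against $c^*(j,k)$ alone, whereas after renormalization the overshoot is $\tfrac43(y_\ell-\tfrac14)^+$, which the level-$\ell$ triangle inequality ties back to $y_\ell(i,j)$ and $y_\ell(i,k)$. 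Having pinned down the modification- and disappearance-probability bounds, I would set the $B$-values in the style of the $(d,d,r)$-same case of Section~\ref{sec:ratio-complete} so that both conditions of the charging lemma hold with constants meeting the budgets above, and Lemmas~\ref{lem:approx-ratio} and~\ref{lem:bound-of-sum-B} then yield the $16$-approximation. I expect the hardest step to be exactly this within-part-edge analysis --- in particular extracting the right constant from the rejection-sampling argument --- while the remaining sub-cases amount to bookkeeping deferred to the appendix.
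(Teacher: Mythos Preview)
Your plan is the paper's plan: invoke Lemma~\ref{lem:approx-ratio} and Lemma~\ref{lem:bound-of-sum-B}, prove a $k$-partite analog of Lemma~\ref{lem:edge-charging-for-B} (this is exactly Lemma~\ref{lem:edge-charging-for-k-partite-B}), and derive the needed separation probabilities from the two CCDFs (this is Lemma~\ref{lem:k-prob-disappear0}, with the same constants $\tfrac16$ and $\tfrac14$ that you compute). You have also correctly isolated the heart of the matter --- the $(d,d,r)$-same configuration with one deterministic edge in $E_\varnothing$ --- and the mechanism that saves it: the rejection-sampled CCDF gives $\Pr[x'(j,k)>d_{\ell^*}]=\tfrac{(4y_{\ell^*-1}(j,k)-1)^+}{3}$, and the $-\tfrac14$ exactly cancels the bound $y_{\ell^*-1}(i,k)<\alpha\beta=\tfrac14$ coming from the deterministic within-part edge, leaving $\le\tfrac43\,y_{\ell^*-1}(i,j)\le\tfrac43\,c^*(i,j)$.

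Two places where your sketch drifts from what actually works. First, your case split on the within-part pivot edge is mis-stated: ``$(i,k)$ deterministic $\Rightarrow M_{i,j,t}=0$'' is only true when the \emph{other} pivot edge $(j,k)$ is also deterministic; when $(j,k)$ is random this is precisely the crucial sub-case above, where $M_{i,j,t}$ is \emph{not} zero but is bounded by $O(c^*(i,j))$. The paper avoids this ambiguity by classifying triangles first by their $(d,d,d)/(d,d,r)/(d,r,r)$ pattern and only then by which edge (if any) lies in $E_\varnothing$; that ordering keeps the sub-cases clean. Second, your ``routine adaptation'' of the all-cross-part triangles is riskier than you suggest for the $(d,r,r)$ class: the Section~\ref{sec:ratio-complete} lemma you would scale up is proved only for $\alpha\ge\tfrac{3-\sqrt5}{2}>\tfrac38$, and with rejection sampling the separation probability drops from $\alpha$ to $\tfrac16$, so simply multiplying $B$-values by $\tfrac43$ does not obviously close. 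The paper sidesteps this entirely: for $(d,r,r)$ triangles (and for $(d,d,r)$-diff with the random edge in $E$) it uses the trivial bound $\sum_{(i',j')\in t}\expect[M_{i',j',t}\mid\mathcal{A}_t]\le\tfrac13$ (resp.\ $\tfrac23$) and charges a random $E$-edge via $c^*\ge\alpha=\tfrac38$, which is much simpler and is what you should do as well.
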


We shall first prove the following analog Lemma~\ref{lem:k-prob-disappear0} of Lemma~\ref{lem:prob-disappear}.
To ensure that endpoints of a non-pivot edge are partitioned into different sets on line~\ref{line:partition-non-pivots} when one of the pivot edges is random, which is a frequently used property in the proof in Section~\ref{sec:ratio-complete}, 
our choice of the parameter $\beta$ follows the criteria $\alpha\beta$ and $\frac{\alpha(1-\beta)}{1-\alpha\beta}$ can both be lower bounded by some constant.

\begin{lemma}
    \label{lem:k-prob-disappear0}
    Consider any recursive call $\algoname(V, x, u)$ with $|V|>2$.
    Suppose $i$ is the pivot vertex of the call. 
    For any $(j,k)\in \binom{V\setminus\{i\}}{2}$, the probability that $j,k$ are partitioned into different sets on line~\ref{line:partition-non-pivots} can be lower bounded by
    \begin{itemize}
        \item $1$ if both $(i,j)$ and $(i,k)$ are deterministic in the call but they have different dominant levels, 
        \item $\alpha\beta$ if one of $(i,j)$ and $(i,k)$ is random in $E_{\varnothing}$, or
        \item $\frac{\alpha(1-\beta)}{1-\alpha\beta}$ if one of $(i,j)$ and $(i,k)$ is random in $E$.
    \end{itemize}
\end{lemma}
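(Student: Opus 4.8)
The plan is to mirror the proof of Lemma~\ref{lem:prob-disappear}, but keeping track of the two CCDF formulas~\eqref{eqn:distance-ccdf} and~\eqref{eqn:empty-distance-ccdf} separately and using the corresponding thresholds. The first bullet requires nothing new: if $(i,j)$ and $(i,k)$ are both deterministic in the call with distinct dominant levels, the algorithm sets $x'(i,j)=d_{\ell^*(y,i,j)}$ and $x'(i,k)=d_{\ell^*(y,i,k)}$ with $\ell^*(y,i,j)\ne\ell^*(y,i,k)$, so $j$ and $k$ are placed in different sets $V_\ell$ on line~\ref{line:partition-non-pivots} with probability $1$.

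For the other two bullets, I would assume without loss of generality that $(i,j)$ is the random pivot edge and observe that $x'(i,j)$ and $x'(i,k)$ are sampled independently, so
\[
\Pr[\text{$j,k$ separated}] = 1 - \sum_{\ell\in[L]}\Pr[x'(i,j)=d_\ell]\cdot\Pr[x'(i,k)=d_\ell] \;\ge\; 1 - \max_{\ell\in[L]}\Pr[x'(i,j)=d_\ell],
\]
where the inequality uses $\sum_{\ell\in[L]}\Pr[x'(i,k)=d_\ell]=1$. This normalization holds in all four edge classes: deterministic edges take a fixed value with probability $1$, while for random edges it follows from Lemma~\ref{lem:feasible-y-in-calls} (giving $y_L=1$, $y_0=0$) together with the way~\eqref{eqn:distance-ccdf} and~\eqref{eqn:empty-distance-ccdf} are normalized. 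It then remains to upper bound $\max_\ell\Pr[x'(i,j)=d_\ell]$ in each case.

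If $(i,j)\in E_\varnothing$ is random, then~\eqref{eqn:empty-distance-ccdf} gives $\Pr[x'(i,j)=d_\ell]=\Delta y_\ell(i,j)\le\Delta y_{\ell^*(i,j)}(i,j)\le 1-\alpha\beta$ by the definition of a random edge in $E_\varnothing$, which yields the bound $\alpha\beta$. If $(i,j)\in E$ is random, then~\eqref{eqn:distance-ccdf} gives
\[
\Pr[x'(i,j)=d_\ell] = \frac{1}{1-\alpha\beta}\Bigl((y_\ell(i,j)-\alpha\beta)^+ - (y_{\ell-1}(i,j)-\alpha\beta)^+\Bigr) \le \frac{\Delta y_\ell(i,j)}{1-\alpha\beta} \le \frac{1-\alpha}{1-\alpha\beta},
\]
using $\Delta y_\ell(i,j)\le\Delta y_{\ell^*(i,j)}(i,j)\le 1-\alpha$; this yields the bound $1-\frac{1-\alpha}{1-\alpha\beta}=\frac{\alpha(1-\beta)}{1-\alpha\beta}$.

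The only step that is more than transcription is the elementary inequality $(y_\ell-\alpha\beta)^+-(y_{\ell-1}-\alpha\beta)^+\le y_\ell-y_{\ell-1}$, valid because $y_{\ell-1}\le y_\ell$ (constraint~\eqref{eqn:umvd-lp-increasing}); this is what guarantees that truncating by $\alpha\beta$ does not increase the per-level mass beyond $\Delta y_\ell(i,j)/(1-\alpha\beta)$. I do not expect a real obstacle here: the lemma is essentially Lemma~\ref{lem:prob-disappear} with the $E$/$E_\varnothing$ thresholds substituted in, and the only care needed is the bookkeeping to confirm that $x'(i,k)$ is a genuine probability distribution over $\{d_1,\dots,d_L\}$ regardless of which of the four classes the edge $(i,k)$ falls into.
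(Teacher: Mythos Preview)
Your proposal is correct and follows essentially the same approach as the paper's proof: both handle the first bullet identically, and for the remaining bullets both bound $\Pr[x'(i,j)=d_\ell]$ uniformly by the relevant threshold ($1-\alpha\beta$ or $\tfrac{1-\alpha}{1-\alpha\beta}$) and then use that $\sum_\ell\Pr[x'(i,k)=d_\ell]=1$. The only cosmetic difference is that you pull out the $\max_\ell$ explicitly before summing, whereas the paper inserts the uniform bound term-by-term inside the sum; the key inequality $(y_\ell-\alpha\beta)^+-(y_{\ell-1}-\alpha\beta)^+\le \Delta y_\ell(i,j)$ via the monotonicity constraint~\eqref{eqn:umvd-lp-increasing} is identical in both.
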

\begin{proof}
    For the first bullet, according to the algorithm, the distances of the pivot edges $(i,j)$ and $(i,k)$ are deterministically different, and thus $j,k$ are partitioned into different sets with probability $1$ on line~\ref{line:partition-non-pivots}.
    
    For the second bullet, suppose the truncated LP solution in the call is $y$. 
    W.l.o.g., we assume that $(i,j)$ is random in $E_{\varnothing}$. 
    Note that this implies $\forall \ell\in [L], \Delta y_{\ell}(i,j)\leq \Delta y_{\ell^*(i,j)}(i,j)\leq 1-\alpha\beta$.
    According to the CCDF~\eqref{eqn:empty-distance-ccdf}, the probability that $j,k$ are partitioned into different sets on line~\ref{line:partition-non-pivots} is
    \begin{align*}
        1-\sum_{\ell\in [L]} \Pr[j,k\in V_{\ell}] 
        &
        = 
        1-\sum_{\ell\in [L]} \Pr[x'(i,j)=d_{\ell}]\cdot \Pr[x'(i,k)=d_{\ell}]
        \\
        &
        =
        1-\sum_{\ell\in [L]} \Delta y_{\ell}(i,j)\cdot \Pr[x'(i,k)=d_{\ell}]
        \\
        &
        \geq 
        1-\sum_{\ell\in [L]} (1-\alpha\beta)\cdot \Pr[x'(i,k)=d_{\ell}]
        \\
        &
        =
        1-(1-\alpha\beta)
        =
        \alpha\beta
        ~.
    \end{align*}
    
    For the third bullet, suppose the truncated LP solution in the call is $y$. 
    W.l.o.g., we assume that $(i,j)$ is random in $E$. 
    Note that this implies $\forall \ell\in [L], \Delta y_{\ell}(i,j)\leq \Delta y_{\ell^*(i,j)}(i,j)\leq 1-\alpha$.
    According to the CCDF~\eqref{eqn:distance-ccdf}, the probability that $j,k$ are partitioned into different sets on line~\ref{line:partition-non-pivots} is
    \begin{align*}
        1-\sum_{\ell\in [L]} \Pr[j,k\in V_{\ell}] 
        &
        = 
        1-\sum_{\ell\in [L]} \Pr[x'(i,j)=d_{\ell}]\cdot \Pr[x'(i,k)=d_{\ell}]
        \\
        &
        =
        1-\sum_{\ell\in [L]} \frac{(y_{\ell}(i,j)-\alpha\beta)^+ - (y_{\ell-1}(i,j)-\alpha\beta)^+}{1-\alpha\beta} \cdot \Pr[x'(i,k)=d_{\ell}]
        \\
        &
        \geq 
        1-\sum_{\ell\in [L]} \frac{\Delta y_{\ell}(i,j)}{1-\alpha\beta} \cdot \Pr[x'(i,k)=d_{\ell}]
        \tag{LP constraint~\eqref{eqn:umvd-lp-increasing}}
        \\
        &
        \geq 
        1-\sum_{\ell\in [L]} \frac{1-\alpha}{1-\alpha\beta} \cdot \Pr[x'(i,k)=d_{\ell}]
        \\
        &
        =
        1-\frac{1-\alpha}{1-\alpha\beta}
        =
        \frac{\alpha(1-\beta)}{1-\alpha\beta}
        ~.
    \end{align*}
\end{proof}

Next, we present the key Lemma~\ref{lem:edge-charging-for-k-partite-B} of this section, which gives the condition we need in Lemma~\ref{lem:approx-ratio} and Lemma~\ref{lem:bound-of-sum-B} under the specific choices of $\alpha,\beta$. 
As the proof of the lemma is quite technically involved by the proofs we have shown in the previous two sections, we defer it to Appendix~\ref{app:edge-charging-for-k-partite-B}.
\begin{lemma}
    \label{lem:edge-charging-for-k-partite-B}
    If $\alpha=\frac{3}{8},\beta=\frac{2}{3}$,
    there exists a charging scheme such that 
    \begin{enumerate}
        \item for any triangle $t\in \binom{[n]}{3}$, 
        \begin{align}
            \label{eqn:improved-M-bound-by-k-paritite-B}
            \sum_{(i,j)\in t} \expect[M_{i,j,t}|\mathcal{A}_t]\leq \sum_{(i,j)\in t\cap E} \expect[B_{i,j,t}|\mathcal{A}_t] \cdot c^*(i,j)~.
        \end{align}
        \item for any $b>0$, $(i,j)\in E$ and any $t\owns i,j$, in any recursive call, conditioning on $\mathcal{A}_t$ and that $(i,j)$ is a non-pivot edge charged by $B_{i,j,t}=b$, the probability that $i$ and $j$ are partitioned into different sets on line~\ref{line:partition-non-pivots} of Algorithm~\ref{alg:pivot} is at least $q_{i,j}\cdot b$, where 
            \begin{align*}
                q_{i,j} 
                =
                \begin{cases}
                   1/16  & \text{if $(i,j)\in E_{\lowerr}$}\\
                   1/8  & \text{if $(i,j)\in E_{\highdet}$}\\
                   3/32  & \text{if $(i,j)\in E_{\highrand}$}
                \end{cases}
            \end{align*}
    \end{enumerate}
\end{lemma}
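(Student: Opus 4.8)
The plan is to build the charging scheme $\{B_{i,j,t}\}$ by a case analysis over the type of the triangle $t=(i,j,k)$, in the spirit of Section~\ref{sec:ratio-complete} but refined to the $k$-partite setting. Recall that here $w(i,j)=1$ for $(i,j)\in E$ and $w(i,j)=0$ for $(i,j)\in E_{\varnothing}$, so every $E_{\varnothing}$ edge has $c^*(i,j)=0$ (it may never be charged) and $M_{i,j,t}=0$ (it is never low-cost, so it never contributes to the left side of~\eqref{eqn:improved-M-bound-by-k-paritite-B}). A useful structural fact is that in a complete $k$-partite graph, among the three edges of a triangle the number of $E_{\varnothing}$ edges is $0$, $1$, or $3$ — never $2$ — so the $E/E_{\varnothing}$ refinement of each triangle class is light. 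I would classify $t$ first by the number of deterministic edges in the current call ($(d,d,d)$, $(d,d,r)$, $(d,r,r)$, or $(r,r,r)$), then, in the $(d,d,\cdot)$ cases, by whether the two deterministic edges share a dominant level ("same" vs. "diff"), and finally by which edge (if any) lies in $E_{\varnothing}$. As in Section~\ref{sec:ratio-complete}, $(r,r,r)$-triangles carry no low-cost edge by Corollary~\ref{cor:low-err-are-det} and are handled with $B\equiv 0$; and in every case a non-pivot edge in $E_{\varnothing}$ or in $E_{\highrand}$ gets $B\equiv 0$, since $M$ vanishes there.

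The toolkit is the one already assembled. Lemma~\ref{lem:k-prob-disappear0} supplies the separation probabilities that drive the second bullet: $1$ when the two pivot edges are deterministic with distinct dominant levels, $\alpha\beta=\tfrac14$ when one pivot edge is random in $E_{\varnothing}$, and $\tfrac{\alpha(1-\beta)}{1-\alpha\beta}=\tfrac16$ when one pivot edge is random in $E$. Lemmas~\ref{lem:stronger-y-bounded-by-c*} and~\ref{lem:y-bounded-by-0} bound $y_{\ell^*(i,j)-1}(i,j)$ and $1-y_{\ell^*(i,j)}(i,j)$ by $c^*(i,j)$ (by $\tfrac{\alpha}{1-\alpha}c^*(i,j)$ for $E_{\highdet}$, by $0$ for deterministic $E_{\highrand}$ edges), while a deterministic $E_{\varnothing}$ edge additionally satisfies $y_{\ell^*(i,j)-1}(i,j),\,1-y_{\ell^*(i,j)}(i,j)<\alpha\beta$ thanks to the larger threshold $1-\alpha\beta$ used by the algorithm. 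Lemma~\ref{lem:low-error-modify} tells us a low-cost non-pivot edge is modified only because of an ultrametric violation at the three dominant levels, or because a random pivot edge lands on a strictly longer value, and the per-level triangle inequality~\eqref{eqn:umvd-lp-triangle} turns statements about two edges of $t$ into statements about the third. For each subclass I would write $M_{i,j,t}$ as $\ind(\text{pivot choice})\cdot\ind(\text{forcing event})$, bound $\expect[M_{i,j,t}\mid\mathcal{A}_t]$ with these tools, and then choose each nonzero $B_{i,j,t}$ as a constant (or constant times an indicator) tuned so that~\eqref{eqn:improved-M-bound-by-k-paritite-B} holds while $q_{i,j}\cdot B_{i,j,t}$ stays below the corresponding separation probability of Lemma~\ref{lem:k-prob-disappear0}. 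The values $\alpha=\tfrac38,\ \beta=\tfrac23$ are chosen precisely so that both requirements are met with $q_{i,j}\in\{\tfrac1{16},\tfrac18,\tfrac3{32}\}$; Lemma~\ref{lem:bound-of-sum-B} then converts this into the total-charge bounds, and with Lemma~\ref{lem:approx-ratio} this gives $\max\{16,\,8+\tfrac1{1-\alpha},\,\tfrac{32}3+\tfrac1\alpha\}=16$.

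The main obstacle is the family of "bad" $(d,d,r)$-same-triangles and, more generally, triangles containing one $E_{\varnothing}$ edge — the situation flagged in the technical overview, say $t=(i,j,u)$ with $(i,j),(j,u)\in E$ and $(i,u)\in E_{\varnothing}$. In the complete-graph analysis of Section~\ref{sec:ratio-complete}, when $u$ is the pivot and $(i,u)$ is random, the modification of the non-pivot edge $(i,j)$ would be paid for using $c^*(i,u)$ as well; now $c^*(i,u)=0$, so the cost must be absorbed by $c^*(i,j)$ and $c^*(j,u)$ alone. Two design choices of Algorithm~\ref{alg:pivot} make this possible, and both must be used quantitatively. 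First, the larger threshold $1-\alpha\beta$ for $E_{\varnothing}$ edges: a deterministic $(i,u)\in E_{\varnothing}$ has $y_{\ell^*(i,u)-1}(i,u)<\alpha\beta$, severely limiting any ultrametric violation and letting the $(i,u)$ term be swallowed by $c^*(i,j)+c^*(j,u)$ through~\eqref{eqn:umvd-lp-triangle}. Second, the rejection-sampling CCDF~\eqref{eqn:distance-ccdf} for random edges in $E$: when $(j,u)$ is random in $E$, the probability of the event on $(j,u)$ that forces a change to $(i,j)$ is at most $\tfrac{1}{1-\alpha\beta}(y_{\ell}(j,u)-\alpha\beta)^+$ at the relevant level $\ell$, and chaining this with~\eqref{eqn:umvd-lp-triangle} and Lemma~\ref{lem:stronger-y-bounded-by-c*} bounds $\Pr[(i,j)\text{ modified}]$ by an $O(1)$ multiple of $c^*(i,j)+c^*(j,u)$, with no $c^*(i,u)$ appearing. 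Checking that these mechanisms, together with the separation bounds of Lemma~\ref{lem:k-prob-disappear0}, simultaneously satisfy~\eqref{eqn:improved-M-bound-by-k-paritite-B} and the $q_{i,j}$-bounds across all the $(d,d,d)$, $(d,d,r)$-same, $(d,d,r)$-diff, and $(d,r,r)$ subclasses — with the extra $E/E_{\varnothing}$ splitting, and keeping track of which of $E_{\lowerr},E_{\highdet},E_{\highrand}$ each edge lies in — is the bulk of the work; I would carry it out subclass-by-subclass as in Sections~\ref{sec:ratio-complete} and~\ref{sec:ratio-s-weighted}, deferring the routine but lengthy arithmetic to Appendix~\ref{app:edge-charging-for-k-partite-B}.
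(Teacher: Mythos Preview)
Your plan matches the paper's proof: the same triangle classification, the same $E/E_{\varnothing}$ sub-splitting, the same toolkit (Lemmas~\ref{lem:stronger-y-bounded-by-c*}, \ref{lem:y-bounded-by-0}, \ref{lem:low-error-modify}, \ref{lem:k-prob-disappear0}), and the same exploitation of the $(\cdot-\alpha\beta)^+$ truncation in~\eqref{eqn:distance-ccdf} to handle the $(d,d,r)$-same case with a deterministic $E_{\varnothing}$ edge.

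One correction is needed, though: your claim that ``in every case a non-pivot edge in $E_{\varnothing}$ or in $E_{\highrand}$ gets $B\equiv 0$, since $M$ vanishes there'' is false for $E_{\highrand}$ and, if followed literally, breaks the scheme. The charge $B_{e,t}\cdot c^*(e)$ is used to pay for modifications on \emph{other} edges of $t$, not only on $e$ itself; an $E_{\highrand}$ edge has $c^*\ge\alpha=\tfrac38$, and this is precisely the resource the paper taps in the $(d,r,r)$ case (only the deterministic edge can be low-cost, but its $c^*$ may be arbitrarily small, so the random $E_{\highrand}$ edges must carry the charge---the paper sets $B=\tfrac83$ or $\tfrac43$ on them) and in the $(d,d,r)$-diff case when the random edge lies in $E$ (the paper sets $B_{j,k,t}=\tfrac{16}{3}$ on that $E_{\highrand}$ edge). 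Section~\ref{sec:ratio-complete} already does this too---$B_{j,k,t}=2$ on the random edge in the $(d,d,r)$-diff analysis and $B\equiv 1$ on all three edges in the $(d,r,r)$ analysis of Appendix~\ref{app:edge-charging-for-B}---so your ``as in Section~\ref{sec:ratio-complete}'' is inaccurate on this point. With that fix, the remainder of your outline goes through exactly as the paper does it.
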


A technical highlight in this section is that we round differently for edges in $E$ versus edges in $E_{\varnothing}$. 
This overcomes the challenge we face for $(d,d,r)$-same-triangles when analyzing the weighted cases with triangle inequality constraints. 
Specifically, in a recursive call with truncated LP solution $y$, when considering a $(d,d,r)$-same-triangle $t=(i,j,k)$, in which $(j,k)$ is the random edge, we encounter the issue if the triangle has a deterministic edge in $E_{\varnothing}$, say $(i,k)\in E_{\varnothing}$. 
We can upper bound the probability that $(i,j)$ is modified by $c^*(i,j)+c^*(i,k)$ (shown by Eqn.~\eqref{eqn:ddr-same-modify-prob} and Lemma~\ref{lem:stronger-y-bounded-by-c*}).
However, on $k$-partite graphs, the LP contribution of $(i,k)$ is $0$, and we need to charge the edge $(j,k)$ when $c^*(i,j) \ll c^*(i,k)$. 
As in the analysis of the weighted case, this only results in an approximation ratio of $O(\min\{L,\log n\})$.
To get a constant factor approximation, we modify the algorithm so as to upper bound the probability that $(i,j)$ is modified solely by $c^*(i,j)$.
We introduce different thresholds to determine whether edges in $E$ and $E_{\varnothing}$ are deterministic and round random edges in $E$ in a (slightly) different way.
As shown in Section~\ref{sec:algo}, the threshold for edges in $E$ is $1-\alpha$ while the threshold for edges in $E_{\varnothing}$ is $1-\alpha\beta$. 
Suppose $\ell^*$ is the dominant level of $(i,j)$.
According to the CCDF \eqref{eqn:distance-ccdf}, the distance of the random edge $(j,k)\in E$ is strictly greater than $d_{\ell^*}$ with probability $\frac{(y_{\ell^*-1}(j,k)-\alpha\beta)^+}{1-\alpha\beta}$. 
Using the LP constraint~\eqref{eqn:umvd-lp-triangle} and the fact that the deterministic edge $(i,k)$ has $y_{\ell^*-1}(i,k)<\alpha\beta$, we can upper bound the probability that $(i,j)$ is modified solely by $O(y_{\ell^*-1}(i,j))$. 
The detailed analysis can be found in Case 2 of Appendix~\ref{app:k-part-ddr-same-triangles}.
See Figure~\ref{fig:sol-for-ddr-same} for an illustration of why our algorithm resolves this issue.

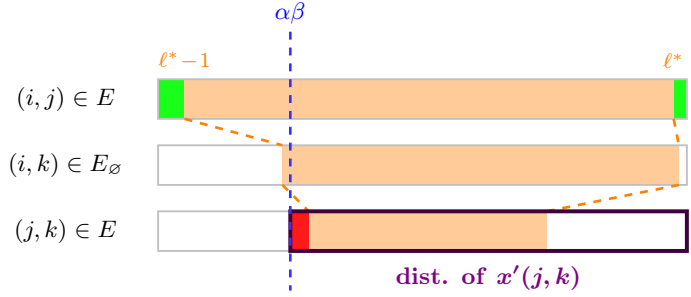
\begin{figure}[!t]
    \centering
    \begin{tikzpicture}       
        \fill[fill=red!90] (50pt, 0pt) rectangle ++ (7pt, 15pt);
        \fill[fill=orange!40] (57pt, 0pt) rectangle ++ (90pt, 15pt);
        \draw[draw=lightgray, line width=0.7pt] (0 pt, 0 pt) rectangle ++ (200pt, 15pt);

        \fill[fill=orange!40] (47pt, 25pt) rectangle ++ (150pt, 15pt);
        \draw[draw=lightgray, line width=0.7pt] (0 pt, 25 pt) rectangle ++ (200pt, 15pt);

        \fill[fill=green!90] (0 pt, 50 pt) rectangle ++ (10pt, 15pt);
        \fill[fill=green!90] (195 pt, 50 pt) rectangle ++ (5pt, 15pt);
        \fill[fill=orange!40] (10 pt, 50 pt) rectangle ++ (185pt, 15pt);
        \draw[draw=lightgray, line width=0.7pt] (0 pt, 50 pt) rectangle ++ (200pt, 15pt);

        \node at (10pt, 72pt) {\color{orange} \scriptsize $\ell^*\!-\!1$};
        \node at (195pt, 72pt) {\color{orange} \scriptsize $\ell^*$};
        \draw[draw=orange, line width=1.0pt, dashed, -] (10pt, 50pt) -- (47pt, 40pt);
        \draw[draw=orange, line width=1.0pt, dashed, -] (47pt, 25pt) -- (57pt, 15pt);
        \draw[draw=orange, line width=1.0pt, dashed, -] (195pt, 50pt) -- (197pt, 40pt);
        \draw[draw=orange, line width=1.0pt, dashed, -] (197pt, 25pt) -- (147pt, 15pt);

        \draw[draw=violet!150, line width=1.5pt] (50 pt, 0 pt) rectangle ++ (150 pt, 15pt);
        \node at (125pt, -10pt) {\color{violet} \footnotesize \textbf{dist. of }$\boldsymbol{x'(j,k)}$};
       
       \draw[draw=blue!80, line width=1.0pt, dashed, -] (50pt, 83pt) -- (50pt, -15pt);
       \node at (50pt, 90pt) {\color{blue} \footnotesize $\mathbf{\alpha\beta}$};
       
       \node at (-35pt, 7.5pt) {\footnotesize $(j,k)\in E$};
       \node at (-35pt, 32.5pt) {\footnotesize $(i,k)\in E_{\varnothing}$};
       \node at (-35pt, 57.5pt) {\footnotesize $(i,j)\in E$};
    \end{tikzpicture}
    \caption{An example of the rounding process in $(d,d,r)$-same-triangles with a deterministic edge in $E_{\varnothing}$. 
    The length of the red part indicates the probability that $(i,j)$ is modified in this triangle (up to a constant factor).
    The total length of the green parts indicates the LP cost of $(i,j)$.}
    \label{fig:sol-for-ddr-same}
\end{figure}

Finally, we establish the approximation ratio for $k$-partite graphs.
\begin{proof}[Proof of Theorem~\ref{thm:16-approx-k-part}]
Because of Lemma~\ref{lem:bound-of-sum-B} and the second bullet of Lemma~\ref{lem:edge-charging-for-k-partite-B}, we can present the second condition of Lemma~\ref{lem:approx-ratio} as follows:
\begin{align*}
\forall (i,j)\in E, \quad 
\sum_{t:i,j\in t} \expect[B_{i,j,t}]
\leq
\begin{cases}
    16 & \text{if $(i,j)\in E_{\lowerr}$}~,\\
    8  & \text{if $(i,j)\in E_{\highdet}$}~,\\
    32/3  & \text{if $(i,j)\in E_{\highrand}$}~.
\end{cases}
\end{align*}
Because the first bullet of Lemma~\ref{lem:edge-charging-for-k-partite-B} meets that of Lemma~\ref{lem:approx-ratio}, if $\alpha=\frac{8}{3},\beta=\frac{2}{3}$, Algorithm~\ref{alg:pivot} is at most $\max\{16, 8+\frac{8}{5}, \frac{32}{3}+\frac{8}{3}\}=16$-approximation for \UMVD on complete $k$-partite graphs.
\end{proof}

\section*{Acknowledgment}
Moses Charikar was supported by a Simons Investigator award. Ruiquan Gao was supported by a Stanford Graduate Fellowship.


\bibliographystyle{alpha}
\bibliography{bib}

\appendix


\section{Examples Regarding Different Rounding Schemes}
\label{app:counter}
In this appendix section, we present examples demonstrating the issues with the purely randomized rounding scheme and how our rounding scheme resolves them.
Our first example is an unweighted \UMVD instance on complete $k$-partite graphs with $L=2$, where triple-based analysis encounters issues when analyzing the purely randomized rounding scheme to a constant factor approximation.

\begin{Example}
\label{exp:counter-for-random}
Consider an \UMVD instance on complete $k$-partite graphs with $L=2$.
Consider three distinct vertices $i,j,k\in [n]$.
Suppose $(i,j),(i,k)\in E, (j,k)\in E_{\varnothing}$, and the input satisfies $x_{\text{in}}(i,j)=d_{2}, x_{\text{in}}(i,k)=x_{\text{in}}(j,k)=d_{1}$.
In the language of \CClust, $(i,j)$ is a + edge while $(i,k)$ and $(j,k)$ are - edges.
Because we assume $y_{L}(u,v)=1$ for any distinct $u,v\in [n]$, we can use $a,b,c$ to respectively denote the only variables $y_{1}(i,j),y_{1}(i,k),y_{1}(j,k)$ and we call them the lengths of the edges.
Pick a huge $m=\omega(1)$ and a tiny $\eps=o(1/m)$.
Suppose the LP solution gives $a=m\eps$, $b=1-\eps$ and $c=1-(m+1)\epsilon$.

The LP contributions on these three edges are respectively $m\eps, \eps, 0$ because edges in $E_{\varnothing}$ do not contribute to the LP objective.
On the other hand, if we use the purely randomized rounding, when the pivot is $k$, with probability $b(1-c)=(1-o(1))\cdot (m+1)\eps$, $x'(i,k)=d_1,x'(j,k)=d_2$ and we need to modify the distance of $(i,j)$ to $d_1$ accordingly.
Hence, the expected cost of the algorithm in this triangle is at least $(1-o(1))\cdot (m+1)\eps$.
If we want the cost to be upper bounded by the charges on the edges, we need: either charge $(i,j)$ by a constant multiple of its LP contribution; or charge $(i,k)$ by a $\Omega(m)$ multiple of its LP contribution.
If we want to prove a constant factor approximation, we can only place our hope on charging $(i,j)$.
However, with probability $bc=1-o(1)$, $x'(i,k)=x'(j,k)=d_1$ and $(i,j)$ will appear in another recursive call.
This means that the total charge on $(i,j)$ can be a $\Omega(1)/(1-bc)=\omega(1)$ multiple of its LP contribution, which does not lead to a constant factor approximation analysis.
\end{Example}

Notice that in the first example, the three edges are deterministic if $\alpha$ is some positive constant and they have very low LP contributions. 
Our second example is an arbitrary weighted \UMVD instance with $L=2$, which is more general than the $k$-partite cases.
In this example, our rounding scheme will not modify any edge in such triangles.
This example is a formal version of the example we present in the Introduction.

\begin{Example}
\label{exp:no-violation}
Consider any weighted \UMVD instance with $L=2$ and the triangle inequality constraint.
Suppose we choose parameters $\alpha=\frac{1}{3},\beta=0$ in Algorithm~\ref{alg:pivot}. 
Consider three distinct vertices $i,j,k\in [n]$.
Because we have w.l.o.g. assumed $y_{L}(u,v)=1$ for any distinct $u,v\in [n]$, we can use $a,b,c$ to denote the only variables $y_{1}(i,j),y_{1}(i,k),y_{1}(j,k)$ and we call them the lengths of the edges.
Consider the scenario where $(i,j), (i,k)\text{ and }(j,k)$ are all deterministic, i.e., $a,b,c\in [0,\frac{1}{3})\cup (\frac{2}{3},1]$.
If the length of an edge is less than $\frac{1}{3}$, the dominant level is 2; otherwise, the dominant level is 1.
In addition, we assume that all three edges have LP contributions less than $\frac{1}{3}$, which implies their input equals their dominant distance level.
To show that our algorithm does not modify any edge in this triangle, it suffices to show that the ultrametric inequality is automatically satisfied if we set the distances to the dominant levels.

By setting the distances to the dominant distance levels, the only possible violation of the ultrametric inequality is when two of them have dominant level 2 and the other one has dominant level 1, w.l.o.g. say, $(j,k)$ is the only edge with dominant level 1.
If this violation happens, we will have $a,b<\frac{1}{3}$ and $c>\frac{2}{3}$, violating the LP constraint~\eqref{eqn:umvd-lp-triangle} by $a+b<c$.
\end{Example}

\section{Omitted Proofs}
\subsection{Proof of Lemma~\ref{lem:output-ultrametric}}
\label{app:output-ultrametric}

First, we can observe that the upper bound level $u$ automatically implies an upper bound for the input:
\begin{lemma}
    \label{lem:upper-bound-for-input}
    In any call $\algoname(V,x,u)$ of our algorithm, the input $x$ satisfies $x(i,j)\leq d_u$ for any $(i,j)\in \binom{V}{2}$.
\end{lemma}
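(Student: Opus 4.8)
The plan is to prove the statement by induction on the recursion tree of Algorithm~\ref{alg:pivot}, following the recursive calls from the root downward. The base case is the root call $\algoname([n], x_{\text{in}}, 1)$, for which $u=1$ and $d_u = d_1$ is by definition the largest distinct distance appearing in the input $x_{\text{in}}$; hence every input entry satisfies $x_{\text{in}}(i,j) \le d_1 = d_u$ trivially. (One small caveat: the lemma as stated concerns all calls, and for the root call the input is $x_{\text{in}}$ whose values all lie in $\{d_1,\dots,d_L\}$, so the bound is immediate.)

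For the inductive step, consider a call $\algoname(V,x,u)$ for which the claim holds, with pivot $i$, and examine one of its child calls $\algoname(V_\ell, x'|_{\binom{V_\ell}{2}}, \ell)$ created on line~\ref{line:child-call}, where $\ell \in \{u, u+1, \dots, L\}$ and $V_\ell = \{j \in V : x'(i,j) = d_\ell\}$. I need to show that every entry $x'(j,k)$ with $j,k \in V_\ell$ satisfies $x'(j,k) \le d_\ell$. First note that since $x'$ only takes values among $d_1, \dots, d_L$ (the pivot edges are set to some $d_{\ell^*}$ either deterministically or via the CCDF sampling, and the non-pivot edges are set to $\min$ or $\max$ of such values), it suffices to reason about distance \emph{levels}. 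By construction of $V_\ell$, we have $x'(i,j) = x'(i,k) = d_\ell$, so in the loop over $\binom{V\setminus\{i\}}{2}$ the algorithm takes the branch $x'(j,k) \gets \min\{x(j,k), x'(i,j)\} = \min\{x(j,k), d_\ell\} \le d_\ell$. This is exactly the required bound for the child call's input.

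I expect this to be essentially routine — the main (minor) obstacle is just bookkeeping: being careful that $x'$ restricted to $\binom{V_\ell}{2}$ is indeed what gets passed down (it is, by the definition on line~\ref{line:child-call}), and noting that the case analysis in the non-pivot loop is clean because endpoints in the same $V_\ell$ always hit the ``equal pivot distance'' branch. One could alternatively phrase the argument without induction by observing directly that in any call at level $u$, every non-root entry was produced by the parent's $\min\{\cdot, d_u\}$ assignment (for intra-$V_u$ edges). The inductive phrasing is cleanest. No deeper difficulty arises, since we do not need any property of the LP solution or the rounding — only the combinatorial structure of how distances are propagated to children, together with $d_1 > d_2 > \cdots > d_L$ and the fact that $d_1$ bounds the original input.
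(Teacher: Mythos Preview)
Your proof is correct and follows essentially the same approach as the paper's: induction on the recursion tree, with the root call as the base case (since $d_1$ is the maximum input distance) and the inductive step observing that for $j,k\in V_\ell$ the algorithm sets $x'(j,k)=\min\{x(j,k),d_\ell\}\le d_\ell$. Your additional remark that the inductive hypothesis is not actually needed in the step is accurate and harmless.
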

\begin{proof}
    We prove the lemma by induction. 
    The base case is clear for the root call because $u=1$ and $d_1$ equals the largest entry in $x_{\text{in}}$.
    Suppose the lemma holds for $\algoname(V,x,u)$, whose random pivot equals $i$, and consider any of its children calls $\algoname(V_{\ell},x',\ell)$. 
    According to the definition of $V_{\ell}$ (line~\ref{line:partition-non-pivots}), for any $j,k\in V_{\ell}$, $x'(i,j)=x'(i,k)=d_{\ell}$. 
    Because the algorithm sets $x'(j,k)$ to $\min\{x(j,k),x'(i,j)\}$ in this case, we have $x'(j,k)\leq d_{\ell}$ and we prove the lemma for this child call.
\end{proof}

Next, we show a stronger version of the lemma by induction: 
each call $\algoname(V,x,u)$ returns an ultrametric $x'$ satisfying $x'(i,j)\leq d_{u}$ for any $(i,j)\in \binom{V}{2}$. 
The base cases are clear when $|V|\leq 2$ or $u=L$.

Consider a call with $|V|=n_0$ ($n_0\geq 3$) and $u<L$. 
Suppose we have shown the stronger version for any $|V|<n_0$.
Observe that the set of return values in $x'$ is a subset of those in $x$.
Because of Lemma~\ref{lem:upper-bound-for-input}, we have $x'(i,j)\leq d_u$ for any $(i,j)\in \binom{V}{2}$.
Suppose $i$ is the random pivot of the call.
We discuss two cases to show that the returned $x'$ satisfies the ultrametric inequality for any triangle in $V$.
\paragraph{The triangle involves $i$.} 
Fix any distinct $j,k\in V\setminus\{i\}$. 
If $x'(i,j)=x'(i,k)$, $j,k$ will appear in a child call, and thus the ultrametric inequality is satisfied because $x'(j,k)\leq x(i,j)$ according to the induction hypothesis.
If $x'(i,j)\neq x'(i,k)$, the algorithm sets $x'(j,k)$ to $\max\{x'(i,j),x'(i,k)\}$ and thus the ultrametric inequality is satisfied.
\paragraph{The triangle does not involve $i$.} 
Fix any distinct $j,k,r\in V\setminus\{i\}$. 
If $x'(i,j)=x'(i,k)=x'(i,r)$, the vertices appear in a child call and thus satisfy the ultrametric inequality according to the induction hypothesis.
Otherwise, either one of them is strictly greater than two others or two of them are equal and strictly greater than the other one. 
For the first case, w.l.o.g., suppose $x'(i,r)>x'(i,j),x'(i,k)$. 
The algorithm then sets $x'(j,r)$ and $x'(k,r)$ to $x'(i,r)$. 
Therefore, $x'(j,k)=\max\{x'(i,j),x'(i,k)\}$ or $j,k$ will appear in a child call, where we both have $x'(j,k)<x'(j,r)$ according to the induction hypothesis.
For the second case, w.l.o.g., suppose $x'(i,r)<x'(i,j)=x'(i,k)$. 
The algorithm then sets $x'(j,r)$ and $x'(k,r)$ to $x'(i,j)$. 
And, $j,k$ will appear in a child call and thus $x'(j,k)\leq x'(i,j)$ according to the induction hypothesis.

\subsection{Proof of Lemma~\ref{lem:highdet-error-bound}}
\label{app:highdet-error-bound}
Because $(i,j)$ is initially deterministic, if the dominant level $\tilde{\ell}(i,j)=\ell^*(y^*,i,j)$, $c^*(i,j)=1-\Delta y^*_{\ell^*(i,j)}(i,j)<\alpha$, violating $(i,j)$ is a high-cost edge. Therefore, $\tilde{\ell}(i,j)\neq \ell^*(y^*,i,j)$. Since $y^*_0(i,j)=0$ and $y^*_L(i,j)=1$, $\sum_{\ell=1}^L \Delta y^*_{\ell}(i,j)=1$. Hence, $c^*(i,j)=1-\Delta y^*_{\tilde{\ell}(i,j)}(i,j) \geq \Delta y^*_{\ell^*(i,j)}(i,j)>1-\alpha$.

\subsection{Proof of Lemma~\ref{lem:delta-y-inc}}
\label{app:delta-y-inc}
Suppose $u$ is the level in the recursive call and $u'$ is the level in the child call. 
According to the algorithm, we have $u'\geq u$. 
Fix any edge $(i,j)\in \binom{V'}{2}$. 

If $\ell^*(y,i,j)>u'$, since the algorithm sets both $y_{\ell}(i,j)$ and $y'_{\ell}(i,j)$ to $y^*_{\ell}(i,j)$ for any $\ell\in\{u',u'+1,\cdots,L\}$, we have $\Delta y'_{\ell^*(y,i,j)}(i,j)=\Delta y^*_{\ell^*(y,i,j)}(i,j)=\Delta y_{\ell^*(y,i,j)}(i,j)$.
Therefore, according to the definition of the dominant level, $\Delta y'_{\ell^*(i,j)}(i,j)\geq \Delta y'_{\ell^*(y,i,j)}(i,j) = \Delta y_{\ell^*(i,j)}(i,j)$. 

Notice that $\Delta y_{\ell}(i,j)\leq y^*_{\ell}(i,j)$ for any $\ell\in [L]$ and $\Delta y'_{u'}(i,j)=y^*_{u'}(i,j)$. 
On the other hand, if $\ell^*(y,i,j)\leq u'$, because of Lemma~\ref{lem:feasible-y-in-calls} and the LP constraint~\eqref{eqn:umvd-lp-increasing}, we have $\Delta y'_{u'}(i,j)=y^*_{u'}(i,j)\geq y^*_{\ell^*(y,i,j)}(i,j) \geq \Delta y_{\ell^*(i,j)}(i,j)$ and thus $\Delta y'_{\ell^*(i,j)}(i,j)\geq \Delta y_{\ell^*(i,j)}(i,j)$.

In particular, in the root call, because $u=1$, $y^*$ equals the truncated LP solution. Therefore, for any $(i,j)\in \binom{[n]}{2}$, $\Delta y_{\ell^*(i,j)}(i,j)\geq \Delta y^*_{\ell^*(i,j)}(i,j)$.

\subsection{Proof of Lemma~\ref{lem:low-error-modify}}
\label{app:low-error-modify}
    We prove the lemma by a top-down induction on the recursive calls.
    The base case is the root call $\algoname([n], x_{\text{in}}, 1)$. 
    In the root call, because $u=1$, the truncated LP solution $y$ equals $y^*$.
    For any $(i,j)\in E_{\lowerr}$, according to the definition of low-cost edges, $\Delta y_{\tilde{\ell}(i,j)}(i,j) = 1-c^*(i,j)>1-\alpha$. 
    From Lemma~\ref{lem:feasible-y-in-calls} and $\alpha\leq \frac12$, $\tilde{\ell}(i,j)=\ell^*(y,i,j)$. 
    Since $x(i,j)=d_{\tilde{\ell}(i,j)}$ in the root call, we prove the lemma for the base case.

    Consider any call $\algoname(V, x, u)$ and any child call $\algoname(V_{\ell}, x', \ell)$. 
    Note that $\ell\geq u$ and the input distance $x'$ of the child call is consistent with the $x'$ in the parent call before partitioning non-pivot vertices (line~\ref{line:partition-non-pivots}). 
    Assume we have proved the lemma for the parent call. 
    Suppose the truncated LP solutions in the calls are respectively $y$ and $y'$ and the pivot vertex of the parent call is $i$.
    Consider any edge $(j,k)\in \binom{V_{\ell}}{2}$. 
    According to the definition of $V_{\ell}$, $x'(i,j)=x'(i,k)=d_{\ell}$. 
    Because the algorithm sets $x'(j,k)$ to $\min\{x(j,k),d_{\ell}\}$ in the parent call, we can then prove the lemma for the child call by discussing the following two cases.
    \begin{enumerate}
        \item If $x(j,k)\geq d_{\ell}$, $x'(j,k)=d_{\ell}$. 
        According to the induction hypothesis, $d_{\ell^*(y,j,k)}=x(j,k)\geq d_{\ell}$ and thus $\ell^*(y,i,j)\leq \ell$. 
        Because of the definition of the truncated LP solutions, the LP constraint~\eqref{eqn:umvd-lp-increasing} and Corollary~\ref{cor:low-err-are-det}, we have
        $
            \Delta y'_{\ell}(i,j) = y^*_{\ell}(i,j)\geq y^*_{\ell^*(y,i,j)}(i,j) \geq \Delta y_{\ell^*(i,j)}(i,j) > 1-\alpha.
        $
        Since $\sum_{\ell\in [L]} \Delta y'_{\ell}(j,k)=1$ (Lemma~\ref{lem:feasible-y-in-calls}) and $\alpha\leq \frac12$, $\ell^*(y',j,k)=\ell$ and thus $x'(j,k)=d_{\ell^*(y',j,k)}$.
        \item If $x(j,k)<d_{\ell}$, $x'(j,k)=x(j,k)$. 
        According to the induction hypothesis, $d_{\ell^*(y,j,k)}=x(j,k)$ and thus $x'(j,k)=d_{\ell^*(y,i,j)}$ and $\ell^*(y,j,k)>\ell\geq u$. 
        Because of the definition of the truncated LP solutions and Corollary~\ref{cor:low-err-are-det}, $\Delta y'_{\ell^*(y,j,k)}(j,k)=\Delta y^*_{\ell^*(y,j,k)}(j,k)=\Delta y_{\ell^*(j,k)}(j,k)>1-\alpha$. 
        Since $\sum_{\ell\in [L]} \Delta y'_{\ell}(j,k)=1$ (Lemma~\ref{lem:feasible-y-in-calls}) and $\alpha\leq \frac12$, $\ell^*(y',i,j)=\ell^*(y,i,j)$ and thus $x'(j,k)=d_{\ell^*(y',i,j)}$. 
    \end{enumerate}

    In particular, because the algorithm sets the distances of deterministic pivot edges to its dominant distance level, any edge $(i,j)\in E_{\lowerr}$ is not modified in any recursive call where it appears as a pivot edge.

\subsection{Missing proof of Lemma~\ref{lem:edge-charging-for-B}}
\label{app:edge-charging-for-B}
\subsubsection{$(d,d,r)$-diff-triangles} 
W.l.o.g. we assume $(i,j)$ and $(i,k)$ are deterministic, and $\ell^*(y,i,j)<\ell^*(y,i,k)$ (equivalently, $d_{\ell^*(y,i,j)}>d_{\ell^*(y,i,k)}$). 
Because of Lemma~\ref{lem:low-error-modify}, the input distances of $(i,j)$ and $(i,k)$ are respectively $d_{\ell^*(y,i,j)}$ and $d_{\ell^*(y,i,k)}$.
Since $d_{\ell^*(y,i,j)}>d_{\ell^*(y,i,k)}$, $M_{i,j,t}$ (or $M_{i,k,t}$) equals $1$ only when the pivot is $k$ (resp., $j$), it is low-cost and the random distance $x'(j,k)\neq d_{\ell^*(y,i,j)}$, i.e.,
\begin{align*}
    M_{i,j,t} &= \ind(\text{$k$ is the pivot vertex})\cdot \ind((i,j)\in E_{\lowerr}) \cdot \ind(x'(j,k)\neq d_{\ell^*(y,i,j)})~,
    \\
    M_{i,k,t} &= \ind(\text{$j$ is the pivot vertex})\cdot \ind((i,k)\in E_{\lowerr}) \cdot \ind(x'(j,k)\neq d_{\ell^*(y,i,j)})~.
\end{align*}
According to the CCDF~\eqref{eqn:complete-distance-ccdf} of the random distance, because $y$ is feasible in \eqref{eqn:umvd-lp}, the probability that $x'({j,k})\neq d_{\ell^*(y,i,j)}$ is 
\begin{align}
    1 - \Delta y_{\ell^*(i,j)}(j,k)
    &
    = 
    1-y_{\ell^*(i,j)}(j,k)+y_{\ell^*(i,j)-1}(j,k)
    \notag
    \\
    &
    \leq 
    1 - (y_{\ell^*(i,j)}(i,j) - y_{\ell^*(i,j)}(i,k)) + y_{\ell^*(i,j)-1}(i,j) + y_{\ell^*(i,j)-1}(i,k)
    \notag
    \\
    &
    \leq 
    (1 - \Delta y_{\ell^*(i,j)}(i,j)) + 2y_{\ell^*(i,k)-1}(i,k) ~.
    \label{eqn:ddr-diff-modify-prob}
\end{align}
Because of Corollary~\ref{cor:low-err-are-det}, $(j,k)\in E_{\highrand}$ and thus $c^*(i,j)\geq \alpha$. 
Note that the deterministic edge $(i,k)$ satisfies $y_{\ell^*(i,k)-1}(i,k)< \alpha$ (Eqn.~\eqref{eqn:y-of-a-det-edge}).
We can upper bound $1 - \Delta y_{\ell^*(i,j)}(j,k)$ by $1 - \Delta y_{\ell^*(i,j)}(i,j)+y_{\ell^*(i,k)-1}(i,k)+c^*(j,k)$.
Therefore, in this class of triangles,
\begin{align*}
    \sum_{(i',j')\in t} \expect[M_{i',j',t}] 
    \leq 
    \frac{2\big(1 - \Delta y_{\ell^*(i,j)}(i,j)+y_{\ell^*(i,k)-1}(i,k)+c^*(j,k)\big)}{3} 
    ~.
\end{align*}

Accordingly, when the corresponding edge is non-pivot, we define $B_{i,j,t},B_{i,k,t},B_{j,k,t}$ as follows:
\begin{align*}
    B_{i,j,t}
    \defeq
    \begin{cases}
         2 & \text{if $(i,j) \in E_{\lowerr}$}\\
         \frac{2\alpha}{1-\alpha} & \text{if $(i,j)\in E_{\highdet}$}\\
         1 & \text{if $(i,j)\in E_{\highrand}$}
    \end{cases}
    , \quad\quad
    B_{i,k,t}
    \defeq
    \begin{cases}
        2 & \text{if $(i,k)\in E_{\lowerr}$}\\
        \frac{2\alpha}{1-\alpha} & \text{if $(i,k)\in E_{\highdet}$}\\
        0 & \text{if $(i,k)\in E_{\highrand}$}
    \end{cases}
    , \quad\quad
    B_{j,k,t} 
    \defeq 
    2
\end{align*}
Because of Lemma~\ref{lem:stronger-y-bounded-by-c*} and~\ref{lem:y-bounded-by-0}, this definition satisfies
\begin{align*}
    \expect[B_{i,j,t}] \cdot c^*(i,j)
    \geq 
    \frac{2}{3} &\cdot (1 - \Delta y_{\ell^*(i,j)}(i,j))
    ~,
    \\
    \expect[B_{i,k,t}] \cdot c^*(i,k)
    \geq 
    \frac{2}{3} \cdot y_{\ell^*(i,k)-1}(i,k)
    ~,
    &
    \quad 
    \expect[B_{j,k,t}] \cdot c^*(j,k)
    \geq 
    \frac{2}{3} \cdot c^*(j,k)
    ~,
\end{align*} 
and thus implies the first bullet of Lemma~\ref{lem:edge-charging-for-B}.

Further, when the edge $(i,j)$ (or $(i,k)$) is non-pivot, one of the pivot edges in the triangle is random.
Because of Lemma~\ref{lem:prob-disappear}, the probability $i,j$ (or $i,k$) are then partitioned into different sets on line~\ref{line:partition-non-pivots} is at least $\alpha$, which is at least
\begin{align*}
    \begin{cases}
        \frac{\alpha}{2}\cdot B_{i,j,t} & \text{if $(i,j)\in E_{\lowerr}$}\\
        \frac{1-\alpha}{2}\cdot B_{i,j,t} & \text{if $(i,j)\in E_{\highdet}$}\\
        \alpha \cdot B_{i,j,t} & \text{if $(i,j)\in E_{\highrand}$}
    \end{cases}
    \quad 
    \text{and}
    \quad 
    \begin{cases}
        \frac{\alpha}{2}\cdot B_{i,k,t} & \text{if $(i,k)\in E_{\lowerr}$}\\
        \frac{1-\alpha}{2}\cdot B_{i,k,t} & \text{if $(i,k)\in E_{\highdet}$}\\
        1\cdot B_{i,k,t} & \text{if $(i,k)\in E_{\highrand}$}
    \end{cases}
    ~.
\end{align*}
When the edge $(j,k)$ is non-pivot, two pivot edges are both deterministic but have different dominant levels.
Because of Lemma~\ref{lem:prob-disappear}, the probability $i,j$ (or $i,k$) are then partitioned into different sets on line~\ref{line:partition-non-pivots} is $1$, which is at least $\frac{1}{2}\cdot B_{j,k,t}\geq \max\{\frac{1-\alpha}{2},\alpha\}\cdot B_{j,k,t}$ because $\alpha\leq \frac12$.
Hence, this definition satisfies the second bullet of Lemma~\ref{lem:edge-charging-for-B} for this case.


\subsubsection{$(d,r,r)$-triangles} 
W.l.o.g., we assume $(j,k)$ is the deterministic edge.
Because of Corollary~\ref{cor:low-err-are-det}, $(i,j),(i,k)\in E_{\highrand}$ and thus $M_{i,j,t}=M_{i,k,t}=0$.
If $(j,k)\notin E_{\lowerr}$, $M_{j,k,t}=0$. 
Accordingly, by defining $B_{i,j,t},B_{i,k,t},B_{j,k,t}\defeq 0$, we show Lemma~\ref{lem:edge-charging-for-B} for the case $(j,k)\notin E_{\lowerr}$.

Next, we consider the case $(j,k)\in E_{\lowerr}$.
Only when the pivot in the recursive call is $i$, the low-cost edge $(j,k)$ can be modified. 
Because of Lemma~\ref{lem:low-error-modify}, the input distance of the call satisfies $x(j,k)=d_{\ell^*(y,j,k)}$.
If the random distances $x'(i,j),x'(i,k)$ satisfy $x'(i,j)=x'(i,k)\geq d_{\ell^*(y,j,k)}$ or $x'(i,j)<d_{\ell^*(y,j,k)}=x'(i,k)$ or $x'(i,j)=d_{\ell^*(y,j,k)}>x'(i,k)$, edge $(j,k)$ is not modified, i.e., $M_{j,k,t}=0$. 
Therefore, according to the CCDF~\eqref{eqn:complete-distance-ccdf} of the random distance, in this case,
\begin{align}
    \label{eqn:type-4-error}
    \begin{split}
    \sum_{(i',j')\in t} \expect[M_{i',j',t}]
    =
    \frac{1}{3}\cdot \bigg(1 - \sum_{\ell=1}^{\ell^*(y,j,k)} \Delta y_{\ell}(i,j) \cdot \Delta y_{\ell}(i,k) &- \Delta y_{\ell^*(j,k)}(i,k) (1-y_{\ell^*(j,k)}(i,j)) \\
    &- \Delta y_{\ell^*(j,k)}(i,j) (1-y_{\ell^*(j,k)}(i,k))\bigg)~.
    \end{split}
\end{align}
Based on the above equality, we can further prove the following upper bound for the expected number of modifications on low-cost edges in this triangle.
\begin{lemma}
If $\alpha\in [\frac{3-\sqrt{5}}{2}, \frac{1}{2}]$,  $\sum_{(i',j')\in t} \expect[M_{i',j',t}] \leq \frac{1}{3}\cdot (c^*(i,j)+c^*(i,k)+c^*(j,k))$.
\end{lemma}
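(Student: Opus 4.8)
The plan is to work directly from the exact expression \eqref{eqn:type-4-error}. In a $(d,r,r)$-triangle with $(j,k)$ deterministic, only $(j,k)$ can be low-cost (Corollary~\ref{cor:low-err-are-det}), so that display reads $\sum_{(i',j')\in t}\expect[M_{i',j',t}]=\frac13(1-S)$, where $S$ is the probability, conditioned on $i$ being the pivot, that $(j,k)$ is left unchanged. Since $1-S$ is a probability it is at most $1$, so the claim is immediate when $c^*(i,j)+c^*(i,k)+c^*(j,k)\ge 1$; I would therefore assume $c^*(i,j)+c^*(i,k)+c^*(j,k)<1$ throughout and prove the stronger bound $1-S\le c^*(i,j)+c^*(i,k)$.

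Write $m=\ell^*(y,j,k)$ and $a_\ell=\Delta y_\ell(i,j)$, $b_\ell=\Delta y_\ell(i,k)$, $P=1-y_{m-1}(i,j)$, $Q=1-y_{m-1}(i,k)$, $A=1-y_m(i,j)$, $B=1-y_m(i,k)$, so that $P\ge A\ge 0$, $Q\ge B\ge 0$, $a_m=P-A$ and $b_m=Q-B$. Undoing the grouping in \eqref{eqn:type-4-error} gives $1-S=1-\sum_{\ell<m}a_\ell b_\ell-PQ+AB$, and since $A\le P$, $B\le Q$ this re-confirms $1-S\le 1$. The inputs I would use are: Lemma~\ref{lem:stronger-y-bounded-by-c*} for the random edges $(i,j),(i,k)\in E_{\highrand}$, giving $\Delta y_{\ell^*(i,j)}(i,j)\ge 1-c^*(i,j)$ and $y_{\ell^*(i,j)-1}(i,j),\,1-y_{\ell^*(i,j)}(i,j)\le c^*(i,j)$ (and likewise for $(i,k)$); Lemma~\ref{lem:stronger-y-bounded-by-c*} for the deterministic low-cost edge $(j,k)$, giving $y_{m-1}(j,k)\le 1-\Delta y_m(j,k)\le c^*(j,k)$ and hence $y_m(j,k)\ge 1-c^*(j,k)$; and the LP triangle inequalities~\eqref{eqn:umvd-lp-triangle} at levels $\ell^*(i,j)$, $\ell^*(i,k)$, and $m$ for the triple $(i,j),(i,k),(j,k)$.

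Set $p=\min(\ell^*(i,j),m)$ and $q=\min(\ell^*(i,k),m)$. The first step is to rule out $p\ne q$: assume w.l.o.g.\ $p<q$, so $p<m$ and hence $p=\ell^*(i,j)$. Then $y_p(i,j)\ge\Delta y_p(i,j)\ge 1-c^*(i,j)$; also $y_p(j,k)\le y_{m-1}(j,k)\le c^*(j,k)$, and since $\ell^*(i,k)\ge q>p$ the dominant mass of $(i,k)$ lies above level $p$, so $y_p(i,k)\le 1-\Delta y_{\ell^*(i,k)}(i,k)\le c^*(i,k)$. Plugging these into $y_p(i,j)\le y_p(i,k)+y_p(j,k)$ yields $1-c^*(i,j)\le c^*(i,k)+c^*(j,k)$, contradicting the standing assumption. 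Hence $p=q$, which means either $\ell^*(i,j)=\ell^*(i,k)=\ell_0<m$, or both dominant levels are $\ge m$.

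I would finish with a three-way split. If $\ell^*(i,j)=\ell^*(i,k)=\ell_0<m$, concentration at the common dominant level gives $\sum_{\ell<m}a_\ell b_\ell\ge a_{\ell_0}b_{\ell_0}\ge(1-c^*(i,j))(1-c^*(i,k))$, so together with $AB\le PQ$ we get $1-S\le 1-(1-c^*(i,j))(1-c^*(i,k))\le c^*(i,j)+c^*(i,k)$. If both dominant levels are $\ge m$ and (after possibly swapping $(i,j),(i,k)$) $\ell^*(i,j)=m$ — the case of both being $>m$ is handled below — then $a_m=\Delta y_m(i,j)\ge 1-c^*(i,j)$ while $Q=\sum_{\ell\ge m}b_\ell\ge\Delta y_{\ell^*(i,k)}(i,k)\ge 1-c^*(i,k)$; using $PQ-AB=(a_m+A)(b_m+B)-AB=a_mQ+Ab_m\ge a_mQ$ we obtain $1-S\le 1-a_mQ\le 1-(1-c^*(i,j))(1-c^*(i,k))\le c^*(i,j)+c^*(i,k)$. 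Finally, both dominant levels $>m$ is impossible under the assumption: then $y_m(i,j)\le 1-\Delta y_{\ell^*(i,j)}(i,j)\le c^*(i,j)$ and $y_m(i,k)\le c^*(i,k)$, so $A\ge 1-c^*(i,j)$, $B\ge 1-c^*(i,k)$, yet the level-$m$ triangle inequality with $y_m(j,k)\ge 1-c^*(j,k)$ forces $A+B\le 2-y_m(j,k)\le 1+c^*(j,k)$, giving $c^*(i,j)+c^*(i,k)+c^*(j,k)\ge 1$. In every surviving case $1-S\le c^*(i,j)+c^*(i,k)\le c^*(i,j)+c^*(i,k)+c^*(j,k)$, and dividing by $3$ proves the lemma. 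The one real obstacle is the bookkeeping observation that the two worrisome configurations — $p\ne q$ and both dominant levels strictly above $m$ — are exactly the ones incompatible with $c^*(i,j)+c^*(i,k)+c^*(j,k)<1$, which collapses the analysis onto the two easy sub-cases; the identity $PQ-AB=a_mQ+Ab_m$ is what lets a random edge's dominant mass (small, $\le 1-\alpha$, but still at least $1-c^*$ of the edge) be cashed in.
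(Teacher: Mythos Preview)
Your proof is correct and takes a genuinely different route from the paper's. Both arguments case-split on the positions of $\ell^*(i,j),\ell^*(i,k)$ relative to $m=\ell^*(j,k)$, but the paper treats the sub-case $\ell^*(i,j)<m$ with $\ell^*(i,k)$ unconstrained: it bounds $\Delta y_{\ell^*(i,j)}(i,k)\ge 1-c^*(i,j)-2c^*(j,k)$ via two triangle inequalities, expands the product, and then must average the resulting $\frac{2-\alpha}{3}(c^*(i,j)+c^*(j,k))$ bound against the trivial $\frac{c^*(i,k)}{3\alpha}$ bound --- this averaging is precisely where the hypothesis $\alpha\ge\frac{3-\sqrt5}{2}$ enters. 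You instead dispose of $\sum c^*\ge 1$ up front and then observe, via a single triangle inequality at level $p=\min(\ell^*(i,j),m)$, that under $\sum c^*<1$ the two random edges must have their dominant levels ``aligned'' (either equal and below $m$, or both at least $m$ with one equal to $m$); in each surviving configuration the product $(1-c^*(i,j))(1-c^*(i,k))$ appears directly as a lower bound on $S$, yielding the clean $1-S\le c^*(i,j)+c^*(i,k)$ with no averaging. The payoff is that your argument never uses the lower bound on $\alpha$: it establishes the lemma for all $\alpha\in(0,\tfrac12]$, so the restriction $\alpha\ge\frac{3-\sqrt5}{2}$ in the statement is an artifact of the paper's proof technique rather than of the lemma itself. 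The algebraic identity $PQ-AB=a_mQ+Ab_m$ you isolate is the neat device that makes the $\ell^*(i,j)=m$ sub-case as easy as the $\ell_0<m$ one.
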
 
\begin{proof}
We shall prove the lemma by discussing the magnitude of $\ell^*(y,i,j)$, $\ell^*(y,i,k)$, and $\ell^*(y,j,k)$.
    
\paragraph{If $\ell^*(y,i,j),\ell^*(y,i,k)>\ell^*(y,j,k)$, }
\begin{align*}
    c^*(i,j)+c^*(i,k)
    &
    \geq 
    y_{\ell^*(i,j)-1}(i,j) + y_{\ell^*(i,k)-1}(i,k)
    \tag{Lemma~\ref{lem:stronger-y-bounded-by-c*}}
    \\
    &
    \geq 
    y_{\ell^*(j,k)}(i,j) + y_{\ell^*(j,k)}(i,k)
    \tag{LP constraint~\eqref{eqn:umvd-lp-increasing}}
    \\
    &
    \geq 
    y_{\ell^*(j,k)}(j,k)
    \tag{LP constraint~\eqref{eqn:umvd-lp-triangle}}
    \\
    &
    \geq 
    1 - c^*(j,k)~.
    \tag{Lemma~\ref{lem:stronger-y-bounded-by-c*}}
\end{align*}
Because the right-hand side of Eqn.~\eqref{eqn:type-4-error} can be trivially upper bounded by $\frac13$, $\sum_{(i',j')\in t} \expect[M_{i',j',t}]\leq \frac13\cdot (c^*(i,j)+c^*(i,k)+c^*(j,k))$.

\paragraph{If $\ell^*(y,i,j)<\ell^*(y,j,k)$ (or $\ell^*(y,i,k)<\ell^*(y,j,k)$ in symmetry),} we can upper bound Eqn.~\eqref{eqn:type-4-error} by $\frac{1}{3}\cdot (1 - \Delta y_{\ell^*(i,j)}(i,j) \cdot \Delta y_{\ell^*(i,j)}(i,k))$. Because $y$ satisfies the LP constraint~\eqref{eqn:umvd-lp-triangle} and~\eqref{eqn:umvd-lp-increasing}, 
\begin{align*}
    y_{\ell^*(i,j)}(i,j) - y_{\ell^*(i,j)}(i,k) &\leq y_{\ell^*(i,j)}(j,k) \leq y_{\ell^*(j,k)-1}(j,k)~,\\
    y_{\ell^*(i,j)-1}(i,j) - y_{\ell^*(i,j)-1}(i,k) &\geq -y_{\ell^*(i,j)-1}(j,k) \geq -y_{\ell^*(j,k)-1}(j,k)~,
\end{align*}
and thus $\Delta y_{\ell^*(i,j)}(i,j) -\Delta y_{\ell^*(i,j)}(i,k) \leq 2y_{\ell^*(j,k)-1}(j,k)$. 
Because of Lemma~\ref{lem:stronger-y-bounded-by-c*}, $\Delta y_{\ell^*(i,j)}(i,j) \geq 1-c^*(i,j)$ and $y_{\ell^*(j,k)-1}(j,k)\leq c^*(j,k)$. 
Therefore, $\Delta y_{\ell^*(i,j)}(i,k) \geq 1- c^*(i,j) - 2c^*(j,k)$.
We can upper bound Eqn.~\eqref{eqn:type-4-error} in this case by 
\begin{align}
    \frac{1}{3}\cdot\big(1 - \Delta y_{\ell^*(i,j)}(i,j) \cdot \Delta y_{\ell^*(i,j)}(i,k)\big)
    &
    \leq
    \frac13 \cdot \big(1 - (1- c^*(i,j))\cdot (1- c^*(i,j) - 2c^*(j,k))\big)
    \label{eqn:in-lemma-for-drr}
    \\
    & 
    \leq 
    \frac13 \cdot \big(2(1-c^*(i,j))\cdot c^*(j,k) + 1 - (1-c^*(i,j))^2\big)
    \notag
    \\
    & 
    = 
    \frac13\cdot \big((2-2c^*(i,j)) \cdot c^*(j,k) + (2-c^*(i,j)) \cdot c^*(i,j)\big)
    \notag
    \\
    &
    \leq
    \frac{2-\alpha}{3} \cdot \big(c^*(i,j)+c^*(j,k)\big)~.
    \tag{$c^*(i,j)\geq \alpha$}
\end{align}
On the other hand, because the random edge $(i,k)\in E_{\highrand}$, we can also upper bound Eqn.~\eqref{eqn:type-4-error} by $\frac{c^*(i,k)}{3\alpha}$. Averaging the previous two upper bounds of Eqn.~\eqref{eqn:type-4-error}, we can get
\begin{align*}
    \sum_{(i', j')\in t} \expect[M_{i',j',t}]
    &
    \leq
    \frac{1}{1+2\alpha-\alpha^2} \cdot \frac{2-\alpha}{3} \cdot \big(c^*(i,j)+c^*(j,k)\big) + \frac{2\alpha-\alpha^2}{1+2\alpha-\alpha^2} \cdot \frac{c^*(i,k)}{3\alpha}
    \\
    &
    \leq 
    \frac{2-\alpha}{3(1+2\alpha-\alpha^2)} \cdot \big(c^*(i,j) + c^*(i,k) + c^*(j,k) \big)
    \\
    &
    \leq 
    \frac{1}{3}\cdot\big(c^*(i,j) + c^*(i,k) + c^*(j,k)\big) ~.
    \tag{$\forall \alpha\in \big[\frac{3-\sqrt{5}}{2}, \frac{1}{2}\big], 2-\alpha\leq 1+2\alpha-\alpha^2$}
\end{align*}

\paragraph{If $\ell^*(y,i,j)=\ell^*(y,j,k)$ (or $\ell^*(y,i,k)=\ell^*(y,j,k)$ in symmetry),} we can upper bound Eqn.~\eqref{eqn:type-4-error} by $1 - \Delta y_{\ell^*(i,j)}(i,j) \cdot (1 - y_{\ell^*(i,j)-1}(i,k))$. Because of Lemma~\ref{lem:stronger-y-bounded-by-c*} and because the truncated LP solution $y$ satisfies LP constraint~\eqref{eqn:umvd-lp-triangle}, 
\begin{align*}
    y_{\ell^*(i,j)-1}(i,k) \leq y_{\ell^*(i,j)-1}(i,j)+y_{\ell^*(i,j)-1}(j,k)\leq c^*(i,j) + c^*(j,k)~.
\end{align*}
Hence, we can further upper bound Eqn.~\eqref{eqn:type-4-error} by
\begin{align*}
    \frac{1}{3}\cdot \big(1- (1-c^*(i,j))\cdot (1-c^*(i,j)-c^*(j,k)) \big)~,
\end{align*}
which is stronger than~\eqref{eqn:in-lemma-for-drr}.
Following the steps in the previous case, we can then similarly show $\sum_{(i', j')\in t} \expect[M_{i',j',t}]\leq \frac{1}{3}\cdot(c^*(i,j) + c^*(i,k) + c^*(j,k))$. 
\end{proof}

Accordingly, when the corresponding edge is non-pivot, we define $B_{i,j,t},B_{i,k,t},B_{j,k,t}$ as follows:
\begin{align*}
    B_{i,j,t},\, B_{i,k,t},\, B_{j,k,t} \defeq 1~,
\end{align*}
which implies the first bullet of Lemma~\ref{lem:edge-charging-for-B}:
\begin{align*} 
    \sum_{(i', j')\in t} \expect[B_{i',j',t}]\cdot c^*(i',j') = \frac{1}{3}\cdot \big(c^*(i,j)+c^*(i,k)+c^*(j,k)\big) \geq \sum_{(i', j')\in t} \expect[M_{i',j',t}]~.
\end{align*}

Note that because of Lemma~\ref{lem:prob-disappear}, when being non-pivot, endpoints of all three edges in the triangle will be partitioned into different sets on line~\ref{line:partition-non-pivots} with probability at least $\alpha$, which is $\alpha \cdot B_{i,j,t}\geq \frac{1-\alpha}{2}\cdot B_{i,j,t}$ (resp., for $B_{i,k,t}$ and $B_{j,k,t}$) when $\alpha\geq \frac{3-\sqrt{5}}{2}$. 
Therefore, this definition of $B_{i',j',t}$s satisfies the second bullet of Lemma~\ref{lem:edge-charging-for-B} for this class of triangles.


\subsection{Missing proof of Lemma~\ref{lem:s-weighted-edge-charging-for-B}}
\label{app:s-weighted-edge-charging-for-B}

\subsubsection{$(d,d,r)$-diff-triangles} 
W.l.o.g. we assume $(i,j)$ and $(i,k)$ are deterministic, and $(i,j)$ has a lower dominant level, i.e., $\ell^*(y,i,j)<\ell^*(y,i,k)$ (equivalently, $d_{\ell^*(y,i,j)}>d_{\ell^*(y,i,k)}$). 
Because of Corollary~\ref{cor:low-err-are-det}, we have $(j,k)\in E_{\lowerr}$, $M_ {j,k,t}=0$ and $c^*(j,k)\geq \alpha$. 
Recall that in Section~\ref{sec:ratio-complete}, we have shown for this class of triangles: 
\begin{align*}
    M_{i,j,t} &= \ind(\text{$k$ is the pivot vertex})\cdot \ind((i,j)\in E_{\lowerr}) \cdot \ind(x'(j,k)\neq d_{\ell^*(y,i,j)})~,
    \\
    M_{i,k,t} &= \ind(\text{$j$ is the pivot vertex})\cdot \ind((i,k)\in E_{\lowerr}) \cdot \ind(x'(j,k)\neq d_{\ell^*(y,i,j)})~,
\end{align*}
and the probability that $x'(j,k)\neq d_{\ell^*(y,i,j)}$ is at most $(1 - \Delta y_{\ell^*(i,j)}(i,j)) + 2y_{\ell^*(i,k)-1}(i,k)$. 
Because of Lemma~\ref{lem:stronger-y-bounded-by-c*}, $1 - \Delta y_{\ell^*(i,j)}(i,j)\leq c^*(i,j)$, $y_{\ell^*(i,k)-1}(i,k)\leq c^*(i,k)$.
Because edges $(i,j)$ and $(i,k)$ are deterministic, $1 - \Delta y_{\ell^*(i,j)}(i,j), y_{\ell^*(i,k)-1}(i,k) < \alpha \leq c^*(j,k)$ (Eqn.~\eqref{eqn:y-of-a-det-edge}).
Therefore, in this class of triangles, we have 
\begin{align*}
    \sum_{(i', j')\in t} w(i',j')\cdot \expect[M_{i',j',t}] 
    &
    \leq 
    \frac{(1 - \Delta y_{\ell^*(i,j)}(i,j)) + 2y_{\ell^*(i,k)-1}(i,k)}{3} \cdot \big(w(i,j)+w(i,k)\big)
    \\
    &
    \leq 
    \frac{w(i,j)\cdot c^*(i,j)+2w(i,k)\cdot c^*(i,k)}{3} + \frac{2y_{\ell^*(i,k)-1}(i,k)\cdot (w(i,k)+w(j,k))}{3}
    \\
    &
    \quad 
    + \frac{(1 - \Delta y_{\ell^*(i,j)}(i,j))\cdot (w(i,j)+w(j,k))}{3} 
    \tag{triangle inequality for $w$}
    \\
    &
    \leq 
    \frac{w(i,j)\cdot c^*(i,j) \!+\! 2w(i,k)\cdot c^*(i,k)}{3} 
    + 
    \frac{2\big(c^*(i,k)\cdot w(i,k) \!+\! c^*(j,k)\cdot w(j,k)\big)}{3}
    \\
    &
    \quad 
    + \frac{c^*(i,j)\cdot w(i,j)+c^*(j,k)\cdot w(j,k)}{3} 
    \\
    &
    = 
    \frac{2w(i,j)\cdot c^*(i,j)+3w(j,k)\cdot c^*(j,k) +4w(i,k)\cdot c^*(i,k)}{3} 
    ~.
\end{align*}
Accordingly, when the corresponding edge is non-pivot, we can define $B_{i,j,t},B_{i,k,t},B_{j,k,t}\defeq 4$ to prove the first bullet of Lemma~\ref{lem:s-weighted-edge-charging-for-B}.
Further, we will classify all charges into the first type, i.e., $\forall (u, v)\in t, B_{u,v,t}^{(1)}=B_{u,v,t}$.
Note that in this class of triangles, either there is a random pivot edge or the two deterministic pivot edges have different dominant levels. 
Because of Lemma~\ref{lem:prob-disappear}, when being non-pivot, the endpoints of all three edges in $t$ will be partitioned into different sets on line~\ref{line:partition-non-pivots} with probability at least $\alpha=\frac{1}{3}$, which is $\Omega(1)\cdot B_{u,v,t}^{(1)}$ for any $(u, v)\in t$.
Therefore, this definition satisfies the condition of Lemma~\ref{lem:bound-of-sum-B} with $q=\Omega(1)$.

\subsubsection{$(d,r,r)$-triangles} 
W.l.o.g., we assume $(j,k)$ is the deterministic edge.
Because of Corollary~\ref{cor:low-err-are-det}, $(i,j),(i,k)\in E_{\highrand}$, and we have $M_{i,j,t}=M_{i,k,t}=0$ and $c^*(i,j),c^*(i,k)\geq \alpha=\frac{1}{3}$.
Therefore, in this class of triangles,
\begin{align*}
    \sum_{(i', j')\in t} w(i',j')\cdot \expect[M_{i',j',t}] 
    \leq 
    \frac{w(j,k)}{3} 
    \leq 
    \frac{w(i,j)+w(i,k)}{3} 
    \leq 
    w(i,j)\cdot c^*(i,j)+w(i,k)\cdot c^*(i,k)
    ~.
\end{align*}
Accordingly, when the corresponding edge is non-pivot, we can define $B_{i,j,t}\defeq 0$ and $B_{i,k,t},B_{j,k,t}\defeq 3$ to prove the first bullet of Lemma~\ref{lem:s-weighted-edge-charging-for-B}. 
Further, we will classify all non-trivial charges into the first type, i.e., $B_{i,k,t}^{(1)}=B_{i,k,t}, B_{j,k,t}^{(1)}=B_{j,k,t}$.
Because of Lemma~\ref{lem:prob-disappear}, when being non-pivot, $i,k$ (or $j,k$) will be partitioned into different sets on line~\ref{line:partition-non-pivots} with probability at least $\alpha=\frac{1}{3}$, which is $\Omega(1)\cdot B_{i,k,t}^{(1)}$ (resp., for $B_{j,k,t}^{(1)}$).
Therefore, this definition satisfies the condition of Lemma~\ref{lem:bound-of-sum-B} with $q=\Omega(1)$.

\subsection{Proof of Lemma~\ref{lem:edge-charging-for-k-partite-B}}
\label{app:edge-charging-for-k-partite-B}
For convenience, we assume $\alpha=\frac{3}{8},\beta=\frac{2}{3}$ in the proof.
With this assumption, the CCDF functions~\eqref{eqn:distance-ccdf} and~\eqref{eqn:empty-distance-ccdf} respectively become:
\begin{align}
    \label{eqn:k-part-distance-ccdf}
    \text{for $(i,j)\in E$:} 
    &
    \qquad
    \forall \,\ell\in [L],\quad \Pr[x'(i,j)\geq d_{\ell}] = \frac{(4\cdot y_\ell(i,j)-1)^+}{3}~.
    \\
    \label{eqn:k-part-empty-distance-ccdf}
    \text{for $(i,j)\in E_{\varnothing}$:} 
    &
    \qquad
    \forall \,\ell\in [L],\quad \Pr[x'(i,j)\geq d_{\ell}] = y_\ell(i,j)~.
\end{align}
Given the truncated LP solution $y$, we define an edge $(i,j)$ to be deterministic if 
\begin{itemize}
    \item $(i,j)\in E$ and $\Delta y_{\ell^*(i,j)}(i,j)>1-\alpha=\frac{5}{8}$, or
    \item $(i,j)\in E_{\varnothing}$ and $\Delta y_{\ell^*(i,j)}(i,j)>1-\alpha\beta=\frac{3}{4}$.
\end{itemize}
Otherwise, we define it to be random.
In addition, we restate Lemma~\ref{lem:k-prob-disappear0} with the specific choice of $\alpha,\beta$ in the following lemma.
\begin{lemma}[Restatement of Lemma~\ref{lem:k-prob-disappear0}]
    \label{lem:k-prob-disappear}
    Consider any recursive call $\algoname(V, x, u)$ with $|V|>2$.
    Suppose $i$ is the pivot vertex of the call. 
    For any $(j,k)\in \binom{V\setminus\{i\}}{2}$, the probability that $j,k$ are partitioned into different sets on line~\ref{line:partition-non-pivots} can be lower bounded by
    \begin{itemize}
        \item $1$ if both $(i,j)$ and $(i,k)$ are deterministic in the call but they have different dominant levels, 
        \item $1/4$ if one of $(i,j)$ and $(i,k)$ is random in $E_{\varnothing}$, or
        \item $1/6$ if one of $(i,j)$ and $(i,k)$ is random in $E$.
    \end{itemize}
\end{lemma}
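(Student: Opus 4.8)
The plan is to recognize that Lemma~\ref{lem:k-prob-disappear} is simply Lemma~\ref{lem:k-prob-disappear0} instantiated at the specific parameter choice $\alpha=\tfrac38$, $\beta=\tfrac23$, so that essentially no new argument is required: I would invoke Lemma~\ref{lem:k-prob-disappear0} directly and then evaluate its two generic thresholds at these values. The deterministic-with-distinct-dominant-levels case is parameter-free and carries over verbatim, since the algorithm fixes $x'(i,j)=d_{\ell^*(y,i,j)}\neq d_{\ell^*(y,i,k)}=x'(i,k)$, so $j$ and $k$ land in different parts $V_\ell$ with probability $1$.

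For the remaining two cases I would just do the arithmetic substitution. When one of $(i,j),(i,k)$ is a random edge in $E_{\varnothing}$, Lemma~\ref{lem:k-prob-disappear0} gives the lower bound $\alpha\beta=\tfrac38\cdot\tfrac23=\tfrac14$, matching the stated $1/4$. When one of $(i,j),(i,k)$ is a random edge in $E$, Lemma~\ref{lem:k-prob-disappear0} gives $\tfrac{\alpha(1-\beta)}{1-\alpha\beta}$; with $1-\beta=\tfrac13$, $\alpha\beta=\tfrac14$, and $1-\alpha\beta=\tfrac34$, this equals $\tfrac{(3/8)(1/3)}{3/4}=\tfrac{1/8}{3/4}=\tfrac16$, matching the stated $1/6$. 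If a self-contained proof is preferred, the same three bounds can be re-derived directly from the CCDFs \eqref{eqn:k-part-distance-ccdf} and \eqref{eqn:k-part-empty-distance-ccdf} exactly as in the proof of Lemma~\ref{lem:k-prob-disappear0}, using the monotonicity constraint \eqref{eqn:umvd-lp-increasing} to pass from $(y_\ell(i,j)-\alpha\beta)^+-(y_{\ell-1}(i,j)-\alpha\beta)^+$ to a quantity at most $\tfrac{\Delta y_\ell(i,j)}{1-\alpha\beta}\le\tfrac{1-\alpha}{1-\alpha\beta}$, but this would merely repeat work already done.

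I do not expect a genuine obstacle here; the only care needed is in the substitution arithmetic and in checking that the chosen pair $(\alpha,\beta)=(\tfrac38,\tfrac23)$ lies in the admissible range ($\alpha\le\tfrac12$, $\beta\in[0,1]$) and yields both $\alpha\beta$ and $\tfrac{\alpha(1-\beta)}{1-\alpha\beta}$ bounded below by positive constants -- which is precisely the criterion that motivated this choice of parameters in the first place. The actual difficulty in Section~\ref{sec:ratio-k-part} lies entirely in the downstream charging scheme of Lemma~\ref{lem:edge-charging-for-k-partite-B}, which treats edges in $E$ and $E_{\varnothing}$ differently and relies on rejection sampling; Lemma~\ref{lem:k-prob-disappear} is just the convenient specialized restatement used as a black box there.
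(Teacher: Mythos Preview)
Your proposal is correct and matches the paper's approach exactly: the paper presents Lemma~\ref{lem:k-prob-disappear} explicitly as a restatement of Lemma~\ref{lem:k-prob-disappear0} with $\alpha=\tfrac38$, $\beta=\tfrac23$ plugged in, and supplies no separate proof. Your arithmetic check that $\alpha\beta=\tfrac14$ and $\tfrac{\alpha(1-\beta)}{1-\alpha\beta}=\tfrac16$ is all that is needed.
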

\begin{corollary}
    \label{cor:k-prob-disappear}
    Consider any recursive call $\algoname(V, x, u)$ with $|V|>2$.
    Suppose $i$ is the pivot vertex of the call. 
    For any $(j,k)\in \binom{V\setminus\{i\}}{2}$, the probability that $j,k$ are partitioned into different sets on line~\ref{line:partition-non-pivots} can be lower bounded by $1/6$ if at least one of $(i,j)$ and $(i,k)$ is random.
\end{corollary}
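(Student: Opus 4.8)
The plan is to derive Corollary~\ref{cor:k-prob-disappear} directly from Lemma~\ref{lem:k-prob-disappear} by a one-line case analysis on the type of the random pivot edge. Without loss of generality, suppose $(i,j)$ is the pivot edge that is random (if instead $(i,k)$ is random, the argument is identical after swapping $j$ and $k$). Every edge of the instance belongs to exactly one of $E$ and $E_{\varnothing}$, so the random edge $(i,j)$ is either \emph{random in $E_{\varnothing}$} or \emph{random in $E$}; these are precisely the hypotheses of the second and third bullets of Lemma~\ref{lem:k-prob-disappear}, respectively.

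If $(i,j)\in E_{\varnothing}$, the second bullet of Lemma~\ref{lem:k-prob-disappear} gives that $j,k$ are partitioned into different sets on line~\ref{line:partition-non-pivots} with probability at least $1/4 \ge 1/6$. If $(i,j)\in E$, the third bullet gives the bound $1/6$ directly. In both cases the probability is at least $1/6$, which is the statement of the corollary. The point of stating it as a separate corollary is convenience: later arguments (e.g.\ the charging scheme for $k$-partite $(d,r,r)$- and $(d,d,r)$-triangles) repeatedly need only the weaker unified bound ``one random pivot edge $\Rightarrow$ separated with probability $\ge 1/6$'', without caring whether that edge is in $E$ or $E_{\varnothing}$.

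I do not expect any real obstacle here; the only subtlety worth flagging in the write-up is that ``random'' carries a different meaning for $E$-edges versus $E_{\varnothing}$-edges (thresholds $1-\alpha = 5/8$ and $1-\alpha\beta = 3/4$ for the chosen $\alpha=\tfrac38$, $\beta=\tfrac23$), so when invoking each bullet of Lemma~\ref{lem:k-prob-disappear} one must use the threshold matching the edge class — but this distinction is already internal to the statement of Lemma~\ref{lem:k-prob-disappear}, so nothing extra needs to be verified.
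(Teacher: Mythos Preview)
Your proposal is correct and matches exactly the intended derivation: the paper states Corollary~\ref{cor:k-prob-disappear} immediately after Lemma~\ref{lem:k-prob-disappear} without proof, treating it as an evident consequence of taking the minimum of the bounds $1/4$ and $1/6$ from the second and third bullets. Your case split on whether the random pivot edge lies in $E$ or $E_{\varnothing}$ is precisely that implicit argument made explicit.
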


Note that each triangle cannot have exactly one edge in $E$ because $E$ is a $k$-partite graph.
When a triangle does not have any edge in $E$, no edge in the triangle is low-cost, and thus $M_{i,j,t}=M_{i,k,t}=M_{j,k,t}=0$.
By trivially defining $B_{i,j,t},B_{i,k,t},B_{j,k,t}\defeq 0$, we can prove Lemma~\ref{lem:edge-charging-for-k-partite-B}. 
Therefore, it suffices to only consider the triangles with at least two edges in $E$ in the rest of the analysis. 

We prove the lemma by presenting the charging scheme for each class of triangles.
Consider any triangle $t=(i,j,k)$ and any recursive call with a vertex set involving all of $i,j,k$.
Suppose the recursive call has a truncated LP solution $y$.
For convenience, we will condition on $\mathcal{A}_t$ in the rest of this section, and all expectations will be automatically conditioned on $\mathcal{A}_t$.

\subsubsection{$(d,d,d)$-triangles}
Note that the choice of $\beta$ does not influence how Algorithm~\ref{alg:pivot} rounds deterministic pivot edges in $E$.
In Section~\ref{sec:ratio-complete}, when all three edges are all in $E$, we have shown Lemma~\ref{lem:edge-charging-for-B} for this class of triangles, which gives a stronger condition of Lemma~\ref{lem:bound-of-sum-B} 
than that of Lemma~\ref{lem:edge-charging-for-k-partite-B} under $\alpha=\frac{3}{8}$.
Next, we only need to analyze the case when two edges are in $E$ for this class of triangles.

W.l.o.g., we assume $(i,j),(i,k)\in E$ and $(j,k)\in E_{\varnothing}$.
Note that the distance of any pivot edge $(i',j')\in t$ is set to its dominant distance level $d_{\ell^*(y,i',j')}$.
Because of Lemma~\ref{lem:low-error-modify}, any low-cost edge $(i',j')$ has an input distance $x(i',j')=d_{\ell^*(y,i',j')}$. 
If the non-pivot edge is low-cost, it is modified only when the ultrametric inequality is violated on the three dominant distance levels, i.e., there exists a permutation $(i',j',k')$ of $i,j,k$ such that $\ell^*(y,i',j')<\min\{\ell^*(y,i',k'),\ell^*(y,j',k')\}$. 
Because $\ell^*(y,i',j')<\min\{\ell^*(y,i',k'),\ell^*(y,j',k')\}$ implies
\begin{align}
    1-\Delta y_{\ell^*(i',j')}(i',j')
    &
    \geq 
    1 - y_{\ell^*(i',j')}(i',j')
    \notag
    \\
    &
    \geq 
    1 - y_{\ell^*(i',j')}(i',k') - y_{\ell^*(i',j')}(j',k') 
    \tag{LP constraint~\eqref{eqn:umvd-lp-triangle}}
    \\
    &
    \geq 
    1 - y_{\ell^*(i',k')-1}(i',k') - y_{\ell^*(j',k')-1}(j',k')
    \tag{LP constraint~\eqref{eqn:umvd-lp-increasing}}
    \\
    &
    \geq 
    1 - (1-\Delta y_{\ell^*(i',k')-1}(i',k')) - (1-\Delta y_{\ell^*(j',k')-1}(j',k'))
    \label{eqn:det-no-violation}
    ~.
\end{align}
However, because all three edges are deterministic in the recursive call, $1-\Delta y_{\ell^*(i,j)}(i,j), 1-\Delta y_{\ell^*(i,k)}(i,k)<\frac{3}{8}$ and $1-\Delta y_{\ell^*(j,k)}(j,k)<\frac{1}{4}$.
Because $\frac{3}{8}\cdot 2+\frac{1}{4}=1$, Inequality~\eqref{eqn:det-no-violation} cannot hold.
Therefore, the low-cost non-pivot edges are not modified in this case, and we have $M_{i,j,t}=M_{i,k,t}=M_{j,k,t}=0$.
Accordingly, by always defining $B_{i,j,t},B_{i,k,t},B_{j,k,t}\defeq 0$, we show Lemma~\ref{lem:edge-charging-for-k-partite-B} for this case.

\subsubsection{$(d,d,r)$-same-triangles} 
\label{app:k-part-ddr-same-triangles}
W.l.o.g., we assume $(i,j)$ and $(i,k)$ are deterministic.
This implies that $M_{j,k,t}=0$.
Because of Lemma~\ref{lem:low-error-modify} and $\ell^*(y,i,j)=\ell^*(y,i,k)$, the input distances of $(i,j)$ and $(i,k)$ equal $d_{\ell^*(y,i,j)}$.
Therefore, $M_{i,j,t}$ (or $M_{i,k,t}$) equals $1$ only when the pivot is $k$ (resp., $j$), it is low-cost and the random distance $x'(j,k)>d_{\ell^*(y,i,j)}$, i.e.,
\begin{align*}
    M_{i,j,t} &= \ind(\text{$k$ is the pivot vertex})\cdot \ind((i,j)\in E_{\lowerr}) \cdot \ind(x'(j,k)>d_{\ell^*(y,i,j)})~,
    \\
    M_{i,k,t} &= \ind(\text{$j$ is the pivot vertex})\cdot \ind((i,k)\in E_{\lowerr}) \cdot \ind(x'(j,k)>d_{\ell^*(y,i,j)})~.
\end{align*}
Note that triangles in this class with at least two edges in $E$ can be divided into the following three cases: (1) the random edge is in $E_{\varnothing}$, (2) a deterministic edge is in $E_{\varnothing}$, and (3) all three edges are in $E$.
Next, we shall discuss these three cases to prove Lemma~\ref{lem:edge-charging-for-k-partite-B} for this class of triangles. 

\paragraph{Case 1: the random edge is in $E_{\varnothing}$.}
According to the CCDF~\eqref{eqn:k-part-empty-distance-ccdf} of the random distance, the probability that $x'({j,k})>d_{\ell^*(y,i,j)}$ is 
\begin{align*}
    y_{\ell^*(i,j)-1}(j,k)
    &
    \leq
    y_{\ell^*(i,j)-1}(i,j) + y_{\ell^*(i,j)-1}(i,k)
    \tag{LP constraint~\eqref{eqn:umvd-lp-triangle}}
    \\
    &
    =
    y_{\ell^*(i,j)-1}(i,j) + y_{\ell^*(i,k)-1}(i,k)
    ~.
\end{align*}
Therefore, in this case, 
\begin{align*}
    \sum_{(i', j')\in t} \expect[M_{i',j',t}] \leq \frac{y_{\ell^*(i,j)-1}(i,j) + y_{\ell^*(i,k)-1}(i,k)}{3} \cdot \Big(\ind((i,j)\in E_{\lowerr}) + \ind((i,k)\in E_{\lowerr})\Big)~.
\end{align*}
Accordingly, when the corresponding edge is non-pivot, we define $B_{i,j,t},B_{i,k,t},B_{j,k,t}$ as follows:
\begin{align*}
    B_{i,j,t}
    \defeq
    \begin{cases}
     2 & \text{if } (i,j)\in E_{\lowerr}\\
     1 & \text{if }(i,j)\in E_{\highdet}\\
     0 & \text{if } (i,j)\in E_{\highrand}
    \end{cases}
    ~, 
    \quad
    B_{i,k,t} 
    \defeq
    \begin{cases}
     2 & \text{if } (i,k)\in E_{\lowerr}\\
     1 & \text{if }(i,k)\in E_{\highdet}\\
     0 & \text{if } (i,k)\in E_{\highrand}
    \end{cases}
    ~,
    \quad
    B_{j,k,t}
    \defeq
    0
\end{align*}
Because of Lemma~\ref{lem:stronger-y-bounded-by-c*} and~\ref{lem:y-bounded-by-0}, this definition implies
the first bullet of Lemma~\ref{lem:edge-charging-for-k-partite-B}:
 \begin{align*}
    \sum_{(i', j')\in t}
    \expect[B_{i',j',t}] \cdot c^*(i',j')
    &
    \geq 
    \frac{1+\ind(\text{$(i,j)\in E_{\lowerr}$})}{3} \cdot y_{\ell^*(i,j)-1}(i,j) +
    \frac{1+\ind(\text{$(i,k)\in E_{\lowerr}$})}{3} \cdot y_{\ell^*(i,k)-1}(i,k)
    \\
    &
    \geq
    \frac{\ind((i,j)\in E_{\lowerr})+\ind((i,k)\in E_{\lowerr})}{3} \cdot \big(y_{\ell^*(i,j)-1}(i,j) + y_{\ell^*(i,k)-1}(i,k)\big)
    \\
    &
    \geq
    \sum_{(i', j')\in t} \expect[M_{i',j',t}]~.
\end{align*} 
Further, when the edge $(i,j)$ (or $(i,k)$) is non-pivot, one of the pivot edges in the triangle is random. 
Because of Corollary~\ref{cor:k-prob-disappear}, the probability that $i,j$ (or $i,k$) are then partitioned into different sets on line~\ref{line:partition-non-pivots} is at least $\frac{1}{6}$, which is at least
\begin{align*}
    \begin{cases}
        \frac{1}{12}\cdot B_{i,j,t} & \text{if $(i,j)\in E_{\lowerr}$}\\
        \frac{1}{6}\cdot B_{i,j,t} & \text{if $(i,j)\in E_{\highdet}$}\\
        1\cdot B_{i,j,t} & \text{if $(i,j)\in E_{\highrand}$}
    \end{cases}
    \quad 
    \text{and}
    \quad
    \begin{cases}
        \frac{1}{12}\cdot B_{i,k,t} & \text{if $(i,k)\in E_{\lowerr}$}\\
        \frac{1}{6}\cdot B_{i,k,t} & \text{if $(i,k)\in E_{\highdet}$}\\
        1\cdot B_{i,k,t} & \text{if $(i,k)\in E_{\highrand}$}
    \end{cases}
\end{align*}
Hence, this definition satisfies the second bullet of Lemma~\ref{lem:edge-charging-for-k-partite-B} for this case.

\paragraph{Case 2: one deterministic edge is in $E_{\varnothing}$.}
W.l.o.g., we assume that the deterministic edge $(i,k)\in E_{\varnothing}$,
which implies $M_{j,k,t}=0$ and $y_{\ell^*(i,j)-1}(i,k)=y_{\ell^*(i,k)-1}(i,k)\leq 1-\Delta y_{\ell^*(i,k)-1}(i,k)<\frac{1}{4}$.
According to the CCDF~\eqref{eqn:k-part-distance-ccdf} of the random distance, the probability that $x'({j,k})>d_{\ell^*(y,i,j)}$ is 
\begin{align*}
    \frac{\big(4\cdot y_{\ell^*(i,j)-1}(j,k)-1\big)^+}{3}
    &
    \leq 
    \frac{\big(4\cdot(y_{\ell^*(i,j)-1}(i,j) + y_{\ell^*(i,j)-1}(i,k))-1\big)^+}{3}
    \tag{LP constraint~\eqref{eqn:umvd-lp-triangle}}
    \\
    &
    \leq 
    \frac{4\cdot y_{\ell^*(i,j)-1}(i,j)}{3}
    ~.
\end{align*}
Therefore, in this case, 
\begin{align*}
    \sum_{(i', j')\in t} \expect[M_{i',j',t}] \leq \frac{4\cdot \ind((i,j)\in E_{\lowerr})\cdot y_{\ell^*(i,j)-1}(i,j)}{9}~.
\end{align*}
Accordingly, when the corresponding edge is non-pivot, we define $B_{i,j,t},B_{i,k,t},B_{j,k,t}$ as follows:
\begin{align*}
    B_{i,j,t}\defeq 
    \begin{cases}
        4/3 & \text{if $(i,j)\in E_{\lowerr}$}\\
        0 & \text{otherwise}
    \end{cases}~,
    \quad\quad
    B_{i,k,t}, B_{j,k,t} 
    \defeq
    0~.
\end{align*}
Because of Lemma~\ref{lem:stronger-y-bounded-by-c*}, this definition implies the first bullet of Lemma~\ref{lem:edge-charging-for-k-partite-B}:
\begin{align*}
    \sum_{(i',j')\in t\cap E} \expect[B_{i,j,t}]\cdot c^*(i,j) 
    &
    \geq 
    \frac{4\cdot \ind((i,j)\in E_{\lowerr})\cdot y_{\ell^*(i,j)-1}(i,j)}{9}
    \geq 
    \sum_{(i', j')\in t} \expect[M_{i',j',t}]~.
\end{align*}
Further, when the edge $(i,j)$ is non-pivot, one of the pivot edges in the triangle is random. 
Because of Corollary~\ref{cor:k-prob-disappear}, the probability that $i,j$ are then partitioned into different sets on line~\ref{line:partition-non-pivots} is at least $\frac{1}{6}$, which is at least $\frac{1}{8}\cdot B_{i,j,t}$.
Hence, this definition satisfies the second bullet of Lemma~\ref{lem:edge-charging-for-k-partite-B} for this class of triangles.

\paragraph{Case 3: all three edges are in $E$.}
According to the CCDF~\eqref{eqn:distance-ccdf} of the random distance, the probability that $x'({j,k})>d_{\ell^*(y,i,j)}$ is 
\begin{align*}
    \frac{\big(4\cdot y_{\ell^*(i,j)-1}(j,k)-1\big)^+}{3}
    \leq
    \frac{4\cdot y_{\ell^*(i,j)-1}(j,k)}{3}
    &
    \leq 
    \frac{4\cdot (y_{\ell^*(i,j)-1}(i,j) + y_{\ell^*(i,j)-1}(i,k))}{3}
    \tag{LP constraint~\eqref{eqn:umvd-lp-triangle}}
    \\
    &
    =
    \frac{4\cdot (y_{\ell^*(i,j)-1}(i,j) + y_{\ell^*(i,k)-1}(i,k))}{3}
    ~.
\end{align*}
Therefore, in this case,
\begin{align*}
    \sum_{(i', j')\in t} \expect[M_{i',j',t}] 
    \leq 
    \frac{4\cdot(y_{\ell^*(i,j)-1}(i,j) + y_{\ell^*(i,k)-1}(i,k))}{9} 
    \cdot 
    \big(\ind((i,j)\in E_{\lowerr}) + \ind((i,k)\in E_{\lowerr})\big)~.
\end{align*}

Accordingly, when the corresponding edge is non-pivot, we define $B_{i,j,t},B_{i,k,t},B_{j,k,t}$ as follows:
\begin{align*}
    B_{i,j,t}
    \defeq
    \begin{cases}
     8/3 & \text{if } (i,j)\in E_{\lowerr}\\
     4/3 & \text{if }(i,j)\in E_{\highdet}\\
     0 & \text{if } (i,j)\in E_{\highrand}
    \end{cases}
    ~,
    \quad
    B_{i,k,t} 
    \defeq
    \begin{cases}
     8/3 & \text{if } (i,k)\in E_{\lowerr}\\
     4/3 & \text{if }(i,k)\in E_{\highdet}\\
     0 & \text{if } (i,k)\in E_{\highrand}
    \end{cases}
    ~,
    \quad
    B_{j,k,t}
    \defeq
    0
\end{align*}
Because of Lemma~\ref{lem:stronger-y-bounded-by-c*} and~\ref{lem:y-bounded-by-0}, this definition 
implies
\begin{align*}
    \expect[B_{i,j,t}] \cdot c^*(i,j)
    &
    \geq 
    \frac{4\cdot (1+\ind(\text{$(i,j)$ is $\lowerr$}))}{9} \cdot y_{\ell^*(i,j)-1}(i,j)
    ~,
    \\
    \expect[B_{i,k,t}] \cdot c^*(i,k)
    &
    \geq 
    \frac{4\cdot (1+\ind(\text{$(i,j)$ is $\lowerr$}))}{9} \cdot y_{\ell^*(i,k)-1}(i,k)
    ~,
\end{align*}
and thus 
implies the first bullet of Lemma~\ref{lem:edge-charging-for-k-partite-B}:
 \begin{align*}
    \sum_{(i',j')\in t\cap E}
    \expect[B_{i',j',t}] \cdot c^*(i',j')
    &
    \geq
    \frac{4\cdot(\ind((i,j)\in E_{\lowerr})+\ind((i,k)\in E_{\lowerr}))}{9} \cdot \big(y_{\ell^*(i,j)-1}(i,j) + y_{\ell^*(i,k)-1}(i,k)\big)
    \\
    &
    \geq
    \sum_{(i', j')\in t} \expect[M_{i',j',t}]~.
\end{align*} 

Further, when the edge $(i,j)$ (or $(i,k)$) is non-pivot, one of the pivot edges in the triangle is random. 
Because of Corollary~\ref{cor:k-prob-disappear}, the probability $i,j$ (or $i,k$) are then partitioned into different sets on line~\ref{line:partition-non-pivots} is at least $\frac{1}{6}$, which is at least
\begin{align*}
    \begin{cases}
        \frac{1}{16}\cdot B_{i,j,t} & \text{if $(i,j)\in E_{\lowerr}$}\\
        \frac{1}{8}\cdot B_{i,j,t} & \text{if $(i,j)\in E_{\highdet}$}\\
        1\cdot B_{i,j,t} & \text{if $(i,j)\in E_{\highrand}$}
    \end{cases}
    \quad 
    \text{and}
    \quad
    \begin{cases}
        \frac{1}{16}\cdot B_{i,k,t} & \text{if $(i,k)\in E_{\lowerr}$}\\
        \frac{1}{8}\cdot B_{i,k,t} & \text{if $(i,k)\in E_{\highdet}$}\\
        1\cdot B_{i,k,t} & \text{if $(i,k)\in E_{\highrand}$}
    \end{cases}
\end{align*}
Hence, this definition satisfies the second bullet of Lemma~\ref{lem:edge-charging-for-k-partite-B} for this case.

\subsubsection{$(d,d,r)$-diff-triangles} 
W.l.o.g. we assume $(i,j)$ and $(i,k)$ are deterministic, and $(i,j)$ has a lower dominant level, i.e., $\ell^*(y,i,j)<\ell^*(y,i,k)$ (equivalently, $d_{\ell^*(y,i,j)}>d_{\ell^*(y,i,k)}$). 
This implies $M_{j,k,t}=0$.
Because of Lemma~\ref{lem:low-error-modify}, the input distance of $(i,j)$ (or $(i,k)$) equals $d_{\ell^*(y,i,j)}$ (resp. $d_{\ell^*(y,i,k)}$).
Therefore, $M_{i,j,t}$ (or $M_{i,k,t}$) equals $1$ only when the pivot is $k$ (resp., $j$), it is low-cost and the random distance $x'(j,k)\neq d_{\ell^*(y,i,j)}$, i.e.,
\begin{align*}
    M_{i,j,t} &= \ind(\text{$k$ is the pivot vertex})\cdot \ind((i,j)\in E_{\lowerr}) \cdot \ind(x'(j,k)\neq d_{\ell^*(y,i,j)})~,
    \\
    M_{i,k,t} &= \ind(\text{$j$ is the pivot vertex})\cdot \ind((i,k)\in E_{\lowerr}) \cdot \ind(x'(j,k)\neq d_{\ell^*(y,i,j)})~.
\end{align*}
Note that triangles in this class with at least two edges in $E$ can be divided into the following two cases: (1) the random edge is in $E_{\varnothing}$, (2) the random edge is in $E$.
Next, we shall discuss these two cases to prove Lemma~\ref{lem:edge-charging-for-k-partite-B} for this class of triangles. 

\paragraph{Case 1: the random edge is in $E_{\varnothing}$.}
According to the CCDF~\eqref{eqn:k-part-distance-ccdf} of the random distance, the probability that $x'({j,k})\neq d_{\ell^*(y,i,j)}$ is 
\begin{align*}
    1 - \Delta y_{\ell^*(i,j)}(j,k)
    &
    = 
    1-y_{\ell^*(i,j)}(j,k)+y_{\ell^*(i,j)-1}(j,k)
    \\
    &
    \leq 
    1 - (y_{\ell^*(i,j)}(i,j) - y_{\ell^*(i,j)}(i,k)) + y_{\ell^*(i,j)-1}(i,j) + y_{\ell^*(i,j)-1}(i,k)
    \\
    &
    \leq 
    (1 - \Delta y_{\ell^*(i,j)}(i,j)) + 2y_{\ell^*(i,k)-1}(i,k) 
\end{align*}
Therefore, in this class of triangles,
\begin{align*}
    \sum_{(i', j')\in t} \expect[M_{i',j',t}] 
    \leq
    \frac{(1 - \Delta y_{\ell^*(i,j)}(i,j))+2y_{\ell^*(i,k)-1}(i,k)}{3} \cdot \big(\ind((i,j)\in E_{\lowerr}) + \ind((i,k)\in E_{\lowerr})\big)~.
\end{align*}

Accordingly, when the corresponding edge is non-pivot, we define $B_{i,j,t},B_{i,k,t},B_{j,k,t}$ as follows:
\begin{align*}
    B_{i,j,t}
    \defeq
    \begin{cases}
         2 & \text{if $(i,j) \in E_{\lowerr}$}\\
         1 & \text{if $(i,j)\in E_{\highdet}$}\\
         1 & \text{if $(i,j)\in E_{\highrand}$}
    \end{cases}
    ~, 
    \qquad
    B_{i,k,t}
    \defeq
    \begin{cases}
        4 & \text{if $(i,k)\in E_{\lowerr}$}\\
        2 & \text{if $(i,k)\in E_{\highdet}$}\\
        0 & \text{if $(i,k)\in E_{\highrand}$}
    \end{cases}
    ~, 
    \qquad
    B_{j,k,t} 
    \defeq 
    0
    ~.
\end{align*}
Because of Lemma~\ref{lem:stronger-y-bounded-by-c*} and~\ref{lem:y-bounded-by-0}, this definition implies
\begin{align*}
    c^*(i,j) \cdot \expect[B_{i,j,t}]
    &
    \geq 
    \frac{1+\ind(\text{$(i,j)$ is $\lowerr$})}{3} \cdot (1 - \Delta y_{\ell^*(i,j)}(i,j))
    ~,
    \\
    c^*(i,k) \cdot \expect[B_{i,k,t}]
    &
    \geq 
    \frac{1+\ind(\text{$(i,j)$ is $\lowerr$})}{3} \cdot 2y_{\ell^*(i,k)-1}(i,k)
    ~,
\end{align*} 
and thus implies the first bullet of Lemma~\ref{lem:edge-charging-for-k-partite-B}:
\begin{align*}
    \sum_{(i',j')\in t\cap E}
    \expect[B_{i',j',t}] \cdot c^*(i',j')
    &
    \geq
    \frac{\ind((i,j)\in E_{\lowerr})+\ind((i,k)\in E_{\lowerr})}{3} \cdot \big(1 - \Delta y_{\ell^*(i,j)}(i,j)+2y_{\ell^*(i,k)-1}(i,k)\big)
    \\
    &
    \geq
    \sum_{(i', j')\in t} \expect[M_{i',j',t}]~.
\end{align*} 

Further, when the edge $(i,j)$ (or $(i,k)$) is non-pivot, one of the pivot edges in the triangle is random in $E_{\varnothing}$. 
Because of Lemma~\ref{lem:k-prob-disappear}, the probability that $i,j$ (or $i,k$) are then partitioned into different sets on line~\ref{line:partition-non-pivots} is at least $\frac{1}{4}$, which is at least
\begin{align*}
    \begin{cases}
        \frac{1}{8} \cdot B_{i,j,t} & \text{if $(i,j)\in E_{\lowerr}$}\\
        \frac{1}{4} \cdot B_{i,j,t} & \text{if $(i,j)\in E_{\highdet}$}\\
        \frac{1}{4} \cdot B_{i,j,t} & \text{if $(i,j)\in E_{\highrand}$}
    \end{cases}
    \quad 
    \text{and}
    \quad
    \begin{cases}
        \frac{1}{16} \cdot B_{i,k,t} & \text{if $(i,k)\in E_{\lowerr}$}\\
        \frac{1}{8} \cdot B_{i,k,t} & \text{if $(i,k)\in E_{\highdet}$}\\
        1 \cdot B_{i,k,t} & \text{if $(i,k)\in E_{\highrand}$}
    \end{cases}
\end{align*}
Hence, this definition satisfies the second bullet of Lemma~\ref{lem:edge-charging-for-k-partite-B} for this case.

\paragraph{Case 2: the random edge is in $E$.} 
In this case, we always have $\sum_{(i', j')\in t} \expect[M_{i',j',t}]\leq \frac{2}{3}$.
Accordingly, when the corresponding edge is non-pivot, we define $B_{i,j,t},B_{i,k,t},B_{j,k,t}$ as follows:
\begin{align*}
    B_{i,j,t}, B_{i,k,t}
    \defeq
    0
    ~,
    \qquad
    B_{j,k,t}
    \defeq
    16/3
    ~.
\end{align*}
Because of Corollary~\ref{cor:low-err-are-det}, $(j,k)\in E_{\highrand}$ and thus $c^*(j,k)\geq \frac{3}{8}$.
This definition implies the first bullet of Lemma~\ref{lem:edge-charging-for-k-partite-B}:
\begin{align*}
    \sum_{(i',j')\in t\cap E} \expect[B_{i',j',t}]\cdot c^*(i',j') 
    \geq 
    \frac{1}{3}\cdot \frac{16}{3} \cdot \frac{3}{8} 
    = 
    \frac{2}{3}
    \geq
    \sum_{(i', j')\in t} \expect[M_{i',j',t}]~.
\end{align*}
Further, when the edge $(j,k)$ is non-pivot, both pivot edges in the triangle are deterministic but have different dominant levels. 
Because of Lemma~\ref{lem:k-prob-disappear}, the probability that $j,k$ are then partitioned into different sets on line~\ref{line:partition-non-pivots} is 1, which is at least $\frac{3}{16}\cdot B_{j,k,t}$. 
Hence, this definition satisfies the second bullet of Lemma~\ref{lem:edge-charging-for-k-partite-B} for this case.

\subsubsection{$(d,r,r)$-triangles} 
W.l.o.g., we assume $(j,k)$ is the deterministic edge.
Because of Corollary~\ref{cor:low-err-are-det}, $(i,j),(i,k)\notin E_{\lowerr}$ and thus $M_{i,j,t}=M_{i,k,t}=0$.
If $(j,k)\notin E_{\lowerr}$, no low-cost edges are modified in this triangle, and thus $M_{j,k,t}=0$. 
Accordingly, by always defining $B_{i,j,t}=B_{i,k,t}=B_{j,k,t}=0$, we show Lemma~\ref{lem:edge-charging-for-k-partite-B} for the case $(j,k)\notin E_{\lowerr}$.

Next, we consider the case $(j,k)\in E_{\lowerr}$.
It is clear to upper bound $M_{i,j,t}$ by $1$ and thus
\begin{align*}
    \sum_{(i', j')\in t} \expect[M_{i',j',t}] \leq \frac13.
\end{align*}
Since $E_{\lowerr}\subseteq E$, $(j,k)\in E$. 
Because $E$ is $k$-partite, at least one of $(i,j)$ and $(i,k)$ is in $E$.
W.l.o.g., we assume $(i,j)\in E$.
We shall prove Lemma~\ref{lem:edge-charging-for-k-partite-B} by discussing whether $(i,k)\in E$.

\paragraph{Case 1: $(i,k)\notin E$.}
Because $(i,j)$ is random in $E$ and because of Corollary~\ref{cor:low-err-are-det}, $(i,j)\in E_{\highrand}$ and thus $c^*(i,j)\geq \frac{3}{8}$.
Accordingly, when the corresponding edge is non-pivot, we define $B_{i,j,t},B_{i,k,t},B_{j,k,t}$ as follows:
\begin{align*}
    B_{i,j,t}\defeq 8/3, \qquad B_{i,k,t},\, B_{j,k,t} \defeq 0~,
\end{align*}
which implies the first bullet of Lemma~\ref{lem:edge-charging-for-k-partite-B} for this class of triangles:
\begin{align*} 
    \sum_{(i',j')\in t\cap E} \expect[B_{i',j',t}]\cdot c^*(i',j') \geq \frac{1}{3}\cdot \frac{8}{3} \cdot \frac{3}{8} = \frac{1}{3} \geq \sum_{(i', j')\in t} \expect[M_{i',j',t}]~.
\end{align*}
Further, when the edge $(i,j)$ is non-pivot, one of the pivot edges in the triangle is random in $E_{\varnothing}$. 
Because of Lemma~\ref{lem:k-prob-disappear}, the probability that $i,j$ are then partitioned into different sets on line~\ref{line:partition-non-pivots} is at least $\frac{1}{4}$, which equals $\frac{3}{32}\cdot B_{i,j,t}$. 
Because $(i,j)\in E_{\highrand}$, this definition satisfies the second bullet of Lemma~\ref{lem:edge-charging-for-k-partite-B} for this case.

\paragraph{Case 2: $(i,k)\in E$.}
Because $(i,j)$ and $(i,k)$ are random in $E$ and because of Corollary~\ref{cor:low-err-are-det}, $(i,j),(i,k) \in E_{\highrand}$ and thus $c^*(i,j), c^*(i,k)\geq \frac{3}{8}$.
Accordingly, when the corresponding edge is non-pivot, we define $B_{i,j,t},B_{i,k,t},B_{j,k,t}$ as follows:
\begin{align*}
    B_{i,j,t},\,B_{i,k,t}\defeq 4/3, \qquad B_{j,k,t} \defeq 0~,
\end{align*}
which implies the first bullet of Lemma~\ref{lem:edge-charging-for-k-partite-B} for this class of triangles:
\begin{align*} 
    \sum_{(i',j')\in t\cap E} \expect[B_{i',j',t}]\cdot c^*(i',j') \geq \frac{1}{3}\cdot \frac{4}{3} \cdot \frac{3}{8} + \frac{1}{3}\cdot \frac{4}{3} \cdot \frac{3}{8} = \frac{1}{3} \geq \sum_{(i', j')\in t} \expect[M_{i',j',t}]~.
\end{align*}
Further, when the edge $(i,j)$ (or $(i,k)$) is non-pivot, one of the pivot edges in the triangle is random in $E$. 
Because of Lemma~\ref{lem:k-prob-disappear}, the probability that $i,j$ (or $i,k$) are then partitioned into different sets on line~\ref{line:partition-non-pivots} is at least $\frac{1}{6}$, which equals $\frac{1}{8}\cdot B_{i,j,t}$ (resp., $\frac{1}{8}\cdot B_{i,k,t}$). 
Because $(i,j), (i,k)\in E_{\highrand}$, this definition satisfies the second bullet of Lemma~\ref{lem:edge-charging-for-k-partite-B} for this case.


\end{document}